\let\normalunderbrace=\underbrace
\let\underbrace=\normalunderbrace
\providecommand\strong[1]{{\bfseries#1}}
\newcommand*{\fullref}[1]{\cref{#1} (\nameref*{#1})}
\newcommand*{\propref}[1]{\hyperref[#1]{Proposition \ref*{#1}}}
\newcommand*{\Propref}[1]{\hyperref[#1]{Proposition \ref*{#1}}}
\newcommand*{\lemref}[1]{\hyperref[#1]{Lemma \ref*{#1}}}
\newcommand*{\Lemref}[1]{\hyperref[#1]{Lemma \ref*{#1}}}
	\theoremstyle{acmdefinition}
	\newtheorem*{remark}{Remark}
\DeclareMathOperator{\State}{State}
\DeclareMathOperator{\out}{out}
\DeclareMathOperator{\st}{state}
\DeclareMathOperator{\List}{List}
\DeclareMathOperator{\run}{run}
\DeclareMathOperator{\pure}{pure}
\DeclareMathOperator{\eval}{eval}
\DeclareMathOperator{\loopfn}{loop}
\DeclareMathOperator{\idfn}{id}
\DeclareMathOperator{\fmap}{fmap}
\DeclareMathOperator{\prefixSum}{PS}
\DeclareMathOperator{\sender}{sender}
\DeclareMathOperator{\network}{net}
\DeclareMathOperator{\receiver}{receiver}
\DeclareMathOperator*{\odots}{\odot}
\DeclarePairedDelimiter{\denote}{\llbracket}{\rrbracket}
\let\oldState\State
\renewcommand{\State}[2]{\oldState[#1, #2]}
\newcommand{\Set}[1]{\mathcal{P}(#1)}
\newcommand{\Img}[1]{\mathrm{Im}(#1)}
\newcommand{\defeq}{\triangleq}
\newcommand{\pipe}{\mathbin{;}}
\newcommand{\compose}{\circ}
\newcommand{\dprod}{\times}
\newcommand{\tensor}{\otimes}
\newcommand{\kleisli}{\mathrel{>\hspace{-0.23em}=\hspace{-0.35em}>\!}}
\newcommand{\doubleplus}{\mathbin{+\hspace{-0.4em}+}}
\newcommand{\processor}[1]{\sigma_{#1}}
\newcommand{\processortwo}[1]{\tau_{#1}}
\newcommand{\processorthree}[1]{\upsilon_{#1}}
\newcommand{\psplit}{\mathrm{split}}
\newcommand{\pmerge}{\mathrm{merge}}
\newcommand{\pairs}{\mathrm{pairs}}
\newcommand{\filter}{\mathrm{filter}}
\begin{document}

\title{Stream programs are monoid homomorphisms with state} 

\author{Tyler Hou}
\email{tylerhou@berkeley.edu}
\orcid{0009-0002-2234-2549}
\affiliation{%
  \institution{UC Berkeley}
  \city{Berkeley}
  \state{CA}
  \country{USA}
}

\author{Michael Arntzenius}
\email{daekharel@gmail.com}
\orcid{0009-0002-0417-5636}
\affiliation{%
  \institution{UC Berkeley}
  \city{Berkeley}
  \state{CA}
  \country{USA}
}

\author{Max Willsey}
\email{mwillsey@berkeley.edu}
\orcid{0000-0001-8066-4218}
\affiliation{%
  \institution{UC Berkeley}
  \city{Berkeley}
  \state{CA}
  \country{USA}
}

\begin{abstract}
  \noindent
  We define a broad class of deterministic stream functions and show they can be
  implemented as homomorphisms into a ``state'' monoid. The homomorphism laws are
  simpler than the conditions of previous semantic frameworks for stream program
  optimization, yet retain support for rich equational reasoning over expressive
  dataflow programs, including sequential composition, parallel composition, and
  feedback. We demonstrate this using examples of partitioned database joins,
  stratified negation, and a simplified model of TCP.
\end{abstract}


\maketitle

\section{Introduction}

Stream programs receive their inputs and produce their outputs incrementally
over time. This is important for programs which must react to external change,
as well as in parallel and distributed programming where it is important to do
work as soon as possible rather than delaying until all input is available.
However, reasoning about and optimizing stream programs is an active area of
research.
It is common to model data streams as
monoids~\citep{mamouras,DBLP:journals/pacmpl/LaddadCHM25} (or richer algebraic
structures, e.g.\ groups~\citep{DBLP:journals/pvldb/BudiuCMRT23}), and stream
programs as operators that transform these monoids. It is well-known that
stateless stream programs can be modeled as monoid homomorphisms; this is the
essential insight of MapReduce~\citep{DBLP:conf/osdi/DeanG04}. However, many
stream programs (e.g., joins, aggregates) are stateful, and thus cannot be
modeled as monoid homomorphisms in a straightforward way.

In this paper, we propose a formalism that reasons about both stateless and
stateful stream programs using monoid homomorphisms. We show that stateful
stream programs can be viewed as a homomorphism into a ``State'' monoid. The
State monoid contains elements which are functions of type \( S \to S \times N
\), where \( S \) is a state type and \( N \) is the output monoid. We show
monoid homomorphisms \emph{into} this State monoid represent possibly-stateful
stream programs.

This key insight allows the decomposition of stateful stream programs into
their pure and stateful parts. We prove several theorems that allow one to
compose, decompose, parallelize, and rearrange parts of stream programs
(\cref{sec:semantics}). We demonstrate the utility of our formalism with
examples of
partitioning database joins, stratified negation, and a simplified model of
TCP. We also show a connection to compilers techniques such as
defunctionalization and partial evaluation, and show how to use this connection
to optimize stream programs (\cref{sec:partial-eval}).

\section{Motivating example: optimizing joins}
\label{sec:example}

A key problem in distributed database systems is optimizing joins. Let \( E \)
be a relation with arity two representing directed edges in a graph. Suppose we
want to compute the paths of length two: that is given by the join \( Q(x, y,
z) = E(x, y) \wedge E(y, z) \) (the repeated variable \( y \) denotes that the
first edge's target must be the same as the second edge's source).

\subsection{From pairs-filter to join}

The naïve way to compute such a join is to enumerate \( E \tensor E
\)\footnote{ For now, one can think of \( E \tensor E \) as the Cartesian
  product of \( E \) with itself. In \cref{sec:parallel}, we define the tensor
  product \( \tensor \). } and filter out any pairs whose \( y \) entries do not
match. Graphically, that would be the stream program%

\begin{figure}[!ht]
  \centering
  \begin{tikzpicture}
    \node[draw] (pairs) {pairs};
    \node[above left=0cm and 1cm of pairs] (A) {\( E \)};
    \node[below left=0cm and 1cm of pairs] (B) {\( E \)};
    \node[draw, right=of pairs] (filter) {pure filter};
    \node[right=of filter] (out) {};

    \draw (A) edge[->] (pairs);
    \draw (B) edge[->] (pairs);
    \draw (pairs) edge[->, "\( E \tensor E \)"] (filter);
    \draw (filter) edge[->, "\( E \tensor E \)"] (out);
  \end{tikzpicture}
\end{figure}

\noindent
where \( \filter \) is ``pure'' because it operates \textit{statelessly} over
its input.

The above program is not very efficient. The first operator must output
\textit{all pairs} of the join, only for the second operator to filter many of
them out. As humans, it clear to us that this composition is equivalent to a
single operator which enumerates pairs and only emits ones which pass the
filter. How might we enable a compiler to discover this optimization
automatically?

As we previously observed, \( \pure \filter \) is a stateless operator; i.e.,
it only needs to examine a single pair at a time to determine whether it should
be discarded or not. If \( \pairs \) was also a stateless operator, then we
could fuse the two operators together. The fused operator would examine one
input element at a time and only emit pairs that pass the filter.

Unfortunately, \( \pairs \) is not stateless---it needs to remember all
elements of both input streams to enumerate all pairs. However, we show in
\cref{cor:decompose} that we can decompose it into a stateless operator (\(
\pure f_{\pairs} \)) followed by an evaluation (\( \eval \pairs \)). The
outputs of \( \pure f_{\pairs} \) are ``hypotheticals'' over state. Given an
input \( (\delta E_1, \delta E_2) \), the processor \( \pure f_{\pairs} \)
emits the \textit{function}
\[
  (E_1, E_2) \mapsto ((E_1 + \delta E_1, E_2 + \delta E_2), \hspace{0.3em} E_1
  \tensor \delta E_1 + \delta E_1 \tensor E_2 + \delta E_1 \tensor \delta E_2)
\]
We interpret this function as: ``If \( (E_1, E_2) \) is \( \pairs \)'s current
state, update the state to \( (E_1 + \delta E_1, E_2 + \delta E_2) \) and emit
output \( E_1 \tensor \delta E_1 + \delta E_1 \tensor E_2 + \delta E_1 \tensor
\delta E_2 \).''

We show that these functions form a monoidal structure we call \( \State{S}{E
  \tensor E} \) (\cref{def:state}), and indeed \( f_{\pairs} : E \dprod E \to
\State{S}{E \tensor E} \) is a homomorphism! Then we obtain the program:

\begin{figure}[!ht]
  \centering
  \begin{tikzpicture}
    \node[draw] (pairs) {\( \pure f_\pairs \)};
    \node[above left=0cm and 1cm of pairs] (A) {\( E \)};
    \node[below left=0cm and 1cm of pairs] (B) {\( E \)};
    \node[draw, right=3cm of pairs] (eval) {\( \eval \pairs \)};
    \node[draw, right=of eval] (filter) {pure filter};
    \node[right=of filter] (out) {};

    \draw (A) edge[->] (pairs);
    \draw (B) edge[->] (pairs);
    \draw (pairs) edge[->, "\( \State{S}{E \tensor E} \)"] (eval);
    \draw (eval) edge[->, "\( E \tensor E \)"] (filter);
    \draw (filter) edge[->, "\( E \tensor E \)"] (out);
  \end{tikzpicture}
\end{figure}

\noindent
Because \( \pure \filter \) is a stateless operator and \( \State{S}{E \tensor
  E} \) is a functor in its second argument, we can \( \mathrm{fmap}\, \filter \)
into \( \State{S}{E \tensor E} \) to produce a function \( \filter_* :
\State{S}{E \tensor E} \to \State{S}{E \tensor E} \) which filters the output
``inside'' each function that \( \pure f_{\pairs} \) produces. Then \( \pure
\filter_* \) is also a stateless operator, and we can \textit{exchange} it
before \( \eval \pairs \) (\cref{cor:exchangeeval}), producing the program

\begin{figure}[!ht]
  \centering
  \begin{tikzpicture}
    \node[draw] (pairs) {\( \pure f_\pairs \)};
    \node[above left=0cm and 1cm of pairs] (A) {\( E \)};
    \node[below left=0cm and 1cm of pairs] (B) {\( E \)};
    \node[draw, right=2.5cm of pairs] (eval) {\( \pure \filter_* \)};
    \node[draw, right=2.5cm of eval] (filter) {\( \eval_* \pairs \)};
    \node[right=of filter] (out) {};

    \draw (A) edge[->] (pairs);
    \draw (B) edge[->] (pairs);
    \draw (pairs) edge[->, "\( \State{S}{E \tensor E} \)"] (eval);
    \draw (eval) edge[->, "\( \State{S}{E \tensor E} \)"] (filter);
    \draw (filter) edge[->, "\( E \tensor E \)"] (out);
  \end{tikzpicture}
\end{figure}

\noindent
Because \( \pure f_{\pairs} \) and \( \pure \filter \) are now both stateless,
we can fuse them together to produce an operator (\( \pure\; (f_\pairs \pipe
\filter_*) \)). This operator outputs functions that, when evaluated, enumerate
all pairs and only emit the ones which pass the filter. Thus we have the
program

\begin{figure}[!ht]
  \centering
  \begin{tikzpicture}
    \node[draw] (pairs) {\( \pure\; (f_\pairs \pipe \filter_*) \)};
    \node[above left=0cm and 1cm of pairs] (A) {\( E \)};
    \node[below left=0cm and 1cm of pairs] (B) {\( E \)};
    \node[draw, right=0.7cm of eval] (filter) {\( \eval_* \pairs \)};
    \node[right=of filter] (out) {};

    \draw (A) edge[->] (pairs);
    \draw (B) edge[->] (pairs);
    \draw (pairs) edge[->, "\( \State{S}{E \tensor E} \)"] (filter);
    \draw (filter) edge[->, "\( E \tensor E \)"] (out);
  \end{tikzpicture}
\end{figure}

\noindent
Finally, we can fuse the composition \( \pure\; (f_\pairs \pipe \filter_*) \)
into a single operator (\propref{prop:fusion}), \( \mathrm{join} \), which
emits filtered pairs.

\begin{figure}[!ht]
  \centering
  \begin{tikzpicture}
    \node[draw] (join) {\( \mathrm{join} \)};
    \node[above left=0cm and 0cm of pairs] (A) {\( E \)};
    \node[below left=0cm and 0cm of pairs] (B) {\( E \)};
    \node[right=of join] (out) {};

    \draw (A) edge[->] (join);
    \draw (B) edge[->] (join);
    \draw (join) edge[->, "\( E \tensor E \)"] (out);
  \end{tikzpicture}
\end{figure}

\subsection{Join partitioning}

A common join optimization in distributed database implementations is to
\textit{partition} joins on the join key. We can partition \( E \) into four
partitions \
\begin{align*}
  E_{-,0} = \{ (s, t) \in E \mid t \equiv 0 \pmod 2 \}
  \hspace{2em} &
  E_{0,-} = \{ (s, t) \in E \mid s \equiv 0 \pmod 2 \}
  \\
  E_{-,1} = \{ (s, t) \in E \mid t \equiv 1 \pmod 2 \}
  \hspace{2em} &
  E_{1,-} = \{ (s, t) \in E \mid s \equiv 1 \pmod 2 \}
\end{align*}

\noindent
Then an equivalent query is \( Q(x, y, z) = (E_{-, 0}(x, y) \wedge E_{0, -}(y,
z)) \vee (E_{-, 1}(x, y) \wedge E_{1, -}(y, z)) \). Observe that we need not
compare tuples in \( E_{-,0} \wedge E_{1,-} \) and \( E_{-,0} \wedge E_{1,-}
\), as they must produce empty results since \( y \)'s parity does not match.

To apply this optimization, we start with the stream program above containing
just the \( \mathrm{join} \) operator.

\begin{figure}[!ht]
  \centering
  \begin{tikzpicture}
    \node[draw] (join) {\( \mathrm{join} \)};
    \node[above left=0cm and 0cm of pairs] (A) {\( E \)};
    \node[below left=0cm and 0cm of pairs] (B) {\( E \)};
    \node[right=of join] (out) {};

    \draw (A) edge[->] (join);
    \draw (B) edge[->] (join);
    \draw (join) edge[->, "\( E \tensor E \)"] (out);
  \end{tikzpicture}
\end{figure}

We introduce operators \( \pure \psplit \) and \( \pure \pmerge \) on the left.
The operator \( \pure \psplit \) splits the input tuples according to the
partitioning scheme above. We define these operators such that their
composition is the identity (\cref{def:splitter}), so this does not change the
semantics of the program.

\begin{figure}[!ht]
  \begin{tikzpicture}
    \node[draw] (split) {split};
    \node[above left=0cm and 1cm of split] (A) {\( E \)};
    \node[below left=0cm and 1cm of split] (B) {\( E \)};
    \node[draw, right=2cm of split] (merge) {pure merge};
    \node[draw, right=of merge] (join) {join};
    \node[right=of join] (out) {};

    \draw (A) edge[->] (split);
    \draw (B) edge[->] (split);
    \draw (split) edge[->, bend left=30, "" above, near start] (merge.170);
    \draw (split) edge[->, bend left=20, "" above, near start] (merge.175);
    \draw (split) edge[->, bend right=20, "" below, near start] (merge.185);
    \draw (split) edge[->, bend right=30, "" below, near start] (merge.190);
    \draw (merge.4) edge[->, ""] (join.170);
    \draw (merge.-4) edge[->, ""] (join.190);
    \draw (join) edge[->, "\( E \tensor E \)"] (out);
  \end{tikzpicture}
\end{figure}

Because of the way we defined \( \pure \psplit \) to partition the input, we
can fully evaluate \( \mathrm{join} \) on both partitions in parallel
(\cref{def:independent}). In such cases, we can commute \( \mathrm{join} \)
left of \( \pure \pmerge \) (\lemref{lem:independentmerge}), producing the
final optimized join (\cref{thm:join-partition}):

\begin{figure}[!ht]
  \begin{tikzpicture}
    \node[draw] (split) {split};
    \node[above left=0cm and 1cm of split] (A) {\( E \)};
    \node[below left=0cm and 1cm of split] (B) {\( E \)};
    \node[draw, above right=0cm and 1cm of split] (join1) {join};
    \node[draw, below right=0cm and 1cm of split] (join2) {join};
    \node[draw, below right=0cm and 1cm of join1] (merge) {pure merge};
    \node[right=of merge] (out) {};

    \draw (A) edge[->] (split);
    \draw (B) edge[->] (split);
    \draw (split) edge[->, bend left=10, ""] (join1);
    \draw (split) edge[->, bend right=10] (join1);
    \draw (split) edge[->, bend left=10] (join2);
    \draw (split) edge[->, bend right=10, "" below, near start] (join2);
    \draw (join1) edge[->, "\( E \tensor E \)" above, near end] (merge);
    \draw (join2) edge[->, "\( E \tensor E \)" below, near end] (merge);
    \draw (merge) edge[->, "\( E \tensor E \)"] (out);
  \end{tikzpicture}
\end{figure}

\section{Semantics}
\label{sec:semantics}

This section contains the key technical contributions of this paper. We begin
with brief preliminaries on monoids and homomorphisms (\cref{sec:prelim}). Then
we introduce the notions of \textit{stream functions}
(\cref{sec:stream-functions}), and \textit{stream processors}
(\cref{sec:stream-processor}), which serve respectively as specifications and
implementations of stream programs. We demonstrate that stream processors can
be composed in sequence (\cref{sec:sequential}), in parallel
(\cref{sec:parallel}), and with feedback loops (\cref{sec:loops}). Along the
way, we illustrate the utility of our formalism with examples of streaming
derivatives/integrals (\cref{sec:inverse}), partitioning of database joins
(\cref{sec:join}), ticked streams for stratified computation
(\cref{sec:ticks}), and a simplified model of TCP (\cref{sec:tcp}).

\setcounter{subsection}{-1}
\subsection{Preliminaries}
\label{sec:prelim}

\newcommand{\mul}{\cdot}
\newcommand{\id}{\varepsilon}
\newcommand{\Int}{\mathbb{Z}}
\newcommand{\Nat}{\mathbb{N}}

\begin{definition}
  A monoid \( (M, {\mul}, \id) \) is a set \( M \) equipped with a binary
  operator \( {\mul}: M \times M \to M \) called \textit{product} and element \(
  \id \in M \) called \textit{identity} satisfying, for all \( a,b,c \in M \): \
  \begin{align*}
    a \mul \id = a \hspace{-0.85em} & \hspace{0.85em} = \id \mul a \tag{identity} \\
    (a \mul b) \mul c               & = a \mul (b \mul c) \tag{associativity}
  \end{align*}

  \noindent
  We often write \( a \mul b \) as simply \( ab \). For example, associativity
  may be notated \( (ab)c = a(bc) \).
\end{definition}

\begin{definition}
  Let \( M \) be a monoid. An element \( a \in M \) is \textit{invertible} if
  there exists an element \( b \) such that \( ab = ba = \id \). We call \( b \)
  the inverse of \( a \), and we use notation \( a^{-1} \) to denote \( a \)'s
  inverse.
\end{definition}

\begin{definition}
  A group \( G \) is a monoid where every element \( g \in G \) is invertible.
\end{definition}

\begin{definition}
  Let \( M \) be a monoid. \( M \) is \textit{left-cancellative} if whenever \(
  xa = xb \) we have \( a = b \).
\end{definition}

Groups are left-cancellative, as we can simply left-multiply both sides of the
equation by \( x^{-1} \).

\begin{definition}
  Let \( M \) be a monoid and \( N \subseteq M \). We say that \( M \) is
  \textit{generated by \( N \)} if every element \( m \in M \) can be expressed
  as the finite product of elements in \( N \).
\end{definition}

\begin{definition}
  Let \( M \) be a monoid and let \( X \) and \( Y \) be formal products of
  elements of \( M \). The equation \( X = Y \) is a \textit{relation} for \( M
  \) if the equation holds in \( M \).
\end{definition}

\begin{definition}
  A monoid \( M \) is \textit{commutative} if for all \( a, b \in M \), \( M \)
  is subject to the relation \( ab = ba \).
\end{definition}

\begin{remark}
  For commutative monoids, we use additive notation (by convention). So the above
  relation can be written as \( a + b = b + a \).
\end{remark}

\begin{definition}
  A monoid \( M \) is \textit{idempotent} if for all \( a \in M \), \( M \) is
  subject to the relation \( aa = a \).
\end{definition}

\newcommand{\concat}{\doubleplus}

\begin{definition}
  Let \( S \) be a set. The \textit{free monoid over \( S \)} is the monoid \( M
  \) whose elements are finite formal products of elements of \( S \). The
  product of \( a, b \in M \) is the concatenation \( ab \), and the identity is
  the empty product. To denote the free monoid over \( S \), we use the notation
  \( (\List[S], \concat, []) \).
\end{definition}

\begin{example}
  Let \( S \) be a set. Then the powerset of \( S \), denoted \( \Set{S} \), is a
  commutative, idempotent monoid where addition is set union and the identity is
  the empty set \( \emptyset \in \Set{S} \).
\end{example}

\begin{definition}
  Let \( (M, \mul_M, \id_M) \) and \( (N, \mul_N, \id_N) \) be monoids. A
  function \( f: N \to M \) is a \textit{monoid homomorphism} (or simply
  \textit{homomorphism}) if it preserves identity and monoidal product; that is:

  \begin{itemize}[label={}]
    \item[(a)] \( f(\id_M) = \id_N \).
    \item[(b)] For all \( a, b \in M \), we have \( f(a \mul_M b) = f(a) \mul_N f(b) \).
  \end{itemize}
\end{definition}

\begin{example}
  Let \( f : \List{S} \to \Set{S} \) be the function that sends a formal product
  to its set; i.e.
  \[
    f(a_1a_2 \ldots a_k) = \{ a_1, a_2, \ldots, a_k \}.
  \]
  It is easy to check that \( f \) preserves identity and products and thus is a
  homomorphism.
\end{example}

\begin{propositionapxrep}
  The composition of homomorphisms is a homomorphism.
\end{propositionapxrep}

\begin{proof}
  The proof is standard. Let \( f \) and \( g \) be homomorphisms. We check
  \begin{align*}
    (f \compose g)(\id) = f(g(\id)) = f(\id) = \id
  \end{align*}
  and
  \begin{align*}
    (f \compose g)(ab) = f(g(ab)) = f(g(a) \mul g(b))
    = f(g(a)) \mul f(g(b)) = (f \compose g)(a) \mul (f \compose g)(b)
  \end{align*}
\end{proof}

\subsection{Stream functions}
\label{sec:stream-functions}

\newcommand{\Update}[1]{\Delta#1}

\begin{definition}[Stream function]
  \label[definition]{def:stream-function}
  Let \( M \) and \( N \) be monoids. A function \( F: M \to N \)
  is a \textit{stream function} if there exists a function
  \( \Update{F}: M \times M \to N \) such that for all \( p, a, b \in M \) we have:
  \begin{align}
    F(pa)              & = F(p) \mul \Update F (p, a)
    \tag{1} \label{eqn:derivative}
    \\
    \Update F (p, \id) & = \id
    \tag{2a} \label{eqn:regular-id}
    \\
    \Update F (p, ab)  & = \Update F (p, a) \mul \Update F (pa, b)
    \tag{2b} \label{eqn:regular-mul}
  \end{align}
\end{definition}

\noindent
How should we interpret the above definition? Stream functions are computations
whose outputs can be updated appropriately when new input arrives. That is,
suppose \( P: M \to N \) is a stream function. Let \( p, a \in M \) and suppose
\( p \) arrives first, then \( a \) arrives. The computation \( P(pa) \)
represents the output of the stream function if it processes both \( p \) and
\( a \) all in one batch. Otherwise, suppose \( P \) first processes \( p \)
with output \( P(p) \). When input \( a \) arrives, \( P \) emits \(
\Update{P}(p, a) \) to update its output. Condition~(1) requires that \( P(pa)
= P(p) \mul \Update{P}(p, a) \), which ensures the result is the same.

Condition~(2)\footnotemark{} says that \( \Update{F} \) respects the monoidal
structure for \( M \)---that is, \( \Update{F} \) computes the same resulting
update regardless of the factorization of arriving inputs. We need it for a
step in the (constructive) proof of \cref{thm:decomposition}, which constructs
a stream processor from a stream function.

\footnotetext{
  \Citet[p.~4, defn.~2.5]{DBLP:conf/fossacs/Alvarez-Picallo19} call \crefrange{eqn:regular-id}{eqn:regular-mul} \emph{regularity}.
}

Morally, condition~(1) is the only condition we need for a function to be a
stream function. The next two (constructive) propositions show that for many
monoidal structures, if \( F : M \to N \) is a function with an update function
\( \Update{F} \) satisfying condition~(1), then \( F \) has an update function
satisfying condition~(2); i.e. \( F \) is a stream function. These structures
include the common monoidal structures used in stream programs: lists, sets,
bags, \( \mathbb{Z} \)-sets, and (semi)-lattices. Also, if we define \(
\Update{F} \) by the generators of \( M \) for the second argument, then
condition~(2) is automatically satisfied.

\begin{propositionapxrep}
  \label{prop:left-cancel-stream-function}
  Suppose \( F : M \to N \) is a function equipped with \( \Update{F} \) that
  satisifes condition~(1) of \cref{def:stream-function}, and \( \Img{F} \) is a
  left-cancellative monoid. Then \( F \) is a stream function.
\end{propositionapxrep}

\begin{proof}
  \label{apxpf:stream-function-props}
  We check \( \Update{F} \) satisfies condition~(2). For condition~(2a):
  \
  \begin{align*}
    F(p) \mul \Update{F}(p, \id) & = F(p\id) \\
                                 & = F(p)    \\
    \implies \Update{F}(p, \id)  & = \id
  \end{align*}
  \
  For condition~(2b):
  \begin{align*}
    F(p) \mul \Update{F}(p, ab) & = F(pab)                                            \\
                                & = F(pa) \mul \Update{F}(pa, b)                      \\
                                & = F(p) \mul \Update{F}(p, a) \mul \Update{F}(pa, b) \\
    \implies \Update{F}(p, ab)  & = \Update{F}(p, a) \mul \Update{F}(pa, b).
  \end{align*}
\end{proof}

\begin{propositionapxrep}
  \label{prop:idempontent-stream-function}
  Suppose \( F : M \to N \) is a function equipped with \( \Update{F} \) that
  satisifes condition~(1) of \cref{def:stream-function}, and \( \Img{F} \) is
  an idempotent monoid. Then \( F \) is a stream function.
\end{propositionapxrep}

\begin{proof}
  Define \( \Update{F}' : M \times M \to N \) by \
  \[
    \Update{F}'(p, a) = \begin{cases}
      \id                        & \text{if \( a = \id \)} \\
      F(p) \mul \Update{F}(p, a) & \text{otherwise}.       \\
    \end{cases}
  \]
  We check \( \Update{F}' \) satisfies conditions (1) and (2). Condition~(1): \
  \begin{align*}
    F(pa) & = F(p) \mul \Update{F}(p, a) \tag{condition~(1) on \( \Update{F} \)} \\
          & = F(p) \mul F(p) \mul \Update{F}(p, a)  \tag{idempotence}            \\
          & = F(p) \mul \Update{F}'(p, a)                                        \\
  \end{align*}
  \
  Clearly, \( \Update{F}' \) satisfies condition~2a by definition. For
  condition~2b, there are four cases; it is easy to check the cases
  where \( a = \id \) or \( b = \id \). Otherwise, we have
  \
  \begin{align*}
    \Update{F}'(p, ab) & = F(p) \mul \Update{F}(p, ab)                                  \\
                       & = F(pab)
    \tag{condition~(1) on \( \Update{F} \)}                                             \\
                       & = F(pa) \mul \Update{F}(pa, b)
    \tag{condition~(1) on \( \Update{F} \)}                                             \\
                       & = F(pa) \mul F(pa) \mul \Update{F}(pa, b)
    \tag{idempotence}                                                                   \\
                       & = F(p) \mul \Update{F}(p, a) \mul F(pa) \mul \Update{F}(pa, b)
    \tag{condition~(1) on \( \Update{F} \)}                                             \\
                       & = \Update{F}'(p, a) \mul \Update{F}'(pa, b)
  \end{align*}
\end{proof}

\noindent
Full proofs of the above two propositions can be found in the
\hyperref[apxpf:stream-function-props]{Appendix}.

\begin{example}
  If \( f: M \to N \) is a monoid homomorphism, then it is a stream function: let
  \( \Update{f}(p, b) = f(b) \). Intuitively, \( \Update{f} \) is stateless, as
  it ignores the previous input \( p \). Observe that \( f(ab) = f(a) \mul f(b) =
  f(a) \mul \Update{f}(a, b) \).
\end{example}

\begin{example}
  \label[example]{ex:prefix-sum}
  Prefix sum is a stream function, but not a monoid homomorphism.
  Let \( (\List[\Int], \concat, []) \) be the free monoid over the integers (i.e.\ lists of integers), and \( \prefixSum : \List[\Int] \to \List[\Int] \) map a list to its prefix sum.
  For example, \( \prefixSum([1, 2, 3]) = [1, 3, 6] \).
  This is not a monoid homomorphism, since
  \[
    \prefixSum([1, 2, 3])
    = [1,3,6]
    \neq [1] \concat [2, 5]
    = \prefixSum([1]) \concat \prefixSum([2,3]).
  \]
  However, it \emph{is} a stream function with \( \Update{\prefixSum}(xs, ys) =
  [\left(\sum xs\right) + y : y \in \prefixSum(ys)] \). For instance, \(
  \Update{\prefixSum}([1], [2, 3]) = [1 + 2, 1 + 5] = [3, 6] \). Any program that
  implements \( \prefixSum \) in streaming fashion must be \emph{stateful,} since
  it must track the cumulative sum. This is why \( \prefixSum([1, 2, 3]) \neq
  \prefixSum([1]) \concat \prefixSum([2, 3]) \): the right hand side ``restarts''
  \( \prefixSum \) on \( [2, 3] \), forgetting accumulated state. In contrast,
  homomorphisms are \emph{stateless} (as we will make precise in
  \propref{prop:stateless}).
\end{example}

A well-known technique in functional programming is to model stateful
computations as functions which take in a current state and return an updated
state and some output. Under this perspective, we can decompose \( F \) into a
homomorphism into \( \State{\Int}{L} \) followed by an evaluation.

\begin{definition}[State monoid]
  \label[definition]{def:state}
  Given a set \( S \) and monoid \( M \), define the monoid \( \State{S}{M} \) as%
  \footnote{
    The product \( \odot \) is equivalent both to Kleisli composition
    in the Writer monad, and to the lifting of \( {\mul}_M \) into the
    State monad. In Haskell:
    \begin{equation*}
      \alpha \odot \beta
      = \alpha \kleisli_{\mathrm{Writer}\,M} \beta
      = \mathrm{liftM2}_{\mathrm{State}\,S} \;(\cdot_A) \;\alpha \;\beta
    \end{equation*}
    We call this monoid ``State'' to emphasize the stateful nature of stream
    processors, and because we use a construction equivalent to Haskell's
    \( \fmap_{\mathrm{State}} \) in \cref{sec:sequential}.
  }
  \begin{align*}
    \State{S}{M}       & = (S \to S \times M,\, \odot,\, s \mapsto (s, \id_M)) \\
    \alpha \odot \beta & = s \mapsto
    \begin{aligned}[t]
       & \text{let } (s_{\alpha},\, a) = \alpha(s)         \text{ in} \\
       & \text{let } (s_{\beta},\, b)  = \beta(s_{\alpha}) \text{ in} \\
       & (s_\beta,\, a \mul_M b)
    \end{aligned}
  \end{align*}
\end{definition}

\noindent
One can think of an element \( \alpha: S \to S \times M \) of the state monoid
as a ``hypothetical'' over state: given some state \( s \), the element \(
\alpha \) will update the state to \( s_\alpha \) and emit some output \( a \).
The product \( \alpha \odot \beta \) is a hypothetical that runs \( \alpha \)
and \( \beta \) sequentially, and emits the combined output \( a \mul b \).

The result of evaluations of \( \State{S}{M} \) are tuples \( S \times M \).
For clarity, we project out of these tuples using functions \( \st \) and \(
\out \).

\begin{definition}
  Let \( \State{S}{M} \) be the State monoid. Let \( \st : S \times M \to S \) be
  the first projection and \( {\out}: S \times M \to M \) the second projection.
\end{definition}

\noindent
We now show that we can decompose any stream function \( M \to N \) into a
\textit{homomorphism} \( M \to \State{S}{N} \) with initial state and initial
output. We will use the decomposition to define \textit{stream processors} in
\cref{sec:stream-processor}, and we will show that stream processors and stream
functions represent the same class of programs.

\begin{theorem}[Decomposition]
  \label{thm:decomposition}
  Let \( F: M \to N \) be a stream function.
  Then \( F \) can be decomposed into a tuple \( (S, f, s_{\id}, o_{\id}) \)
  where \( f: M \to \State{S}{N} \) is a homomorphism,
  \( s_{\id} \in S \) is an initial state,
  and \( o_{\id} \in N \) is an initial output
  such that for all \( m \in M \), we have \( F(m) = o_{\id} \mul \out{(f(m)(s_{\id}))} \).
\end{theorem}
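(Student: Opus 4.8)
The plan is to build the state space directly out of the monoid $M$ itself, taking $S = M$ and letting the "state" simply record the entire prefix of input seen so far. So I would set $s_{\id} = \id_M$, set $o_{\id} = F(\id_M)$, and define the candidate homomorphism $f \colon M \to \State{M}{N}$ by
\[
  f(a) = \bigl(\, p \mapsto (pa,\ \Update F(p, a)) \,\bigr),
\]
i.e.\ $f(a)$ is the hypothetical that, given a prefix $p$, extends the state to $pa$ and emits the update $\Update F(p,a)$. The intuition is that $\Update F$ already tells us how to react to new input relative to an accumulated prefix, so $\State{M}{N}$ is just the ``obvious'' repackaging of that data.

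First I would verify the evaluation equation. Evaluating $f(m)$ at the initial state gives $f(m)(s_{\id}) = (\id_M \cdot m,\ \Update F(\id_M, m)) = (m, \Update F(\id_M, m))$, so $\out(f(m)(s_{\id})) = \Update F(\id_M, m)$, and then $o_{\id} \cdot \out(f(m)(s_{\id})) = F(\id_M) \cdot \Update F(\id_M, m)$, which equals $F(\id_M \cdot m) = F(m)$ by condition~(1) of \cref{def:stream-function}. Next I would check that $f$ is a homomorphism. Preservation of identity: $f(\id_M)(p) = (p\id_M, \Update F(p, \id_M)) = (p, \id_N)$ using condition~(2a), which is exactly the identity element $s \mapsto (s, \id_N)$ of $\State{M}{N}$. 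Preservation of products: unfolding the definition of $\odot$, for any $a, b \in M$ and any $p$,
\[
  \bigl(f(a) \odot f(b)\bigr)(p)
  = \bigl(\,(p a) b,\ \Update F(p, a) \cdot \Update F(p a, b)\,\bigr),
\]
whereas $f(ab)(p) = (p(ab),\ \Update F(p, ab))$. The state components agree by associativity in $M$, and the output components agree by condition~(2b). Hence $f(ab) = f(a) \odot f(b)$.

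The main thing to be careful about — really the only place the full strength of the stream-function axioms is used — is the homomorphism check for products, which is precisely what condition~(2b) was introduced to supply; without regularity of $\Update F$ this construction would fail. Everything else is bookkeeping: unfolding the definition of $\odot$ from \cref{def:state}, the definitions of $\st$, $\out$, and the identity of $\State{M}{N}$, plus associativity and the unit laws of $M$. I would present the construction, then discharge the three verifications (evaluation, identity preservation, product preservation) as short displayed calculations in that order, citing conditions~(1), (2a), and (2b) respectively at the key steps.
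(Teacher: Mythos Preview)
Your proposal is correct and matches the paper's own proof essentially line for line: the paper also takes $S = M$, $s_{\id} = \id_M$, $o_{\id} = F(\id_M)$, and $f(m) = s \mapsto (sm,\ \Update F(s,m))$, then checks the homomorphism laws via conditions~(2a) and~(2b) and the evaluation equation via condition~(1). The only cosmetic difference is the order of the verifications (the paper does the homomorphism check first, then the evaluation equation).
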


\noindent
To understand \cref{thm:decomposition}, it's helpful to decompose prefix sum
(\cref{ex:prefix-sum}). Let \( S = \Int \). We define \( f: \List[\Int] \to
\State{\Int}{\List[\Int]} \) first on the generators of \( \List[\Int] \)
(singleton lists) by \( f([n]) = s \mapsto (s+n, [s+n]) \). That is, the input
\( [n] \) corresponds to a function which takes in a cumulative sum, updates
the cumulative sum to \( s+n \), and emits output \( [s+n] \). We extend \( f\,
\) to all of \( \List[\Int] \) by defining \( f(a \concat b) = f(a) \odot f(b)
\). Observe that:
\begin{align*}
  f([1])       & = s \mapsto (s+1, [s+1])      \\
  f([2, 3])    & = s \mapsto (s+5, [s+2, s+5]) \\
  f([1, 2, 3]) & = f([1]) \odot f([2, 3])
  \\ &= s \mapsto (s+6, [s+1, s+3, s+6])
\end{align*}





\noindent
We set the initial state \( s_{\id} = 0 \) and the initial output \( o_{\id} =
  [] \). Then we can verify the final requirement of \cref{thm:decomposition}: \
\begin{align*}
  o_{\id} \concat \out(f([1, 2, 3])(s_{\id}))
  = [] \concat \out(f([1, 2, 3])(0))
  = \out((6, [1, 3, 6]))
  = [1, 3, 6]
\end{align*}
as desired.

\begin{proof}[Proof of \cref{thm:decomposition}]
  Let \( S = M \) and \( f(m) = s \mapsto (s \mul m,\, \Update{F}(s, m)) \).
  Then \( f \) is a homomorphism:
  \
  \begin{align*}
    f(\id) & = s \mapsto (s \id,\, \Update{F}(s, \id))
    \\
           & = s \mapsto (s, \id_M) = \id_{\State{S}{M}}
    \tag{by \cref{eqn:regular-id} from \cref{def:stream-function}}                            \\[6pt]
    f(ab)  & = s \mapsto (sab,\, \Update{F}(s, ab))                                           \\
           & = s \mapsto (sab,\, \Update{F}(s, a) \mul \Update{F}(sa, b))
    \tag{by \cref{eqn:regular-mul} from \cref{def:stream-function}}                           \\
           & = (s \mapsto (sa,\, \Update{F}(s, a))) \odot (s \mapsto (s b, \Update{F}(s, b))) \\
           & = f(a) \odot f(b)
  \end{align*}
  \noindent
  Let the initial state \( s_{\id} = \id_M \) and the initial output \( o_{\id} =
  F(\id_M) \). We verify \( F(m) = o_{\id} \mul \out(f(m)(s_{\id})) \): \
  \begin{align*}
    o_{\id} \mul \out(f(m)(s_{\id}))
     & = o_{\id} \mul \out((m, \Update{F}(\id_M, m))) \\
     & = F(\id_M) \mul \Update{F}(\id_M, m)           \\
     & = F(m)
  \end{align*}
\end{proof}

\noindent
Decomposition is not unique. Above, we decomposed \cref{ex:prefix-sum} with
intermediate state \( S = \Int \), but the ``generic'' construction in the
proof of \cref{thm:decomposition} would create a decomposition with
intermediate state \( S = \List[\Int] \). The functions \(
\State{\Int{}}{\List[\Int]} \) only need to remember the cumulative sum, while
in the generic construction, the functions \( \State{\List[\Int]}{\List[\Int]}
\) encode the entire input history. The first is desirable as \( \Int \) is a
more compact representation of the necessary intermediate state.

\subsection{Stream processors}
\label[section]{sec:stream-processor}

Let \( M \) and \( N \) be monoids and \( S \) be a set. Motivated by
\cref{thm:decomposition}, we define \textit{stream processors}:

\begin{definition}[Stream processor]
  \label[definition]{def:stream-processor}
  A \textit{stream processor} \( \processor{}: M \leadsto N \) is a tuple
  \( (S, f, s_{\id}, o_{\id}) \), where \( f: M \to \State{S}{N} \) is a homomorphism,
  \( s_{\id} \in S \) is an initial state, and \( o_{\id} \in N \) is an initial output.
\end{definition}

\noindent
We claim that stream functions are an \textit{abstract semantics} for stream
programs and stream processors are \textit{syntactical implementations}. That
is:

\begin{enumerate}
  \item Stream functions provide a simple, abstract semantic model, but it is not
        realistic to model real-world stream programs as stream functions: at any time
        (including on update), stream functions have access to the entire input
        history.

  \item Stream processes are a ``syntax''
        for stream programs because (we will show) we can manipulate them symbolically
        using equational reasoning to prove semantic properties. We develop these
        symbolic manipulations in the rest of this section.

  \item Stream processes are implementations because real-world stream programs are
        usually implemented as the composition of many stream processors that emit
        incremental updates with access to (possibly bounded) internal state.
\end{enumerate}

\noindent
It would be great if our semantics (stream functions) and syntax (stream
processors) coincide; i.e, they model the same class of programs. Fortunately,
they do!

\begin{definition}[Running a stream processor]
  Let \( \processor{}: M \leadsto N = (S, f, s_{\id}, o_{\id}) \) be a stream processor.
  Define the function \( \run{} : (M \leadsto N) \to (M \to N) \),
  which takes a stream processor, an input, and produces an output
  \[
    \run{} (S, f, s_{\id}, o_{\id}) = m \mapsto o_{\id} \mul \out(f(m)(s_{\id})).
  \]
\end{definition}

\begin{theorem}[Expressive Soundness]
  \label[theorem]{thm:soundness}
  Let \( \processor{} \) be a stream processor. Then
  \( F = \run \processor{} \) is a stream function. We say that \( F \) is the
  \textnormal{semantics} of \( \processor{} \), and we define
  \( \denote{\processor{}} \defeq \run \processor{} \).
\end{theorem}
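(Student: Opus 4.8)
The plan is to exhibit an update function $\Update{F}$ for $F = \run \processor{}$ and verify the three conditions of \cref{def:stream-function}. Reading off the state-passing interpretation of $\State{S}{N}$, the natural choice is
\[
  \Update{F}(p, a) \defeq \out\bigl(f(a)(r_p)\bigr),
  \qquad\text{where } r_p \defeq \st\bigl(f(p)(s_{\id})\bigr)
\]
is the internal state the processor is left in after consuming $p$ from the initial state $s_{\id}$. Note the initial output $o_{\id}$ plays no role in $\Update{F}$: it is a one-time prefix emitted before any input, not part of any update.

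The only real content is a bookkeeping observation, obtained by unfolding the homomorphism law $f(uv) = f(u) \odot f(v)$ and the definition of $\odot$: for every state $s \in S$ and all $u, v \in M$,
\[
  f(uv)(s) = \bigl(\st(f(v)(s')),\ \out(f(u)(s)) \mul \out(f(v)(s'))\bigr),
  \qquad s' \defeq \st(f(u)(s)).
\]
In words, running $f(uv)$ from $s$ is the same as running $f(u)$ from $s$ and then running $f(v)$ from the resulting state, with the two outputs concatenated. Specializing $s = s_{\id}$, $u = p$, $v = a$ gives both $\out(f(pa)(s_{\id})) = \out(f(p)(s_{\id})) \mul \out(f(a)(r_p))$ and $r_{pa} = \st(f(a)(r_p))$; I will use these repeatedly.

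With that in hand the three conditions are each a line of substitution. Condition~(1): $F(pa) = o_{\id} \mul \out(f(pa)(s_{\id})) = o_{\id} \mul \out(f(p)(s_{\id})) \mul \out(f(a)(r_p)) = F(p) \mul \Update{F}(p, a)$, by the observation and associativity. Condition~(2a): since $f$ is a homomorphism, $f(\id) = \id_{\State{S}{N}} = (s \mapsto (s, \id_N))$, so $f(\id)(r_p) = (r_p, \id_N)$ and $\Update{F}(p, \id) = \id_N$. Condition~(2b): apply the observation with $s = r_p$, $u = a$, $v = b$ to get $\out(f(ab)(r_p)) = \out(f(a)(r_p)) \mul \out(f(b)(s'))$ with $s' = \st(f(a)(r_p)) = r_{pa}$; the first factor is $\Update{F}(p, a)$ and the second is $\out(f(b)(r_{pa})) = \Update{F}(pa, b)$ by definition, so $\Update{F}(p, ab) = \Update{F}(p, a) \mul \Update{F}(pa, b)$.

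I do not expect a genuine obstacle: the statement is essentially ``a homomorphism into $\State{S}{N}$, run with an initial state and output, is a stream function,'' and the proof is unfolding $\odot$. The one point that needs care --- that the state after consuming $pa$ factors as ``state after $p$, then step on $a$'' --- is precisely what the homomorphism law $f(pa) = f(p) \odot f(a)$ provides; isolating this as the observation above keeps the hidden state threading under control.
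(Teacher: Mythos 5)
Your proof is correct and takes essentially the same route as the paper's: you define \( \Update{F}(p,a) \) as the output of \( f(a) \) run from the state reached after consuming \( p \), and verify conditions (1), (2a), (2b) by unfolding the homomorphism law and the definition of \( \odot \) — the paper packages exactly this bookkeeping (including the state-threading fact \( \st(f(a)(s_p)) = s_{pa} \) that you isolate in your ``observation'') into \cref{lem:notation} using the \( o_{p \leadsto q} \) notation. No gaps.
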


\begin{theorem}[Expressive Completeness]
  \label[theorem]{thm:completeness}
  Let \( F: M \to N \) be a stream function. Then there exists a stream
  processor \( \processor{F}: M \leadsto N \) such that
  \( \denote{\processor{F}} = F \).
\end{theorem}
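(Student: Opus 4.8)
The plan is to observe that \cref{thm:completeness} is essentially an immediate repackaging of \cref{thm:decomposition}, which already does all the constructive work. First I would apply the Decomposition theorem to the given stream function \( F: M \to N \), obtaining a tuple \( (S, f, s_{\id}, o_{\id}) \) in which \( f: M \to \State{S}{N} \) is a homomorphism, \( s_{\id} \in S \), \( o_{\id} \in N \), and \( F(m) = o_{\id} \mul \out(f(m)(s_{\id})) \) for every \( m \in M \).

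Next I would note that this tuple is, verbatim, the data of a stream processor: by \cref{def:stream-processor}, a stream processor \( M \leadsto N \) is precisely a tuple \( (S, f, s_{\id}, o_{\id}) \) with \( f \) a homomorphism into the State monoid and with an initial state and output. So I set \( \processor{F} \defeq (S, f, s_{\id}, o_{\id}) \), which is well-formed exactly because \cref{thm:decomposition} guarantees these well-formedness requirements. Finally I would unfold the definitions: \( \denote{\processor{F}} = \run \processor{F} \) by \cref{thm:soundness} (which defines the notation), and \( \run \processor{F} = m \mapsto o_{\id} \mul \out(f(m)(s_{\id})) \) by the definition of \( \run \). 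The right-hand side is exactly \( F \) by the concluding equation of \cref{thm:decomposition}, so \( \denote{\processor{F}} = F \), as required.

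I do not expect any real obstacle here; the result is almost a corollary, and the only point requiring care is to confirm that the decomposition tuple meets the stream-processor well-formedness conditions, which is precisely what \cref{thm:decomposition} supplies. The mathematical substance lives entirely in \cref{thm:decomposition}, whose proof constructs an explicit witness — \( S = M \), \( f(m) = s \mapsto (s \mul m, \Update{F}(s,m)) \), \( s_{\id} = \id_M \), \( o_{\id} = F(\id_M) \) — and checks the homomorphism laws using the regularity conditions \cref{eqn:regular-id} and \cref{eqn:regular-mul}. Thus the proof of \cref{thm:completeness} can be kept to a couple of lines that merely cite \cref{thm:decomposition}, \cref{def:stream-processor}, and \cref{thm:soundness} and chain the equalities above.
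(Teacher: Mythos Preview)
Your proposal is correct and matches the paper's own proof, which is a single sentence: ``This is really a restatement of \cref{thm:decomposition} on the decomposition of stream functions.'' Your unpacking of that sentence---take the tuple from \cref{thm:decomposition}, observe it is literally a stream processor by \cref{def:stream-processor}, and unfold \( \denote{\cdot} = \run \)---is exactly the intended argument.
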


\begin{proof}
  This is really a restatement of \cref{thm:decomposition} on the decomposition
  of stream functions.
\end{proof}

\noindent
Before proving \fullref{thm:soundness}, we introduce some simplifying notation
for the states and outputs of stream processors.

\begin{definition}
  \label[definition]{def:notation}
  Suppose \( \processor{} = (S, f, s_{\id}, o_{\id}) \) is a stream processor.
  Let \( p, q \in M \).
  When \( \processor{} \) is clear from context, we define:
  \begin{itemize}
    \item  \( s_{p} \defeq \st(f(p)(s_{\id})) \)
          to be \( \processor{} \)'s state after processing input \( p \),
    \item  \( o_{p \leadsto q} \defeq \out(f(q)(s_{p})) \)
          to be the additional output \( \processor{} \) emits
          when processing input \( q \) after having already processed input \( p \), and
    \item  \( o_{q} \defeq o_{\id} \mul o_{\id \leadsto q} \)
          to be  the processor's total output after processing input \( q \).
  \end{itemize}
\end{definition}

\noindent
Note that, when \( p = \id \), the output \( o_{\id \leadsto q} \) is \(
\processor{} \)'s \textit{incremental output} after processing input \( q \)
(that is, \( o_{\id \leadsto q} \) does not include the initial output \(
o_{\id} \)).

\begin{lemmaapxrep}
  \label{lem:notation}
  For all \( a, b, p, q \in M \), we have:\
  \begin{enumerate}
    \item \( o_{p \leadsto \id} = \id_N \)
    \item \( o_{p \leadsto ab} = o_{p \leadsto a} \mul o_{pa \leadsto b} \)
    \item For all \( a, b \in M \), we have \( o_{ab} = o_{a} \mul o_{a \leadsto b} \)
    \item For all \( p \in M \), we have \( \denote{\processor{}}(p) = o_{p} \).
  \end{enumerate}
\end{lemmaapxrep}

\noindent
The proof is straightforward and is in the \hyperref[prf:notation]{Appendix.}
Item (1) validates that \( o_{q} \) is well-defined when \( q = \id \) in
\cref{def:notation}. Items (2) and (3) tells us how incremental outputs
decompose. Item (4) relates our new notation to the semantics of \(
\processor{} \).

\begin{proof}
  \label{prf:notation}
  Item (1) is easy to verify by expanding out the definition. Since \( f \)
  is a homomorphism, we have
  \( o_{p \leadsto \id} = \out(f(\id)(s_p)) = \out((s_p, \id)) = \id \).
  To show (2), we check
  \begin{align*}
    o_{p \leadsto ab} & = \out(f(ab)(s_{p}))                        \tag{definition}                \\
                      & = \out((f(a) \odot f(b))(s_{p}))            \tag{\( f \) is a homomorphism} \\
                      & = \out(f(a)(s_{p})) \mul \out(f(b)(s_{pa})) \tag{*}                         \\
                      & = o_{p \leadsto a} \mul o_{pa \leadsto b}.  \tag{definition}
  \end{align*}

  For step (*), eta-expand \( f(a) \) and \( f(b) \) on the line prior, apply the
  definition of \( \odot \), and simplify.

  Item (3) follows from item (2) by setting \( p = \id{} \). Item (4) is
  immediate from the definitions: \
  \[
    \denote{\processor{}}(p) = o_{\id} \mul \out(f(p)(s_{\id})) = o_{\id} \mul o_{\id \leadsto p} = o_{p}.
  \]
\end{proof}

\begin{proof}[Proof of \cref{thm:soundness} (\nameref*{thm:soundness})]
  Suppose \( \processor{F} = (S, f, s_{\id}, o_{\id}) \) is a stream
  processor and let \( F = \denote{\processor{F}} \).
  Define \( \Update{F}(p, a) = o_{p \leadsto a} \). We verify that \( F \) is a
  stream function:
  \
  \begin{align*}
    F(p) \mul \Update{F}(p, a) & = o_{p} \mul o_{p \leadsto a}                   \tag{\cref{lem:notation}} \\
                               & = o_{pa}                                        \tag{\cref{lem:notation}} \\
                               & = F(pa)
  \end{align*}
  and
  \
  \begin{alignat*}{3}
     & \Update{F}(p, \id) &  & = o_{p \leadsto \id} = \id                 \tag{\cref{lem:notation}} \\
     & \Update{F}(p, ab)  &  & = o_{p \leadsto ab}                                                  \\
     &                    &  & = o_{p \leadsto a} \mul o_{pa \leadsto b}  \tag{\cref{lem:notation}} \\
     &                    &  & = \Update{F}(p, a) \mul \Update{F}(pa, b).
  \end{alignat*}
\end{proof}

\newcommand{\unit}{\ast}

\begin{definition}[Pure]
  \label[definition]{def:pure}
  When \( f: M \to N \) is a homomorphism, define \( \pure f \) to be the
  tuple \( \pure f \defeq (\{ \unit{} \}, f', \unit{}, \id_N) \) where
  \( f'(a) = \unit{} \mapsto (\unit{}, f(a)) \).
\end{definition}

\begin{proposition}
  \label[proposition]{prop:pure}
  \( \pure f : M \leadsto N \) is a stream processor with \( \denote{\pure f} = f \).
\end{proposition}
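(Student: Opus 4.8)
The plan is to verify, directly from the definitions, the two assertions in the statement: that $\pure f = (\{\unit\}, f', \unit, \id_N)$ is a well-formed stream processor, and that $\denote{\pure f} = f$. The only content to check for the first assertion is that $f' : M \to \State{\{\unit\}}{N}$, given by $f'(a) = \unit \mapsto (\unit, f(a))$, is a monoid homomorphism; the initial state $\unit \in \{\unit\}$ and initial output $\id_N \in N$ are unconstrained. The key observation is that the state set is a singleton, so an element of $\State{\{\unit\}}{N}$ is completely determined by its output coordinate, and hence the homomorphism laws for $f'$ collapse to those already assumed for $f$.

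Concretely, I would argue as follows. For identities, $f'(\id_M) = \unit \mapsto (\unit, f(\id_M)) = \unit \mapsto (\unit, \id_N)$, using that $f$ preserves identities, and the right-hand side is precisely $\id_{\State{\{\unit\}}{N}}$ by \cref{def:state}. For products, I would unfold $\odot$: since the state threading through the one-element set $\{\unit\}$ is vacuous, $f'(a) \odot f'(b)$ evaluates on input $\unit$ by taking $(s_\alpha, x) = f'(a)(\unit) = (\unit, f(a))$, then $(s_\beta, y) = f'(b)(\unit) = (\unit, f(b))$, and returning $(\unit, f(a) \mul_N f(b))$. Thus $f'(a) \odot f'(b) = \unit \mapsto (\unit, f(a) \mul_N f(b)) = \unit \mapsto (\unit, f(ab)) = f'(ab)$, where the middle equality is exactly the fact that $f$ preserves products. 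This establishes that $\pure f$ is a stream processor.

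For the semantic equality, I would compute $\denote{\pure f}$ directly from its definition as $\run(\{\unit\}, f', \unit, \id_N)$: for any $m \in M$, $\denote{\pure f}(m) = \id_N \mul_N \out(f'(m)(\unit)) = \id_N \mul_N \out((\unit, f(m))) = \id_N \mul_N f(m) = f(m)$, using the definition of $\out$ and the monoid identity law. Hence $\denote{\pure f} = f$.

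I do not anticipate any real obstacle here: the singleton state degenerates the $\State$ monoid to (an isomorphic copy of) $N$ itself, so every step is a one-line unfolding, and the homomorphism laws for $f'$ are inherited from $f$ through the output projection. The only point requiring a moment of care is being explicit that $\odot$ acts trivially on the state coordinate, so that no hidden condition on $f$ is needed beyond the two homomorphism laws it is assumed to satisfy.
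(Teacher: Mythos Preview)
Your proposal is correct and follows essentially the same approach as the paper: verify that $f'$ is a homomorphism (the paper just says ``it is easy to check''; you spell it out), then compute $\denote{\pure f}(m)$ by unfolding the definition of $\run$. The only cosmetic difference is that the paper routes the semantic computation through the $o_p$ and $o_{\id \leadsto p}$ notation of \cref{def:notation} and \lemref{lem:notation}, whereas you unfold $\run$ directly---but these amount to the same calculation.
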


\begin{proof}
  By definition, the initial output of \( \pure f \) is \( \id_N = o_{\id} \).
  Therefore for all \( p \in M \), we have
  \[
    \denote{\pure f}(p) = o_{p} = o_{\id} \mul o_{\id \leadsto p} = o_{\id \leadsto p}.
  \]
  It is easy to check that \( f': M \to \State{\{\unit\}}{N} \) is a
  homomorphism. We conclude
  \[
    \denote{\pure f}(a) = o_{\id \leadsto a} = f(a).
  \]
\end{proof}

\begin{proposition}[Homomorphisms are stateless processors]
  \label[proposition]{prop:stateless}
  \( f: M \to N \) is a homomorphism if and only if there exists
  a stream processor \( \processor{f}: M \leadsto N \) with trivial state
  \( S = \{\unit{}\} \) and trivial initial output \( o_{\id} = \id_N \) where
  \( \denote{\processor{f}} = f \).
\end{proposition}

\begin{proof}
  The forward direction is implied by \propref{prop:pure}.
  Conversely, suppose \( \processor{f}: M \leadsto N \) is a stream processor \(
  \processor{f} = (S, f', \unit{}, \id_N) \). Let \( f = \denote{\processor{f}}
  \) We check \( f \) is a homomorphism: \
  \begin{alignat*}{3}
    f(\id_M)              & = o_{\id} = \id_N            &                                                                                                                          \\[3px]
    f(ab)                 & = \denote{\processor{f}}(ab)
    = o_{\id \leadsto ab} &                              & = o_{\id \leadsto a} \mul o_{a \leadsto b}
    \tag{\lemref{lem:notation}}                                                                                                                                                     \\
                          &                              &                                            & = o_{\id \leadsto a} \mul o_{\id \leadsto b} \tag{*}                        \\
                          &                              &                                            & = \denote{\processor{f}}(a) \mul \denote{\processor{f}}(b) = f(a) \mul f(b)
  \end{alignat*}
  where step (*) follows since
  \
  \[
    o_{a \leadsto b}
    = \out(f(b)(s_{a}))
    = \out(f(b)(\unit{}))
    = \out(f(b)(s_{\id{}}))
    = o_{\id \leadsto b}.
  \]
  Observe \( \{\unit{}\} \) has exactly one value so \( s_{a} = \unit{} = s_{\id}
  \). (This reasoning is reminiscent of ``The Trick'' from partial evaluation; we
  will say more in \cref{sec:thetrick}.) We conclude \( f \) is a homomorphism.
\end{proof}

\subsection{Sequential composition}
\label{sec:sequential}

\begin{definition}
  \label[definition]{def:sequential}
  Suppose \( \processor{}: M \leadsto N \) and
  \( \processortwo{}: N \leadsto P \) are stream processors with
  \( \processor{} = (S, f, s_{\id}, n_{\id}) \) and
  \( \processortwo{} = (T, g, t_{\id}, p_{\id}) \).
  The composition \( \processor{} \pipe{} \processortwo{} \) is a stream
  processor defined by
  \begin{align*}
    \processor{} \pipe{} \processortwo{}          & : M \leadsto P                                                                        \\
    \processor{} \pipe{} \processortwo{}          & \defeq (S \times T, h, (s_{\id}, t_{n_{\id}}), p_{\id} \mul p_{\id \leadsto n_{\id}}) \\
    \text{where}                                                                                                                          \\
    h(m)                                          & = (s, t) \mapsto
    \begin{aligned}[t]
       & \text{let } (s', n) = f(m)(s) \text{ in} \\
       & \text{let } (t', p) = g(n)(t) \text{ in} \\
       & ((s', t'), p)
    \end{aligned}                                                                                           \\
    (t_{n_{\id{}}}, p_{\id{} \leadsto n_{\id{}}}) & = g(n_{\id{}})(t_{\id{}}) \tag{\Cref{def:notation}}
  \end{align*}

\end{definition}

\noindent
The composition of two stream processors \( \processor{} \) and \(
\processortwo{} \) is a stream processor that runs both processors in sequence,
keeping internal state \( S \times T \). When given input \( m \), it feeds \(
m \) into \( \processor{} \) to obtain output \( n \) and updated state \( s'
\) for \( \processor{} \). It then feeds \( n \) into \( \processortwo{} \) to
obtain output \( p \) and updated state \( t' \) for \( \processortwo{} \). To
obtain the composed processor's initial state and output, we need to send \(
\processor{} \)'s initial output into \( \processortwo{} \); the resulting
initial state is \( t_{n_{\id{}}} \) and initial output is \( p_{\id{} \leadsto
    n_{\id{}}} \).

We connect the syntax \( \processor{} \pipe{} \processortwo{} \) to the
semantics in the following proposition. It states that the semantics of a
sequential composition of processors is the sequential composition of their
semantics.

\begin{propositionapxrep}[\( \denote{-} \) is a functor]
  \label{prop:functor}
  Let \( \processor{} \) and \( \processortwo{} \) be stream processors.
  Then
  \begin{align*}
    \denote{\pure \idfn} & = \idfn{}                                              \tag{identity}    \\
    \denote{\processor{} \pipe \processortwo{}}
                         & = \denote{\processor{}} \pipe \denote{\processortwo{}} \tag{composition}
  \end{align*}
\end{propositionapxrep}

\begin{proofsketch}
  For identity, apply \propref{prop:pure}. For composition, expand out the
  definitions. A full derivation is in the \hyperref[apxpf:functor]{Appendix}.
\end{proofsketch}

\begin{proof}
  \label{apxpf:functor}
  Let \( F = \denote{\processor{}} \) and \( G = \denote{\processortwo{}} \), and
  \( h \) be as defined in \cref{def:sequential}.
  Observe that
  \begin{align*}
    \denote{\processor{} \pipe \processortwo{}} (m)
     & = \run{} (\processor{} \pipe \processortwo{}) (m)                                                                                                                 \\
     & = p_{\id} \mul p_{\id \leadsto n_{\id}} \mul \out(h(m)((s_{\id}, t_{n_{\id}})))                                                                                   \\
     & = p_{\id} \mul p_{\id \leadsto n_{\id}} \mul \out(g(\out(f(m)(s_{\id})))(t_{n_{\id}})) \tag{expanding \( h(m) \)}                                                 \\
     & = p_{\id} \mul p_{\id \leadsto n_{\id}} \mul \out(g(n_{\id \leadsto m})(t_{n_{\id}}))                                                                             \\
     & = p_{\id} \mul p_{\id \leadsto n_{\id}} \mul p_{n_{\id} \leadsto n_{\id \leadsto m}}                                                                              \\
     & = G(\id) \mul \Update{G}(\id, n_{\id}) \mul \Update{G}(n_{\id}, n_{\id \leadsto m})    \tag{\( G(\id) = p_{\id} \) and \( \Update{G}(a, b) = p_{a \leadsto b} \)} \\
     & = G(\id) \mul \Update{G}(\id, F(\id)) \mul \Update{G}(F(\id), \Update{F}(\id, m))      \tag{\( F(\id) = n_{\id} \) and \( \Update{F}(a, b) = n_{a \leadsto b} \)} \\
     & = G(F(\id) \mul \Update{F}(\id, m))                                                    \tag{\( G(ab) = G(a) \mul \Update{G}(a, b) \)}                             \\
     & = G(F(m))                                                                                                                                                         \\
     & = (F \pipe G)(m) = (\denote{\processor{}} \pipe \denote{\processortwo{}})(m)
  \end{align*}
\end{proof}

\begin{definition}
  We say stream processors \( \processor{} \) and \( \processortwo{} \) are
  equivalent, written as \( \processor{} \sim \processortwo{} \), if \(
  \denote{\processor{}} = \denote{\processortwo{}} \).
\end{definition}

\begin{example}
  In the proof of \cref{thm:decomposition}, we decomposed prefix sum
  (\cref{ex:prefix-sum}) in two different ways. The two decompositions induce
  non-equal but equivalent stream processors.
\end{example}

\begin{corollary}
  The composition of stream processors is associative, up to equivalence.
  That is, for all stream processors \( \processor{} \), \( \processortwo{} \),
  and \( \processorthree{} \), we have
  \
  \[
    (\processor{} \pipe \processortwo{}) \pipe \processorthree{}
    \sim \processor{} \pipe (\processortwo{} \pipe \processorthree{}).
  \]
\end{corollary}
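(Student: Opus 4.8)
The plan is to reduce associativity of \(\pipe\) on stream processors to associativity of ordinary function composition on their semantics, using the functoriality of \(\denote{-}\) established in \propref{prop:functor}. Concretely, to show \((\processor{} \pipe \processortwo{}) \pipe \processorthree{} \sim \processor{} \pipe (\processortwo{} \pipe \processorthree{})\) it suffices, by the definition of \(\sim\), to show that the two sides have equal semantics as stream functions \(M \to P\).

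First I would apply the composition law of \propref{prop:functor} to the left-hand side. Since \(\processor{} \pipe \processortwo{}\) is itself a stream processor (\cref{def:sequential}), the law gives
\[
  \denote{(\processor{} \pipe \processortwo{}) \pipe \processorthree{}}
  = \denote{\processor{} \pipe \processortwo{}} \pipe \denote{\processorthree{}}
  = (\denote{\processor{}} \pipe \denote{\processortwo{}}) \pipe \denote{\processorthree{}}.
\]
Symmetrically, applying the law to the right-hand side (using that \(\processortwo{} \pipe \processorthree{}\) is a stream processor) yields
\[
  \denote{\processor{} \pipe (\processortwo{} \pipe \processorthree{})}
  = \denote{\processor{}} \pipe (\denote{\processortwo{}} \pipe \denote{\processorthree{}}).
\]

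It then remains to observe that \(\pipe\) on stream functions is just (diagrammatic-order) composition of the underlying functions — \((F \pipe G)(m) = G(F(m))\) — and ordinary function composition is associative, so
\[
  (\denote{\processor{}} \pipe \denote{\processortwo{}}) \pipe \denote{\processorthree{}}
  = \denote{\processor{}} \pipe (\denote{\processortwo{}} \pipe \denote{\processorthree{}}).
\]
Chaining these three equalities gives \(\denote{(\processor{} \pipe \processortwo{}) \pipe \processorthree{}} = \denote{\processor{} \pipe (\processortwo{} \pipe \processorthree{})}\), which is exactly \((\processor{} \pipe \processortwo{}) \pipe \processorthree{} \sim \processor{} \pipe (\processortwo{} \pipe \processorthree{})\).

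There is no real obstacle here: the only things to be careful about are (i) that the functor law is invoked on genuine stream processors at each step, which is guaranteed because sequential composition of stream processors is again a stream processor, and (ii) that we are only claiming equivalence (\(\sim\)), not equality of the tuples — indeed the state spaces \((S \times T) \times U\) and \(S \times (T \times U)\) differ, so literal equality would fail, and passing to \(\denote{-}\) is precisely what lets us sidestep that.
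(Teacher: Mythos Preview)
Your proposal is correct and follows exactly the paper's own approach: apply \propref{prop:functor} to unfold both sides into compositions of semantics, then invoke associativity of function composition. Your additional remarks about why only equivalence (not equality) holds are accurate and a nice elaboration, but the core argument is the same one-liner the paper gives.
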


\begin{proof}
  By applications of \propref{prop:functor} and associativity of function
  composition.
\end{proof}

\begin{propositionapxrep}[Fusion]
  \label{prop:fusion}
  Suppose \( f: M \to N \) is a homomorphism and
  \( \processor{}: N \leadsto P = (S, g, s_{\id}, o_{\id}) \) is a stream
  processor. Let
  \( \processortwo{} = (S, f \pipe g, s_{\id}, o_{\id}) \). Then
  \( \processortwo{}: M \to P \) is a stream processor and
  satisfies
  \[
    \processortwo{} \sim \pure f \pipe \processor{}.
  \]
\end{propositionapxrep}

\begin{proofsketch}
  Expand out the definitions. A full derivation is in the
  \hyperref[apxpf:fusion]{Appendix}.
\end{proofsketch}

\Propref{prop:fusion} says that if we have a pure processor \( \pure f \)
followed by a stateful operator \( \processor{} \), we can fuse \( f \) with
\( \processor{} \) to create a single operator \( \processortwo{} \) which is
semantically equivalent.

\begin{proof}
  \label{apxpf:fusion}
  Observe \( f \pipe g \) is a homomorphism as it is a composition of
  homomorphisms. So \( \processortwo{} \) is a stream processor. Then we need
  to show that
  \( \denote{\pure{f} \pipe \processor{}} = \denote{\processortwo{}} \):
  \
  \begin{align*}
    \denote{\pure{f} \pipe \processor{}}(m)
     & = (\denote{\pure{f}} \pipe \denote{\processor{}})(m) \tag{\propref{prop:functor}} \\
     & = \denote{\processor{}} (\denote{\pure{f}}(m))                                    \\
     & = \denote{\processor{}} (f(m)) \tag{\propref{prop:pure}}                          \\
     & = o_{\id} \mul \out(g(f(m))(s_{\id}))                                             \\
     & = o_{\id} \mul \out((f \pipe g)(m)(s_{\id}))                                      \\
     & = \denote{\processortwo{}}(m)
  \end{align*}
\end{proof}

\begin{corollary}
  \label[corollary]{cor:decoupling}
  For all homomorphisms \( f \) and \( g \), we have
  \[
    \pure{} (f \pipe g) \sim \pure f \pipe \pure g.
  \]
\end{corollary}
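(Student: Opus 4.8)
\textbf{Proof proposal for \cref{cor:decoupling}.}
The plan is to reduce the claim to the two connecting results already established: that \( \denote{-} \) is a functor (\propref{prop:functor}) and that \( \denote{\pure f} = f \) (\propref{prop:pure}). Since \( \sim \) is defined as equality of semantics, it suffices to show \( \denote{\pure{}(f \pipe g)} = \denote{\pure f \pipe \pure g} \).

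First I would observe that \( \pure{}(f \pipe g) \) is well-typed: \( f \pipe g \) is a composition of homomorphisms, hence a homomorphism (by the earlier proposition that homomorphisms compose), so \cref{def:pure} applies and \( \pure{}(f \pipe g) \) is a genuine stream processor. Then by \propref{prop:pure}, \( \denote{\pure{}(f \pipe g)} = f \pipe g \).

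Next I would compute the right-hand side: by the composition clause of \propref{prop:functor}, \( \denote{\pure f \pipe \pure g} = \denote{\pure f} \pipe \denote{\pure g} \), and by \propref{prop:pure} applied twice this equals \( f \pipe g \). Comparing the two displays gives \( \denote{\pure{}(f \pipe g)} = \denote{\pure f \pipe \pure g} \), i.e. \( \pure{}(f \pipe g) \sim \pure f \pipe \pure g \), which is the claim. (One could equally well invoke \propref{prop:fusion} with the homomorphism \( f \) and the processor \( \pure g \), noting that the fused processor \( (\{\unit\}, f \pipe g', \unit, \id) \) is literally \( \pure{}(f \pipe g) \); but routing through the functor property is cleaner.)

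There is no real obstacle here — the statement is a direct corollary and every step is an application of a named earlier result. The only point worth a sentence of care is the well-typedness remark in the second paragraph, since \( \pure{} \) is only defined on homomorphisms and one must confirm \( f \pipe g \) is one before the notation \( \pure{}(f \pipe g) \) even makes sense.
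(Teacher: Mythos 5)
Your proposal is correct. The paper gives no written proof for this corollary; its placement immediately after \propref{prop:fusion} indicates the intended one-line argument is the fusion route you mention parenthetically (instantiate fusion with \( \processor{} = \pure g \) and observe the fused processor is literally \( \pure{}(f \pipe g) \)). Your primary route --- computing \( \denote{\pure{}(f\pipe g)} = f \pipe g = \denote{\pure f} \pipe \denote{\pure g} = \denote{\pure f \pipe \pure g} \) via \propref{prop:pure} and the composition clause of \propref{prop:functor} --- is an equally valid and equally short derivation, and your remark that one must first check \( f \pipe g \) is a homomorphism for \( \pure{}(f\pipe g) \) to be well-formed is a worthwhile point the paper leaves implicit.
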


\noindent
\Cref{cor:decoupling} models \textit{pipeline parallelism} of stateless stream
processors. That is, \( \pure{} (f \pipe g) \) represents a single processor
that computes \( f \pipe g \) statelessly on input elements while
\( \pure f \pipe \pure g \) represents a sequence of two processors where the first
processor computes \( f \) on its input, and then sends its output to the second
processor, which computes \( g \).

\begin{definition}
  \label[definition]{def:eval}
  Suppose \( \processor{}: M \leadsto N \) is a stream processor where
  \( \processor{} = (S, f, s_{\id}, o_{\id}) \). Define the processor
  \( \eval \processor{}: \State{S}{N} \leadsto {N} \) as the tuple
  \( \eval \processor{} \defeq (S, \idfn, s_{\id}, o_{\id}) \).
\end{definition}

\begin{corollary}
  \label[corollary]{cor:decompose}
  For all processors \( \processor{} = (S, f, s_{\id{}}, o_{\id{}}) \), we have
  \[
    \processor{} \sim \pure f \pipe \eval{} \processor{}.
  \]
\end{corollary}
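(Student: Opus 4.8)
The plan is to reduce the claim to functoriality of the semantics (\propref{prop:functor}) together with the identity $\denote{\pure f} = f$ (\propref{prop:pure}), so that essentially the only computation needed is a single unfolding of $\run$.

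First I would verify that $\eval \processor{}$ is a well-formed stream processor $\State{S}{N} \leadsto N$: by \cref{def:eval} its transition map is $\idfn : \State{S}{N} \to \State{S}{N}$, and since the identity function on any monoid is a homomorphism, the tuple $(S, \idfn, s_{\id}, o_{\id})$ meets the requirements of \cref{def:stream-processor}. In particular its \emph{output} monoid is $N$ (not $\State{S}{N}$) and its state set is $S$, so $\pure f \pipe \eval \processor{}$ is a legal composite of type $M \leadsto N$, matching the type of $\processor{}$.

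Next I would compute the two semantics separately. By \propref{prop:pure}, $\denote{\pure f} = f$ as a function $M \to \State{S}{N}$. For $\eval \processor{}$, unfolding the definition of $\run$ gives, for every $\alpha \in \State{S}{N}$,
\[
  \denote{\eval \processor{}}(\alpha)
  = o_{\id} \mul \out(\idfn(\alpha)(s_{\id}))
  = o_{\id} \mul \out(\alpha(s_{\id})).
\]
Then \propref{prop:functor} lets me chain these together:
\[
  \denote{\pure f \pipe \eval \processor{}}(m)
  = \bigl(\denote{\pure f} \pipe \denote{\eval \processor{}}\bigr)(m)
  = \denote{\eval \processor{}}(f(m))
  = o_{\id} \mul \out(f(m)(s_{\id}))
  = \run \processor{}(m)
  = \denote{\processor{}}(m),
\]
which is exactly $\processor{} \sim \pure f \pipe \eval \processor{}$.

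I do not expect a genuine obstacle; the only thing to be careful about is the typing of $\idfn$ inside $\eval \processor{}$ (it is the identity on the input monoid $\State{S}{N}$, while the processor still emits into $N$), and the fact that $\eval \processor{}$ reuses $\processor{}$'s initial state $s_{\id}$ and initial output $o_{\id}$ verbatim — which is precisely what makes the composite collapse back to $\run \processor{}$.
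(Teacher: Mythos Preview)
Your proof is correct. The paper's proof is a one-liner, ``Apply \propref{prop:fusion}'': instantiating Fusion with the homomorphism $f$ and the processor $\eval\processor{} = (S,\idfn,s_{\id},o_{\id})$ yields $(S,\, f \pipe \idfn,\, s_{\id},\, o_{\id}) \sim \pure f \pipe \eval\processor{}$, and the left side is literally $\processor{}$. Your argument instead unfolds $\run$ and appeals to \propref{prop:functor} and \propref{prop:pure} directly; this is essentially the proof of \propref{prop:fusion} specialized to $g = \idfn$ and inlined. The paper's route is shorter because it reuses a packaged lemma; yours is self-contained and makes explicit why the initial state and initial output carried by $\eval\processor{}$ are exactly what is needed for the composite to collapse to $\run\processor{}$.
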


\begin{proof}
  Apply \propref{prop:fusion}.
\end{proof}

\begin{remark}
  Observe the type of \( \pure{} f : M \leadsto \State{S}{N} \) in
  \cref{cor:decompose}.
\end{remark}

\noindent
\Cref{cor:decompose} is a syntactic analogy to
\cref{thm:decomposition}. It says that all stateful stream processors
\( \processor{} = (S, f, s_{\id}, o_{\id}) \) can be decomposed into a parallel
portion \( \pure f \) followed by a reduction \( \eval{} \processor{} \).

\begin{definition}
  Let \( f: A \to B \). Define \( f_*: \State{S}{A} \to \State{S}{B} \) by
  \[
    f_*(\alpha) \defeq s \mapsto \text {let } (s', a) = \alpha(s) \text { in } (s', f(a))
  \]
\end{definition}

\begin{remark}
  In Haskell, \( f_* \) is exactly \( \fmap_{\mathrm{State}} f \).
\end{remark}

\noindent
Recall that we can interpret an element \( \alpha \) of \( \State{S}{A} \) as
hypotheticals which, when given a state, updates the state and emits an output
\( a \). For any function \( f: A \to B \), and any \( \alpha \in \State{S}{A}
\), we can imagine ``pushing'' \( f \) into \( \alpha \) to create a function
\( \State{S}{B} \), which is the same as \( \alpha \) except emits \( f(a) \)
instead. The function which ``pushes in'' \( f \) is exactly \( f_* \).

\begin{proposition}
  \( f: M \to N \) is a homomorphism if and only if \( f_* \) is a homomorphism.
\end{proposition}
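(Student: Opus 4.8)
The plan is to prove the two implications separately: the forward direction (``$f$ a homomorphism $\Rightarrow$ $f_*$ a homomorphism'') by unfolding definitions, and the backward direction by realizing $M$ as a submonoid of $\State{S}{M}$ on which $f_*$ acts as $f$. For the forward direction, assume $f : M \to N$ is a homomorphism. For the identity, $f_*(\id_{\State{S}{M}}) = f_*(s \mapsto (s, \id_M))$, and unfolding the definition of $f_*$ followed by $f(\id_M) = \id_N$ gives $s \mapsto (s, \id_N) = \id_{\State{S}{N}}$. For products, take $\alpha, \beta \in \State{S}{M}$; $\eta$-expanding and unfolding $\odot$ and $f_*$, both $f_*(\alpha \odot \beta)$ and $f_*(\alpha) \odot f_*(\beta)$ reduce to the function that, on input $s$, binds $(s_\alpha, a) = \alpha(s)$ and $(s_\beta, b) = \beta(s_\alpha)$ and returns $(s_\beta, f(a) \mul_N f(b))$ --- the only nontrivial rewrite being $f(a \mul_M b) = f(a) \mul_N f(b)$, which is exactly the hypothesis. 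Note this direction holds for every set $S$.

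For the backward direction, assume $f_*$ is a homomorphism and aim to show $f$ is. The key device is the embedding of ``constant'' hypotheticals: fix $s_0 \in S$ (we need $S \neq \emptyset$ here; one may as well take $S = \{\unit\}$, in which case $\State{S}{M} \cong M$) and define $\iota_M : M \to \State{S}{M}$ by $\iota_M(m) = s \mapsto (s, m)$, and similarly $\iota_N$. I would first check two facts, each a one-line unfolding: (i) $\iota_M$ is an \emph{injective} monoid homomorphism --- injectivity by evaluating at $s_0$ and projecting, and the homomorphism laws because the state is threaded through unchanged, so $\iota_M(m_1) \odot \iota_M(m_2) = (s \mapsto (s,\, m_1 \mul_M m_2)) = \iota_M(m_1 \mul_M m_2)$; and (ii) the naturality identity $f_* \circ \iota_M = \iota_N \circ f$. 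Granting these, I would chase $f$'s homomorphism laws through $\iota_N$: from $\iota_N(f(\id_M)) = f_*(\iota_M(\id_M)) = f_*(\id_{\State{S}{M}}) = \id_{\State{S}{N}} = \iota_N(\id_N)$ and injectivity of $\iota_N$ we get $f(\id_M) = \id_N$, and from
\[
  \iota_N(f(a \mul_M b)) = f_*(\iota_M(a) \odot \iota_M(b)) = f_*(\iota_M(a)) \odot f_*(\iota_M(b)) = \iota_N(f(a)) \odot \iota_N(f(b)) = \iota_N(f(a) \mul_N f(b))
\]
(using in turn that $\iota_M$, $f_*$, and $\iota_N$ are homomorphisms) together with injectivity of $\iota_N$ we get $f(a \mul_M b) = f(a) \mul_N f(b)$.

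The main obstacle is not depth but bookkeeping: essentially all the work is in carefully unfolding $\odot$ and $f_*$. The one genuine subtlety lies in the backward direction: the naive approach of ``evaluate at $s_0$ and take $\out$'' fails because that projection $\State{S}{N} \to N$ is \emph{not} a homomorphism (it ignores how the first factor threads the state), so one must instead go through the section $\iota_N$, which \emph{is} a homomorphism, and cancel it using injectivity --- and injectivity of $\iota_N$ requires $S$ to be nonempty (if $S = \emptyset$ then $\State{S}{M}$ is the one-element monoid and $f_*$ is a homomorphism for any $f$ whatsoever).
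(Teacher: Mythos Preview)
Your proposal is correct and follows essentially the same approach as the paper: the forward direction is the same unfolding, and for the converse the paper also instantiates $\alpha = s \mapsto (s,a)$ and $\beta = s \mapsto (s,b)$ and expands. Your packaging of these constants as an injective homomorphism $\iota_M$ plus the naturality square is a tidier way to say the same thing, and you additionally make explicit the $S \neq \emptyset$ hypothesis that the paper's sketch leaves implicit.
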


\begin{proofsketch}
  Consider \( f_*(\alpha \odot \beta) \). By definition of \( \odot \) and \( f_*
  \), we have \
  \begin{align*}
    f_*(\alpha \odot \beta) & = s \mapsto
    \begin{aligned}[t]
       & \text{let } (s_\alpha, a) = \alpha(s) \text{ in}      \\
       & \text{let } (s_\beta, b) = \beta(s_\alpha) \text{ in} \\
       & (s_\beta, f(ab))
    \end{aligned}
  \end{align*}
  Apply \( f(ab) = f(a)f(b) \) to the above and observe that the result is
  equal to \( f_*(\alpha) \odot f_*(\beta) \).

  For the other direction, we need to show that for all \( a, b \in M \), we have
  \( f(ab) = f(a)(b) \). For any \( a \) and \( b \), choose \( \alpha = s
  \mapsto (s, a) \) and \( \beta = s \mapsto (s, b) \), and expand as above.
\end{proofsketch}

\begin{theoremapxrep}[Exchange]
  \label{thm:exchange}
  Let \( \processor{} : \State{S}{M} \leadsto M = (S, \idfn{}, s_{\id{}}, o_{\id{}}) \) and
  \( g : M \to N \) be a homomorphism. Define
  \( \processortwo{} : \State{S}{N} \leadsto N = (S, \idfn{}, s_{\id{}}, g(o_{\id{}})) \).
  Then we have
  \
  \[
    \processor{} \pipe{} \pure{} g \sim \pure{} g_{*} \pipe{} \processortwo{}.
  \]
\end{theoremapxrep}

\noindent
The proof is mechanical and is in the \hyperref[prf:exchange]{Appendix.}

\begin{proof}
  \label{prf:exchange}
  Let \( F = \denote{\processor{}} \). Observe
  \begin{align*}
    \denote{\processor{} \pipe{} \pure{} g}(\alpha)
     & = (F \pipe g)(\alpha)                                    \\
     & = g(F(\alpha))                                           \\
     & = g(F(\id) \mul \Update{F}(\id{}, \alpha))               \\
     & = g(F(\id)) \mul g(\Update{F}(\id{}, \alpha))            \\
     & = g(o_{\id{}}) \mul g(\out(\idfn(\alpha)(s_{\id{}})))    \\
     & = g(o_{\id{}}) \mul \out(\idfn(g_*(\alpha))(s_{\id{}}))) \\
     & = \denote{\pure{} g_{*} \pipe \processortwo{}}(\alpha)
  \end{align*}
\end{proof}

\begin{corollary}
  \label[corollary]{cor:exchangeeval}
  For all processors
  \( \processor{} : M \leadsto N = (S, f, s_{\id{}}, o_{\id{}}) \)
  with identity initial output \( o_{\id} = \id_N \) and
  homomorphisms \( g : N \to P \), we have
  \[
    \eval{} \processor{} \pipe{} \pure{} g \sim \pure{} g_{*} \pipe{} \eval_{g_*} \processor{}
  \]
  where \( \eval_{g_*} \processor{} : \State{S}{P} \leadsto P \defeq (S, \idfn,
  s_{\id}, \id_{P}) \).
\end{corollary}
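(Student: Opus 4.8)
The plan is to obtain \cref{cor:exchangeeval} as a direct instance of \cref{thm:exchange} (Exchange), with essentially no computation of its own. First I would unfold $\eval \processor{}$ using \cref{def:eval}: since $\processor{} = (S, f, s_{\id}, o_{\id})$ and the hypothesis supplies $o_{\id} = \id_N$, we have $\eval \processor{} = (S, \idfn{}, s_{\id}, \id_N)$. This tuple has exactly the shape demanded of the first argument of \cref{thm:exchange}, once we instantiate that theorem's ambient data so that its ``$M$'' is our $N$, its ``$N$'' is our $P$, and its ``$g$'' is our homomorphism $g : N \to P$; in particular the theorem places no constraint on the initial-output slot, so taking it to be $\id_N$ is permitted.

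Next I would invoke \cref{thm:exchange} on this processor and $g$. It yields
\[
  \eval{} \processor{} \pipe{} \pure{} g \sim \pure{} g_{*} \pipe{} \processortwo{},
\]
where, by the theorem's construction, $\processortwo{} = (S, \idfn{}, s_{\id}, g(o_{\id})) = (S, \idfn{}, s_{\id}, g(\id_N))$. Finally, since $g$ is a homomorphism it preserves the identity, so $g(\id_N) = \id_P$ and hence $\processortwo{} = (S, \idfn{}, s_{\id}, \id_P)$, which is precisely $\eval_{g_*} \processor{}$ by definition. Substituting this equality into the equivalence above gives exactly the claimed statement.

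I do not expect any genuine obstacle: the only things to verify are the type bookkeeping in the instantiation of \cref{thm:exchange} (matching the $\idfn{}$-processor shape and the monoid renaming) and the observation that the initial output collapses to $\id_P$, both of which follow immediately from the hypothesis $o_{\id} = \id_N$ and the homomorphism law $g(\id_N) = \id_P$.
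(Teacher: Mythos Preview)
Your proposal is correct and mirrors the paper's proof exactly: apply \cref{thm:exchange} to $\eval \processor{} = (S,\idfn,s_{\id},\id_N)$ and then use $g(\id_N)=\id_P$ to identify the resulting $\processortwo{}$ with $\eval_{g_*}\processor{}$.
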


\begin{proof}
  Apply \cref{thm:exchange} and observe that \( g(o_\id) = g(\id_N) = \id_P \).
\end{proof}

\noindent
The combination of \cref{cor:decompose}, \cref{cor:exchangeeval}, and
\cref{cor:decoupling} are powerful because they let us study the sequential
composition of stateful processors in terms of the composition of their induced
homomorphisms. We show this by example in the next section.

\subsection{Application: the discrete integral and derivative are inverses}
\label{sec:inverse}

To demonstrate the utility of the tools developed in the previous sections, we
take a short detour from specification and prove that two common stream
functions, the discrete integral and the discrete derivative, are inverses
\citep{DBLP:journals/pvldb/BudiuCMRT23}.

We give two proofs: the first is a proof which appeals to the semantics of the
program by analyzing how each stream function transforms its input. The second
proof shows how syntactic transformations allow us to reason about the
program's behavior by analyzing the composition of the processors'
homomorphisms.

\begin{definition}
  Let \( G \) be a group. Define the \textit{discrete integral} \( I: \List[G]
  \to \List[G] \) and the \textit{discrete derivative} \( D: \List[G] \to
  \List[G] \) by \
  \begin{align*}
    I([a_1,\, \ldots,\, a_n])
     & = [a_1,\, a_1 a_2,\, a_1 a_2 a_3,\, \ldots,\, \prod_{i=1}^{n} a_i]
    \\
    D([a_1,\, a_2,\, \ldots,\, a_n])
     & = [\id^{-1} a_1,\, a_1^{-1} a_2 ,\, \ldots,\, a_{n-1}^{-1} a_n]
  \end{align*}
\end{definition}

\begin{theorem}
  \label[theorem]{thm:int-der-inverses}
  The discrete integral and discrete derivative are inverses:
  \( (I \pipe D) = \idfn_{\List[G]} = (D \pipe I) \).
\end{theorem}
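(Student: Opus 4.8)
The plan is to prove $(I \pipe D) = \idfn_{\List[G]}$ and $(D \pipe I) = \idfn_{\List[G]}$ directly by computation on a generic list $[a_1, \ldots, a_n]$, since both $I$ and $D$ are given by explicit formulas. First I would compute $D(I([a_1, \ldots, a_n]))$: applying $I$ yields the list of partial products $[p_1, \ldots, p_n]$ where $p_k = \prod_{i=1}^k a_i$; then applying $D$ yields $[\id^{-1} p_1, p_1^{-1} p_2, \ldots, p_{n-1}^{-1} p_n]$. The $k$-th entry for $k \ge 2$ is $p_{k-1}^{-1} p_k = (a_1 \cdots a_{k-1})^{-1}(a_1 \cdots a_k) = a_k$ by left-cancellation in the group (or just by expanding the inverse and using associativity and the group axioms), and the first entry is $\id^{-1} p_1 = \id p_1 = a_1$. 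Hence $D(I([a_1, \ldots, a_n])) = [a_1, \ldots, a_n]$.

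Next I would compute $I(D([a_1, \ldots, a_n]))$. Applying $D$ yields $[d_1, \ldots, d_n]$ with $d_1 = \id^{-1} a_1 = a_1$ and $d_k = a_{k-1}^{-1} a_k$ for $k \ge 2$. Applying $I$ yields the partial products $[q_1, \ldots, q_n]$ with $q_k = d_1 d_2 \cdots d_k = a_1 \cdot (a_1^{-1} a_2) \cdot (a_2^{-1} a_3) \cdots (a_{k-1}^{-1} a_k)$. This telescopes: grouping as $a_1 a_1^{-1} = \id$, $a_2 a_2^{-1} = \id$, and so on, all intermediate factors cancel and $q_k = a_k$. So $I(D([a_1, \ldots, a_n])) = [a_1, \ldots, a_n]$.

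Both directions need the empty-list base case, which is immediate since $I([]) = [] = D([])$, so I would note that quickly. The only subtlety — and the closest thing to an "obstacle" — is making the telescoping argument rigorous rather than hand-wavy: I would phrase it as an induction on $k$ showing $q_k = a_k$, using the recurrence $q_k = q_{k-1} d_k$ together with $q_{k-1} = a_{k-1}$ (inductive hypothesis) and $d_k = a_{k-1}^{-1} a_k$, so that $q_k = a_{k-1} a_{k-1}^{-1} a_k = \id a_k = a_k$; the $D \pipe I$ direction is analogous with the recurrence for partial sums. None of this requires the machinery of stream processors — it is purely a group-theoretic calculation on the closed-form definitions — so I would keep the proof short and self-contained, deferring the more interesting processor-theoretic re-proof to the promised "second proof."
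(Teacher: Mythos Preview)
Your proposal is correct and matches the paper's first proof almost exactly: both compute the $k$-th entry of $D(I(L))$ as $p_{k-1}^{-1} p_k = a_k$ using the recurrence $p_k = p_{k-1} a_k$, with the paper simply declaring the $D \pipe I$ direction ``symmetrical'' where you spell out the telescoping. Your remark about deferring the processor-theoretic argument to a second proof also mirrors the paper's structure.
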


\noindent
We give two proofs of the first equality \( (I \pipe D) = \idfn_{\List[G]} \).
A proof of the second equality is symmetrical.

\begin{proof}[First proof of \cref{thm:int-der-inverses}]
  Fix some input \( L = [a_1, a_2, \ldots, a_n] \). For a list \( L \), let \( L_i \)
  denote its \( i \)\textsuperscript{th} entry (so e.g. \( L_1 = a_1 \)). It is
  easy to show by induction that for all \( 1 \leq k \leq n \), we have
  \( I(L)_k = I(L)_{k-1} \mul a_k \). Then, for all \( 1 \leq k \leq n \):
  \
  \[
    (I \pipe D)(L)_k
    = D(I(L))_k
    = I(L)_{k-1}^{-1} \mul I(L)_k
    = \left( I(L)_{k-1}^{-1} \mul I(L)_{k-1} \right) \mul a_k
    = a_k.
  \]
\end{proof}

The second proof will show that the sequential composition of stream processors
that implement \( D \) and \( I \) is the identity processor. Again, one can
think of stream functions as a semantics and stream processors as a syntax. To
prove that the composition of stream functions satisfy some semantic property,
\fullref{thm:soundness} says that we can take syntactic objects that implement
those semantics, manipulate them symbolically in ways that preserve those
semantics, and show that the resulting object is the identity stream processor.

Therefore, we construct (stateful) stream processors \( \processor{} \) and \(
\processortwo{} \) such that \( \denote{\processor{}} = I \) and \(
\denote{\processortwo{}} = D \). The composition \( \processor{} \pipe
\processortwo{} \) is a syntactic object; we manipulate the \( \processor{}
\pipe \processortwo{} \) with the semantics-preserving transformations
developed in the previous section to show that it is equivalent to the identity
processor \( \pure{} \idfn{} \). We conclude then that \( (I \pipe D) =
\denote{\processor{} \pipe \processortwo{}} = \denote{\pure{} \idfn} = \idfn \)
by \propref{prop:functor}.

\begin{proof}[Second proof of \cref{thm:int-der-inverses}]
  Let \( \processor{} = (G, f, \id{}, []) \)
  where \( f \) is defined on the generators by
  \( f([a]) = s \mapsto (s \mul a, [s \mul a]) \).
  Similarly, let \( \processortwo{} = (G, g, \id{}, []) \).
  where \( g([b]) =  t \mapsto (b, [t^{-1} \mul b]) \).
  In \( f \), the state \( s \) is the accumulated prefix sum, and in
  \( g \), the state \( t \) is the value of the previous list element.
  We leave it to the reader to check \( \denote{\processor{}} = I \) and
  \( \denote{\processortwo{}} = D \).
  Consider the composition \( \processor{} \pipe \processortwo{} \):
  \
  \begin{align*}
    \processor{} \pipe \processortwo{}
     & \sim \pure{} f \pipe \eval{} \processor{} \pipe \pure{} g \pipe \eval{} \processortwo{}     \tag{\cref{cor:decompose}}        \\
     & \sim \pure{} f \pipe \pure{} g_* \pipe{} \eval_{g_*} \processor{} \pipe \eval{} \processortwo{} \tag{\cref{cor:exchangeeval}} \\
     & \sim \pure{} (f \pipe g_*) \pipe{} \eval_{g_*} \processor{} \pipe \eval{} \processortwo{}    \tag{\cref{cor:decoupling}}
  \end{align*}
  To understand \( f \pipe g_* \), it is straightforward to check that
  \[
    (f \pipe g_*)([c]) = s \mapsto (s \mul c, (t \mapsto (s \mul c, t^{-1} \mul s \mul c))).
  \]
  Using induction, we can simplify \( \odots_{i=1}^{n} \left( f \pipe g_* \right)
  ([a_i]) \) to show
  \begin{align}
    \label[equation]{eq:gcomposef}
    (f \pipe g_*)([a_1, a_2, \ldots, a_n])
         & = s \mapsto (A(s), (t \mapsto (A(s), [t^{-1} \mul s \mul a_1, a_2, \ldots, a_n]))) \\
    \text{where }
    A(s) & = s \mul \prod_{i=1}^{n} a_i \nonumber
  \end{align}

  \noindent
  (We need to use induction to simplify the product to
  \cref{eq:gcomposef} for arbitrary \( n \). For a fixed \( n \), we do not need
  induction, but that would be equivalent to symbolically evaluating the
  program.)

  That is, for any input \( [a_1, a_2, \ldots, a_n] \), the output of the
  processor \( \pure{} (f \pipe g_*) \) is the function \cref{eq:gcomposef}. This
  function is ``missing'' the initial states \( s \) and \( t \), which \(
  \eval{} \processor{} \pipe \eval{} \processortwo{} \) will provide. When
  evaluated, it will output a list whose elements are necessarily equal to the
  input list for all elements except the first. The value of the first element
  depends on the initial states of \( \processor{} \) and \( \processortwo{} \).

  Observe then \( \eval_{g_*} \processor{} \) will pass initial state \( \id{} \)
  for \( s \), and then \( \eval{} \processortwo{} \) will pass initial state \(
  \id{} \) for \( t \). So at the end of the composed pipeline, the output will
  be
  \[
    [] \concat \out([] \concat \out((f \pipe g_*)([a_1, a_2, \ldots, a_n])(\id{}))(\id{})) = [\id{}^{-1} \mul \id{} \mul a_1, a_2, \ldots, a_n] = [a_1, a_2, \ldots, a_n]
  \]
  Hence
  \[
    (\processor{} \pipe \processortwo{})
    \sim^* (\pure{} (f \pipe g_*) \pipe{} \eval_{g_*} \processor{} \pipe \eval{} \processortwo{})
    \sim (\pure{} \idfn).
  \]
  We conclude \( (I \pipe D) = \denote{\processor{} \pipe \processortwo{}} =
  \denote{\pure{} \idfn} = \idfn \).
\end{proof}

\noindent
In the above proof, we chose stream processors \( \processor{} \) and \(
\processortwo{} \) so that the composition \( (f \pipe g_*) \) ``cancelled
out'' syntactically on all list elements except the first (whose value
necessarily depends on initial states). \fullref{thm:soundness} tells us that
if we want to prove a semantic property through syntactical transformations, it
is correct to choose \textit{any} syntactic objects that implement the
semantics, but proofs will of course be easier if we choose objects
appropriately.

Furthermore, the two proofs of \cref{thm:int-der-inverses} make exactly the
same argument. Specifically, the accumulated state for \( \processor{} \) and
\( \processortwo{} \) are both equal in the simplified product of \(
\odots_{i=1}^{n} \left( f \pipe g_* \right) ([a_i]) \) (\cref{eq:gcomposef}),
which is exactly the inductive hypothesis in the first proof. In this case,
choosing appropriate \( \processor{} \) and \( \processortwo{} \) is equivalent
to choosing an appropriate inductive hypothesis.

\subsection{Parallel composition}
\label{sec:parallel}

\begin{definition}
  \label[definition]{def:direct-product}
  Let \( M \) and \( N \) be monoids. We define the direct product of \( M \) and
  \( N \), written \( M \dprod N \),
  as formal products generated by elements of the form \( (m, \id_N) \) for all
  \( m \in M \) and \( (\id_M, n) \) for all \( n \in N \), subject to the
  relation
  \[
    (m_1, n_1) \mul_{M \dprod N} (m_2, n_2) = (m_1 \mul_M m_2, n_1 \mul_N n_2).
  \]

  That is, the product \( \mul_{M \dprod N} \) is defined componentwise.
\end{definition}

\begin{remark}
  Observe that even when \( M \) and \( N \) are not themselves commutative
  monoids, the elements \( (m, \id) \) and \( (\id, n) \) commute, as
  \[
    (m, \id) \mul (\id, n) = (m \mul \id, \id \mul n) = (\id \mul m, n \mul \id) = (\id, n) \mul (m, \id).
  \]
\end{remark}

Hence, the direct product is a good model for two parallel streams \( M \) and
\( N \).

\begin{definition}
  Let \( f : M \to P \) and \( g : N \to Q \) be functions. We define the direct
  product of \( f \) and \( g \) to be the function \( f \dprod g : M \dprod N
  \to P \dprod Q \) given by: \
  \begin{align*}
    (f \dprod g)((m, n)) & \defeq (f(m), g(n))
  \end{align*}
\end{definition}

The following three propositions verify products \( f \dprod g \) behave as we
expect for homomorphisms and stream functions. Their proofs are straightforward
and are in the \hyperref[prf:product-well-defined]{Appendix.}

\begin{propositionapxrep}
  \( f \dprod g \) is well-defined; that is, it respects relations:
  \[
    (f \dprod g)((m_1 m_2, n_1 n_2)) = (f \dprod g)((m_1, n_1)) \mul (f \dprod g)((m_2, n_2)).
  \]
\end{propositionapxrep}

\begin{proof}
  \label[proof]{prf:product-well-defined}
  \begin{align*}
    (f \dprod g)((m_1 m_2, n_1 n_2))
     & = (f(m_1 m_2), g(n_1 n_2))                               \\
     & = (f(m_1) \mul f(m_2), g(n_1) \mul g(n_2))               \\
     & = (f(m_1), g(n_1)) \mul (f(m_2), g(n_2))                 \\
     & = (f \dprod g)((m_1, n_1)) \mul (f \dprod g)((m_2, n_2))
  \end{align*}
\end{proof}

\begin{propositionapxrep}
  If \( f \) and \( g \) are homomorphisms, then \( f \dprod g \) is a
  homomorphism.
\end{propositionapxrep}

\begin{proof}
  We check \( (f \dprod g)(a_1 a_2) = (f \dprod g)(a_1) \mul (f \dprod g)(a_2)
  \):
  \begin{align*}
    (f \dprod g)((m_1 m_2, n_1 n_2))
     & = (f(m_1 m_2), g(n_1 n_2))                               \\
     & = (f(m_1) f(m_2), g(n_1) g(n_2))                         \\
     & = (f(m_1), g(n_1)) \mul (f(m_2), g(m_2))                 \\
     & = (f \dprod g)((m_1, n_1)) \mul (f \dprod g)((m_2, n_2))
  \end{align*}
\end{proof}

\begin{propositionapxrep}
  If \( F \) and \( G \) are stream functions, then \( F \dprod G \) is a stream
  function.
\end{propositionapxrep}

\begin{proof}
  Since \( F : M \to P \) and \( G : N \to Q \) are stream functions, by
  \cref{def:stream-function}, there exist functions \( \Update{F} : M \dprod M
  \to P \) and \( \Update{G} : N \dprod N \to Q \) satisfying conditions \
  \begin{alignat*}{3}
            & F (ma)     &  & = F(m) \mul \Update{F}(m, a)              \\
    \Update & F (m, \id) &  & = \id                                     \\
    \Update & F (m, ab)  &  & = \Update{F}(m, a) \mul \Update{F}(ma, b) \\
            & G (nc)     &  & = G(n) \mul \Update{G}(n, c)              \\
    \Update & G (n, \id) &  & = \id                                     \\
    \Update & G (n, cd)  &  & = \Update{G}(n, c) \mul \Update{G}(nc, d)
  \end{alignat*}

  Define \( \Update{(F \dprod G)} : (M \dprod N) \times (M \dprod N) \to (P
  \dprod Q) \) by
  \begin{align*}
    \Update{(F \dprod G)}((m, n), (a, c)) = (\Update{F}(m, a), \Update{G}(n, c))
  \end{align*}

  We check conditions (1) and (2). First, condition~(1): \
  \begin{align*}
    (F \dprod G)((m, n) \times (a, c))
     & = (F \dprod G)((ma, nc))                                          \\
     & = (F(ma), G(nc))                                                  \\
     & = (F(m) \mul \Update{F}(m, a), G(n) \mul \Update{G}(n, c))        \\
     & = (F(m), G(n)) \mul (\Update{F}(m, a), \Update{G}(n, c))          \\
     & = (F \dprod G)((m, n)) \mul \Update{(F \dprod G)}((m, a), (n, c))
  \end{align*}

  Condition~(2a):
  \[
    \Update{(F \dprod G)}((m, n), (\id, \id)) = (\Update{F}(m, \id), \Update{G}(n, \id)) = (\id, \id) = \id
  \]

  Condition~(2b):
  \begin{align*}
    \Update{(F \dprod G)}((m, n), (ab, cd))
     & = (\Update{F}(m, ab), \Update{G}(n, cd))                                                \\
     & = (\Update{F}(m, a) \mul \Update{F}(ma, b), \Update{G}(n, c) \mul \Update{G}(nc, d))    \\
     & = (\Update{F}(m, a), \Update{G}(n, c)) \mul (\Update{F}(ma, b), \mul \Update{G}(nc, d)) \\
     & = \Update{(F \dprod G)}((m, n), (a, c)) \mul \Update{(F \dprod G)}((ma, nc), (b, d))
  \end{align*}
\end{proof}

\newcommand{\either}{\oplus}

\begin{definition}
  \label[definition]{def:parallel}
  Suppose \( \processor{}: M \leadsto P \) and
  \( \processortwo{}: N \leadsto Q \) are stream processors with
  \( \processor{} = (S, f, s_{\id}, p_{\id}) \) and
  \( \processortwo{} = (T, g, t_{\id}, q_{\id}) \). We define the product processor
  \( \processor{} \dprod \processortwo{} \) as
  \
  \begin{align*}
    \processor{} \dprod \processortwo{}
              & : M \dprod N \leadsto P \dprod Q                               \\
    \processor{} \dprod \processortwo{}
              & \defeq (S \times T, h, (s_{\id}, t_{\id}), (p_{\id}, q_{\id})) \\
    \text{where}                                                               \\
    h         & : M \times N \to \State{S \times T}{P \times Q}                \\
    h((m, n)) & = (s, t) \mapsto
    \begin{aligned}[t]
       & \text{let } (\alpha, \beta) = (f \dprod g)((m, n)) \text{ in} \\
       & \text{let } (s', p) = \alpha(s)
      \text{ and } (t', q) = \beta(t) \text{ in}                       \\
       & ((s', t'), (p, q))
    \end{aligned}
  \end{align*}
  It is straightforward to verify that \( h \) is a homomorphism.
\end{definition}

\noindent
Again, the following proposition connects our syntax to our semantics in the
expected way.
\begin{propositionapxrep}
  Suppose \( \processor{}: N \leadsto P \) and \( \processortwo{}: M \leadsto Q
  \) are stream processors. We have
  \[
    \denote{\processor{} \dprod \processortwo{}} = \denote{\processor{}} \dprod \denote{\processortwo{}}
  \]
\end{propositionapxrep}

\begin{proof}
  \label[proof]{prf:product-semantics}
  Suppose \( \processor{}: M \leadsto P \) and
  \( \processortwo{}: N \leadsto Q \) are stream processors with
  \( \processor{} = (S, f, s_{\id}, p_{\id}) \) and
  \( \processortwo{} = (T, g, t_{\id}, q_{\id}) \).
  Let \( \processor{} \dprod \processortwo{} = (S \times T, h, (s_{\id},
  t_{\id}), (p_{\id}, q_{\id})) \), and
  let \( h \) be defined as in \cref{def:parallel}. We verify
  \begin{align*}
    \denote{\processor{} \dprod \processortwo{}}((m, n))
     & = \run{}(\processor{} \dprod \processortwo{})((m, n))                   \\
     & = (p_{\id}, q_{\id}) \mul \out(h((m, n))((s_{\id}, t_{\id})))           \\
     & = (p_{\id}, q_{\id}) \mul (\out(\alpha(s_{\id})), \out(\beta(t_{\id})))
    \text{ where \( (\alpha, \beta) = (f \dprod g)((m, n)) \) }                \\
     & = (p_{\id}, q_{\id}) \mul (\out(f(m)(s_{\id})), \out(g(n)(t_{\id})))
    \tag{by definition of \( f \dprod g \)}                                    \\
     & = (p_{\id}, q_{\id}) \mul (p_{\id \leadsto m}, q_{\id \leadsto n})      \\
     & = (p_{m}, q_{n})                                                        \\
     & = (\denote{\processor{}}(m), \denote{\processortwo{}}(n))               \\
     & = (\denote{\processor{}} \dprod \denote{\processortwo{}})((m, n))
  \end{align*}
\end{proof}

\begin{proofsketch}
  Expand out the definitions (see the
  \hyperref[prf:product-semantics]{Appendix}).
\end{proofsketch}

\noindent
We now define how to split and merge streams. In particular, we define
splitters by their semantics.

\begin{definition}
  Let \( M \) be a commutative monoid. We define \( \pmerge : M \dprod M \to M \)
  as \
  \[
    \pmerge((a, b)) \defeq a + b.
  \]
  Recall that \( a + b \) is the conventional notation for the binary operator in
  a commutative monoid.
\end{definition}

\begin{proposition}
  \( \pmerge \) is a homomorphism.
\end{proposition}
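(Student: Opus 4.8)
The plan is to check the two monoid homomorphism laws for $\pmerge$ directly; both collapse immediately to the commutative monoid axioms of $M$. First I would note that the relation in \cref{def:direct-product} puts every element of $M \dprod M$ into the normal form $(m, \id)(\id, n) = (m, n)$, so the pointwise formula $\pmerge((a, b)) = a + b$ does determine a well-defined function $M \dprod M \to M$; nothing further about the presentation is needed, and indeed the earlier definition already treats elements of $M \dprod M$ as pairs.

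Next I would verify preservation of the identity: the unit of $M \dprod M$ is $(\id, \id)$, and $\pmerge((\id, \id)) = \id + \id = \id$ by the identity law in $M$. Then I would verify preservation of products. Given $(a_1, b_1)$ and $(a_2, b_2)$ in $M \dprod M$, the defining relation gives $(a_1, b_1) \mul (a_2, b_2) = (a_1 + a_2,\, b_1 + b_2)$, so
\[
  \pmerge\big((a_1, b_1) \mul (a_2, b_2)\big) = (a_1 + a_2) + (b_1 + b_2),
\]
whereas $\pmerge((a_1, b_1)) + \pmerge((a_2, b_2)) = (a_1 + b_1) + (a_2 + b_2)$. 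These two expressions agree by associativity and commutativity of $+$ in $M$: reassociate and swap $a_2$ with $b_1$.

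I do not expect any real obstacle here. The one point worth flagging is that the commutativity hypothesis on $M$ is used exactly once — precisely for the rearrangement $(a_1 + a_2) + (b_1 + b_2) = (a_1 + b_1) + (a_2 + b_2)$ — and it is exactly what makes componentwise addition on $M \dprod M$ factor through the diagonal merge into $M$. Without commutativity the two sides would differ in general, which is why $\pmerge$ is only defined for commutative $M$.
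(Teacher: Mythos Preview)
Your proof is correct; the paper states this proposition without proof, so there is nothing to compare against beyond the obvious direct verification you supply. Your observation that commutativity is used exactly once, in the rearrangement step, is accurate and worth keeping.
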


\begin{definition}
  \label[definition]{def:splitter}
  A stream processor \( \processor{} : M \leadsto M \dprod M \) is a
  \textit{splitter} if we have
  \
  \[
    \processor{} \pipe \pure \pmerge \sim \pure \idfn
  \]
\end{definition}

\begin{remark}
  Splitters are not necessarily homomorphisms. For example, consider a stream
  processor \( \processor{} : \List[\Nat] \to \List[\Nat] \dprod \List[\Nat] \).
  If \( \processor{} \) should send elements to its outputs in round-robin
  fashion, it would need internal state to keep track of which output it should
  next send an input, so it would not be a homomorphism. On the other hand, if \(
  \processor{} \) should send even numbers to the first output and odd numbers to
  the second output, it could operate statelessly and would be a homomorphism.
\end{remark}

\begin{proposition}
  \label[proposition]{prop:splitmerge}
  Let \( \psplit \) be a splitter and \( f : M \to M \) be a homomorphism.
  We have
  \
  \[
    \pure f \sim \psplit \pipe (\pure f \dprod \pure f) \pipe \pure \pmerge
  \]
\end{proposition}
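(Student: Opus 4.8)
The plan is to push everything down to the level of semantics and exploit that $\denote{-}$ is a functor. By \propref{prop:functor}, the analogous compatibility of $\denote{-}$ with $\dprod$, and \propref{prop:pure}, we have
\[
  \denote{\psplit \pipe (\pure f \dprod \pure f) \pipe \pure\pmerge}
  = \denote{\psplit} \pipe (f \dprod f) \pipe \pmerge .
\]
Unfolding the definition of $\sim$, it therefore suffices to prove the equality of stream functions $\denote{\psplit} \pipe (f \dprod f) \pipe \pmerge = f$.

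Write $P = \denote{\psplit} : M \to M \dprod M$. The splitter hypothesis $\psplit \pipe \pure\pmerge \sim \pure\idfn$, combined again with \propref{prop:functor} and \propref{prop:pure}, says exactly $P \pipe \pmerge = \idfn_M$; that is, for every $m \in M$, writing $P(m) = (a, b)$, we have $a + b = m$. Now fix $m \in M$ and let $P(m) = (a, b)$. Unfolding the action of $\dprod$ on functions and the definition of $\pmerge$,
\[
  \bigl(\denote{\psplit} \pipe (f \dprod f) \pipe \pmerge\bigr)(m)
  = \pmerge\bigl((f \dprod f)(P(m))\bigr)
  = \pmerge\bigl((f(a),\, f(b))\bigr)
  = f(a) + f(b).
\]
Since $\pmerge$ is only defined when $M$ is a commutative monoid, $M$ is commutative, so applying that $f$ is a homomorphism gives $f(a) + f(b) = f(a + b) = f(m)$, using $a + b = m$. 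As $m$ was arbitrary, $\denote{\psplit} \pipe (f \dprod f) \pipe \pmerge = f = \denote{\pure f}$, and hence $\pure f \sim \psplit \pipe (\pure f \dprod \pure f) \pipe \pure\pmerge$.

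I do not expect any genuine obstacle here: the only care needed is tracking the types through the composite so that \propref{prop:functor} and its parallel analogue apply, and observing that the mere presence of $\pmerge$ forces $M$ to be commutative — which is precisely what lets one slide $f$ through the merge. (One can also argue purely syntactically: the homomorphisms $(f \dprod f) \pipe \pmerge$ and $\pmerge \pipe f$ agree as functions — both send $(m, n)$ to $f(m) + f(n)$ — so $(\pure f \dprod \pure f) \pipe \pure\pmerge$ and $\pure\pmerge \pipe \pure f$ have equal semantics, via \cref{cor:decoupling} and the parallel analogue; prepending $\psplit$ and using the splitter law $\psplit \pipe \pure\pmerge \sim \pure\idfn$ collapses the front into $\pure\idfn$, leaving $\pure\idfn \pipe \pure f \sim \pure f$.)
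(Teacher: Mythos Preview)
Your proof is correct and essentially the same as the paper's. The paper first isolates the commutation $\pure\pmerge \pipe \pure f \sim (\pure f \dprod \pure f) \pipe \pure\pmerge$ (proved, like yours, by evaluating semantics at an arbitrary $(a,b)$ and using that $f$ is a homomorphism), then chains $\pure f \sim \psplit \pipe \pure\pmerge \pipe \pure f \sim \psplit \pipe (\pure f \dprod \pure f) \pipe \pure\pmerge$ via the splitter law---which is precisely the ``syntactic'' alternative you sketch in your parenthetical; your main argument just inlines these two steps into a single pointwise evaluation of the full composite.
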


\begin{proof}
  Observe \
  \begin{align*}
    \denote{\pure \pmerge \pipe \pure f}((a, b))
     & = f(a + b) = f(a) + f(b)                                        \\
     & = \denote{\pure \pmerge}((f(a), f(b)))                          \\
     & = \denote{(\pure f \dprod \pure f) \pipe \pure \pmerge}((a, b))
  \end{align*}
  Thus we have
  \
  \begin{align*}
    \pure f
     & \sim \pure \idfn \pipe \pure f                                                       \\
     & \sim \psplit \pipe \pure \pmerge \pipe \pure f             \tag{\Cref{def:splitter}} \\
     & \sim \psplit \pipe (\pure f \dprod \pure f) \pipe \pure \pmerge
  \end{align*}
\end{proof}

\noindent
The right processor in \propref{prop:splitmerge} can be interpreted as
performing the computation \( f \) in parallel across two processors.
\Propref{prop:splitmerge} and \cref{cor:decompose} together model the
\textit{MapReduce} programming framework; given a stateful operator \(
\processor{} = (S, f, s_\id, o_\id) \), we can rewrite it to
\[
  \processor{} \sim \pure f \pipe \eval \processor{}                                        \\
  \sim \psplit \pipe (\pure f \dprod \pure f) \pipe \pure \pmerge \pipe \eval \processor{}.
\]
The rightmost program computes the pure computation for \( \processor{} \)
across two independent processors. In order to do useful work in parallel, the
monoidal product on \( \Img{f} \) must be ``efficiently'' parallelizable; see
\cref{sec:partial-eval}. For general processors \( \processor{} \), all work is
delayed until the evaluation.

Many stream programs (for example, distributed databases) need to perform
operations on \textit{all pairs} of two input streams. In order to model these
programs, we define the tensor product of two commutative monoids \( M \) and
\( N \).

\begin{definition}
  Let \( M \) and \( N \) be commutative monoids. The \textit{tensor product} \(
  M \tensor N \) is the formal product of elements \( m \tensor n \) for all \( m
  \in M \) and \( n \in N \) with identity \( \id \) subject to the relations:
  \begin{align*}
    \id_M \tensor n       & = \id = m \tensor \id_N \tag{identity}                 \\
    t_1 + t_2             & = t_2 + t_1 \tag{commutativity}                        \\
    (m_1 + m_2) \tensor n & = (m_1 \tensor n) + (m_2 \tensor n) \tag{left-linear}  \\
    m \tensor (n_1 + n_2) & = (m \tensor n_1) + (m \tensor n_2) \tag{right-linear}
  \end{align*}
\end{definition}

\noindent
The tensor product \( M \tensor N \) is a monoid that models all pairs of the
elements of \( M \) and \( N \). For example, the tensor product \( (m_1 + m_2
+ m_3) \tensor (n_1 + n_2 + n_3) \) expands to the sum \( \sum_{i=1}^{3}
\sum_{j=1}^{3} m_i \tensor n_j \) with nine terms (by the left-linear and
right-linear relations).

\begin{definition}
  Let \( \mathrm{Pairs} : (M \dprod N) \to (M \tensor N) \) be the function \(
  \mathrm{Pairs}((m, n)) \defeq m \tensor n \).
\end{definition}

\begin{proposition}
  \( \mathrm{Pairs} \) is not a homomorphism, but is a stream function.
\end{proposition}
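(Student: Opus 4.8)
The plan is to dispatch the two halves of the statement independently.

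\textbf{$\mathrm{Pairs}$ is not a homomorphism.} It suffices to exhibit one failing instance of the homomorphism law. I would take $M = N = \Nat$ (the naturals under addition, so $\id = 0$) and compare, for the elements $(1,0)$ and $(0,1)$ of $M \dprod N$, the two sides of the law. On one hand, $(1,0) + (0,1) = (1,1)$ by the defining relation of $M \dprod N$, so $\mathrm{Pairs}\bigl((1,0)+(0,1)\bigr) = 1 \tensor 1$. On the other hand, if $\mathrm{Pairs}$ were a homomorphism we would get $\mathrm{Pairs}((1,0)) + \mathrm{Pairs}((0,1)) = (1 \tensor 0) + (0 \tensor 1)$, and both summands vanish by the identity relations of the tensor product (as $0 = \id$), giving $\id$. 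But $1 \tensor 1 \neq \id$ in $\Nat \tensor \Nat$: the bilinear map $(m, n) \mapsto mn$ extends to a monoid homomorphism $\Nat \tensor \Nat \to \Nat$ carrying $1 \tensor 1$ to $1$, whereas a homomorphism must send $\id$ to $0$. So the law fails. Conceptually, the obstruction is exactly the cross terms produced when $(p_M + a_M) \tensor (p_N + a_N)$ is expanded by bilinearity; a homomorphism would be forced to discard them.

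\textbf{$\mathrm{Pairs}$ is a stream function.} Mirroring the informal derivation in \cref{sec:example}, I would equip $\mathrm{Pairs}$ with the update function $\Update{\mathrm{Pairs}} : (M \dprod N) \times (M \dprod N) \to (M \tensor N)$ given on componentwise normal forms by
\[
  \Update{\mathrm{Pairs}}\bigl((p_M, p_N),\, (a_M, a_N)\bigr) \defeq (p_M \tensor a_N) + (a_M \tensor p_N) + (a_M \tensor a_N),
\]
and then check the three conditions of \cref{def:stream-function}. Condition~(1) is one bilinearity expansion: $\mathrm{Pairs}((p_M, p_N)) + \Update{\mathrm{Pairs}}((p_M, p_N),(a_M, a_N))$ is the sum of the four tensors $x \tensor y$ with $x \in \{p_M, a_M\}$ and $y \in \{p_N, a_N\}$, which is precisely $(p_M + a_M) \tensor (p_N + a_N) = \mathrm{Pairs}\bigl((p_M, p_N)+(a_M, a_N)\bigr)$. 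Condition~(2a) is immediate, since setting $(a_M, a_N) = (\id, \id)$ makes each of the three summands of $\Update{\mathrm{Pairs}}$ collapse to $\id$ by the tensor identity relations.

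The bulk of the verification is condition~(2b). Here one expands both $\Update{\mathrm{Pairs}}(p, ab)$ and $\Update{\mathrm{Pairs}}(p, a) + \Update{\mathrm{Pairs}}(pa, b)$ using bilinearity, remembering that the intermediate state is $pa = (p_M + a_M,\, p_N + a_N)$ and that $ab = (a_M + b_M,\, a_N + b_N)$. Each side reduces to the same eight-term sum --- namely the six tensors $x \tensor y$ with $x \in \{a_M, b_M\}$ and $y \in \{p_N, a_N, b_N\}$, plus $p_M \tensor a_N$ and $p_M \tensor b_N$ --- and the two sums are identified using commutativity of the tensor-product monoid to reorder terms. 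I expect this last step to be the only real obstacle: it is pure bookkeeping, but one must track carefully which cross terms survive (in particular the $p_M \tensor p_N$ part of $\mathrm{Pairs}(p)$ and the $p_M \tensor a_N + a_M \tensor p_N$ parts contributed by $\mathrm{Pairs}(pa)$ must \emph{not} reappear), and one genuinely needs commutativity of $M \tensor N$, not merely bilinearity, to line the two sides up. Since the powerset, bag, and $\Int$-set monoids commonly used to model relations are all commutative, this argument also covers the operator $\pairs$ of \cref{sec:example} as the special case $M = N$.
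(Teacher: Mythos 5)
Your proof is correct and takes essentially the same approach as the paper: the same counterexample shape for the failure of the homomorphism law (the paper compares \( \mathrm{Pairs}((m_1,n_1)) = m_1 \tensor n_1 \) against \( \mathrm{Pairs}((m_1,\id)) + \mathrm{Pairs}((\id,n_1)) = \id \)) and the identical update function \( \Update{\mathrm{Pairs}}((m,n),(m',n')) = m \tensor n' + m' \tensor n + m' \tensor n' \), whose verification of conditions (1)--(2b) the paper leaves as ``easy to check.'' Your only addition is to justify \( 1 \tensor 1 \neq \id \) by exhibiting the separating multiplication homomorphism \( \Nat \tensor \Nat \to \Nat \), a detail the paper elides.
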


\begin{proof}
  \( \mathrm{Pairs} \) is not a homomorphism as
  \[
    \mathrm{Pairs}((m_1, n_1)) = m_1 \tensor n_1
    \neq
    \id = m_1 \tensor \id + \id \tensor n_1 = \mathrm{Pairs}((m_1, \id)) + \mathrm{Pairs}((\id, n_1)).
  \]

  \noindent
  However, \( \mathrm{Pairs} \) is a stream function; define \(
  \Update{\mathrm{Pairs}}((m, n), (m', n')) = m \tensor n' + m' \tensor n + m'
  \tensor n' \). It is easy to check that \( \Update{\mathrm{Pairs}} \) satisifes
  the required properties in \cref{def:stream-function}.
\end{proof}

\begin{definition}
  Let \( \pairs : (M \dprod N) \leadsto (M \tensor N) \) be a stream processor
  such that \( \denote{\pairs} = \mathrm{Pairs} \). (Such a processor must exist
  by \fullref{thm:completeness}.)
\end{definition}

\subsection{Application: correctness of database join optimization}
\label{sec:join}

We will now prove the correctness of the database join optimizations we
outlined in \cref{sec:example}. Recall that one optimization we performed was
partitioning a single processor that computed the entire join into two
processors which each computed a subset of the join in parallel. The main step
in the proof of correctness for partitioning will require that processors can
fully evaluate their output independently of each other. Hence, we define
\textit{independent inputs} as follows:

\begin{definition}
  \label[definition]{def:independent}
  Suppose \( \processor{} : M \leadsto N \) is a stream processor with empty
  initial output. We say that \( a, b \in M \) are independent inputs to
  \( \processor{} \) if we have
  \
  \[
    \denote{\processor{}}(ab) = \denote{\processor{}}(a) \mul \denote{\processor{}}(b)
  \]
\end{definition}

\noindent
Observe that if \( F : M \to N \) is a homomorphism, then all of its inputs are
independent. But even if \( F : M \to N \) is not a homomorphism, it may behave
like a homomorphism for specific inputs \( a, b \in M \).
\Cref{def:independent} states this precisely.

Recall that \propref{prop:splitmerge} allowed us to parallelize pure
computations. The next lemma additionally allows us to parallelize evaluations
when inputs are independent.

\begin{lemmaapxrep}
  \label{lem:independentmerge}
  Let \( \processortwo{} : M \leadsto N \dprod N \)
  and
  \( \processor{} : N \leadsto P \)
  be stream processors such that for all outputs
  \( (n_1, n_2) \in \Img{\denote{\processortwo{}}} \),
  \( n_1 \) and \( n_2 \) are
  independent inputs to \( \processor{} \). Then we have
  \
  \[
    \processortwo{} \pipe \pure \pmerge_N \pipe \processor{} \sim
    \processortwo{} \pipe (\processor{} \dprod \processor{}) \pipe \pure \pmerge_P.
  \]
\end{lemmaapxrep}

\begin{proof}
  \label{prf:independentmerge}
  We need to show that for all \( m \in M \),
  \
  \[
    \denote{\processortwo{} \pipe \pure \pmerge_N \pipe \processor{}}(m) =
    \denote{\processortwo{} \pipe (\processor{} \dprod \processor{}) \pipe \pure \pmerge_P}(m).
  \]
  We have
  \begin{align*}
    \denote{\processortwo{} \pipe \pure \pmerge_N \pipe \processor{}}(m)
     & = \denote{\processor{}}(\pmerge_N(\denote{\processortwo{}}(m)))                                \\
     & = \denote{\processor{}}(\pmerge_N((n_1, n_2)))                                                 \\
     & = \denote{\processor{}}(n_1 + n_2)                                                             \\
     & = \denote{\processor{}}(n_1) + \denote{\processor{}}(n_2)   \tag{\cref{def:independent}}       \\
     & = \pmerge_P((\denote{\processor{}}(n_1), \denote{\processor{}}(n_2))                           \\
     & = \pmerge_P((\denote{\processor{}} \dprod \denote{\processor{}})((n_1, n_2)))                  \\
     & = \pmerge_P((\denote{\processor{}} \dprod \denote{\processor{}})(\denote{\processortwo{}}(m))) \\
     & = \denote{\processortwo{} \pipe (\processor{} \dprod \processor{}) \pipe \pure \pmerge_P}(m)   \\
    \begin{aligned}
      \text{where } & n_1 = p_1(\denote{\processortwo{}}(m)) \in p_1(\Img{\denote{\processortwo{}}})  \\
      \text{and }   & n_2 = p_2(\denote{\processortwo{}}(m)) \in p_2(\Img{\denote{\processortwo{}}}). \\
    \end{aligned}
  \end{align*}
\end{proof}

\noindent
The proof of \lemref{lem:independentmerge} is simple and can be found in the
\hyperref[prf:independentmerge]{Appendix.} \\

\emph{The proof.}
First, we optimize the composition \( \pairs \pipe \pure \filter \).
We interpret \( \pairs \pipe \pure \filter \) as
the composition of two processors where \( \pairs \) computes all \( mn \)
pairs of its input and sends each to \( \pure \filter \) to be filtered
out. We want to optimize the composition into a single processor,
\( \mathrm{join} \), which immediately discards tuples that do not match the
filter. Thus we have
\
\begin{align*}
  \pairs \pipe \pure \filter
   & \sim \pure f_{\pairs} \pipe \eval \pairs \pipe \pure \filter
  \tag{\cref{cor:decompose}}                                                            \\
   & \sim \pure f_{\pairs} \pipe \pure \mathrm{filter_*} \pipe \eval_{\filter_*} \pairs
  \tag{\cref{cor:exchangeeval}}                                                         \\
   & \sim \pure{} (f_{\pairs} \pipe \mathrm{filter_*}) \pipe \eval_{\filter_*} \pairs
  \tag{\cref{cor:decoupling}}                                                           \\
   & \sim \mathrm{join}
  \tag{\propref{prop:fusion}}
\end{align*}

where \( \mathrm{join} \) is the fused processor that is equivalent to \(
\pure{} (f_{\pairs} \pipe \mathrm{filter_*}) \pipe \eval_{\filter_*} \pairs \).

Next, we prove that partitioning \( \mathrm{join} \) into two parallel
processors is correct. We prove this for general commutative monoids \( M \)
and \( N \) generated by sets \( A \) and \( B \) respectively; for the
specific case of e.g. databases with set semantics, one can instantiate \( M =
\Set{U \times V} \) and \( N = \Set{V \times W} \).

\begin{theorem}
  \label[theorem]{thm:join-partition}
  Let \( M \) and \( N \) be commutative monoids generated by sets \( A \) and \( B \),
  respectively. Let \( \mathrm{hash}_A : A \to \{ 0, 1 \} \),
  \( \mathrm{hash}_B : B \to \{ 0, 1 \} \), and
  \( \mathrm{pred} : A \times B \to \{ \top, \bot \} \) be functions
  such that
  \( \forall a \in A, b \in B.\;
  \mathrm{hash}_A(a) \neq \mathrm{hash}_B(b) \implies
  \mathrm{pred}(a, b) = \bot \).
  Define \( \psplit : M \dprod N \to (M \dprod N) \dprod (M \dprod N) \) as a
  homomorphism on the generators by \
  \[
    \psplit((a, \id)) = \begin{cases}
      ((a, \id), \id) & \text{ if \( \mathrm{hash_A}(a) = 0 \)} \\
      (\id, (a, \id)) & \text{ if \( \mathrm{hash_A}(a) = 1 \)} \\
    \end{cases} \hspace{2em}
    \psplit((\id, b)) = \begin{cases}
      ((\id, b), \id) & \text{ if \( \mathrm{hash_B}(b) = 0 \)}  \\
      (\id, (\id, b)) & \text{ if \( \mathrm{hash_B}(b) = 1 \)}. \\
    \end{cases}
  \]
  \
  and \( \filter : M \tensor N \to M \tensor N \) as a homomorphism
  on the generators by
  \
  \[
    \filter(a \tensor b) = \begin{cases}
      a \tensor b & \text{if \( \mathrm{pred}(a, b) = \top \)} \\
      \id         & \text{if \( \mathrm{pred}(a, b) = \bot \)}.
    \end{cases}
  \]
  Then we have \
  \[
    \mathrm{join}
    \sim \pure \psplit \pipe (\mathrm{join} \dprod \mathrm{join}) \pipe \pure \pmerge.
  \]
\end{theorem}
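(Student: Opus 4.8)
The plan is to follow the roadmap of \cref{sec:example}: insert a splitter/merger pair just to the left of $\mathrm{join}$, and then commute $\mathrm{join}$ past the merge using \lemref{lem:independentmerge}. Since $\processor{}\sim\processortwo{}$ means equality of $\denote{-}$ and $\denote{-}$ is a functor (\propref{prop:functor}), every step reduces to an identity between stream functions. I will use two facts about $\mathrm{join}$: first, $\denote{\mathrm{join}}=\mathrm{Pairs}\pipe\filter$, which follows from the equivalence $\mathrm{join}\sim\pairs\pipe\pure\filter$ established just before the theorem together with \propref{prop:functor} and \propref{prop:pure}; second, $\mathrm{join}$ is the concrete fused processor $(S,f_\pairs\pipe\filter_*,s_\id,\id)$ obtained from that derivation, so it has empty initial output and \cref{def:independent} applies to it.

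\emph{Step 1: $\psplit$ is a splitter}, i.e.\ $\pure\psplit\pipe\pure\pmerge_{M\dprod N}\sim\pure\idfn$ (\cref{def:splitter}). By \cref{cor:decoupling} and \propref{prop:pure} this reduces to $\psplit\pipe\pmerge_{M\dprod N}=\idfn_{M\dprod N}$; as both sides are homomorphisms it suffices to check them on the generators $(a,\id)$ and $(\id,b)$, where $\psplit$ merely routes the single nonidentity component to one of the two slots and $\pmerge_{M\dprod N}$ adds it back. Using $\pure\idfn\pipe\mathrm{join}\sim\mathrm{join}$ as well, we obtain $\mathrm{join}\sim\pure\psplit\pipe\pure\pmerge_{M\dprod N}\pipe\mathrm{join}$.

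\emph{Step 2: independence is the crux.} To invoke \lemref{lem:independentmerge} with $\processortwo{}=\pure\psplit$ and $\processor{}=\mathrm{join}$, I must show that for every output $((m_0,n_0),(m_1,n_1))$ of $\psplit$, the elements $(m_0,n_0)$ and $(m_1,n_1)$ are independent inputs to $\mathrm{join}$. Given $(m,n)\in M\dprod N$, write $m=\sum_i a_i$ and $n=\sum_j b_j$ as sums of generators; since $\psplit$ is a homomorphism, $\psplit((m,n))=((m_0,n_0),(m_1,n_1))$ with $m_c=\sum_{i\,:\,\mathrm{hash}_A(a_i)=c}a_i$ and $n_c=\sum_{j\,:\,\mathrm{hash}_B(b_j)=c}b_j$, and $(m_0,n_0)(m_1,n_1)=(m,n)$. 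Expanding with bilinearity of $\tensor$ and the homomorphism property of $\filter$,
\[
  \denote{\mathrm{join}}((m,n))=\filter(m\tensor n)=\sum_{i,j}\filter(a_i\tensor b_j)=\sum_{i,j\,:\,\mathrm{pred}(a_i,b_j)=\top}a_i\tensor b_j.
\]
The hypothesis $\mathrm{hash}_A(a)\neq\mathrm{hash}_B(b)\implies\mathrm{pred}(a,b)=\bot$ forces every surviving term to have $\mathrm{hash}_A(a_i)=\mathrm{hash}_B(b_j)$, so this sum splits as its $\mathrm{hash}=0$ part plus its $\mathrm{hash}=1$ part, the $c$-part being exactly $\filter(m_c\tensor n_c)=\denote{\mathrm{join}}((m_c,n_c))$. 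Hence $\denote{\mathrm{join}}((m_0,n_0)(m_1,n_1))=\denote{\mathrm{join}}((m_0,n_0))+\denote{\mathrm{join}}((m_1,n_1))$, which is the independence condition.

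\emph{Step 3: conclude.} With independence in hand, \lemref{lem:independentmerge} yields $\pure\psplit\pipe\pure\pmerge_{M\dprod N}\pipe\mathrm{join}\sim\pure\psplit\pipe(\mathrm{join}\dprod\mathrm{join})\pipe\pure\pmerge_{M\tensor N}$, and chaining this with Step 1 gives $\mathrm{join}\sim\pure\psplit\pipe(\mathrm{join}\dprod\mathrm{join})\pipe\pure\pmerge_{M\tensor N}$, as claimed. I expect the main obstacle to be careful bookkeeping rather than a conceptual hurdle: keeping straight which $\pmerge$ (on $M\dprod N$-streams in Step 1 versus on $M\tensor N$-streams in the final program) occurs where, and rigorously justifying the re-bucketing of $\sum_{i,j}\filter(a_i\tensor b_j)$ by hash value — the single place the $\mathrm{hash}$/$\mathrm{pred}$ compatibility is used, its content being precisely that $\filter$ annihilates every cross-bucket pair $a_i\tensor b_j$, which is why the partitioning is sound. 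A minor detail worth pinning down is that \cref{def:independent} requires empty initial output, so $\mathrm{join}$ must be taken as the concrete fused processor from the preceding derivation.
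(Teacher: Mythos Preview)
Your proposal is correct and follows essentially the same approach as the paper: insert the splitter/merger pair, then invoke \lemref{lem:independentmerge} after verifying independence via the identity $\denote{\mathrm{join}}=\mathrm{Pairs}\pipe\filter$ and the hash/pred compatibility hypothesis. The only cosmetic difference is that the paper expands $\filter((m_0+m_1)\tensor(n_0+n_1))$ into four terms and shows the two cross terms vanish, whereas you expand all the way to generators and re-bucket by hash value; these are the same argument, and your extra care about which $\pmerge$ is which and about the empty initial output of $\mathrm{join}$ is welcome.
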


\begin{proof}
  It is easy to check that \( \pure \psplit \) is a splitter. Thus we have the
  equivalences
  \begin{align*}
    \mathrm{join}
     & \sim \pure \psplit \pipe \pure \pmerge \pipe \mathrm{join} \\
     & \sim \pure \psplit
    \pipe (\mathrm{join} \dprod \mathrm{join})
    \pipe \pure \pmerge
    \tag{\Lemref{lem:independentmerge}}
  \end{align*}
  where the last equivalence follows if \( \pure \mathrm{split} \) and \(
  \mathrm{join} \) satisfy the conditions of independent inputs from
  \lemref{lem:independentmerge}. To check this, let \( ((m_1, n_1), (m_2, n_2))
  \in \Img{\mathrm{split}} \). Then we have \
  \begin{align*}
      & \; \denote{\mathrm{join}}((m_1, n_1) + (m_2, n_2))                         \\
    = & \; \denote{\pairs \pipe \pure \filter}((m_1, n_1) + (m_2, n_2))            \\
    = & \; \filter(\denote{\pairs}((m_1, n_1) + (m_2, n_2)))                       \\
    = & \; \filter(m_1 \tensor n_1) + \filter(m_1 \tensor n_2)
    + \filter(m_2 \tensor n_1) + \filter(m_2 \tensor n_2)                          \\
    = & \; \filter(m_1 \tensor n_1) + \id + \id + \filter(m_2 \tensor n_2) \tag{*} \\
    = & \; \denote{\pairs \pipe \pure \filter}((m_1, n_1))
    +   \denote{\pairs \pipe \pure \filter}((m_2, n_2))                            \\
    = & \; \denote{\mathrm{join}}((m_1, n_1))
    +   \denote{\mathrm{join}}((m_2, n_2))                                         \\
  \end{align*}
  where step (*) follows because if we expand, for example,
  \( m_1 \tensor n_2 \) into the sum of generators \( \sum_{i,j} a_i \tensor b_j \),
  we have \( \mathrm{hash}_A(a_i) = 0 \) and \( \mathrm{hash}_B(b_j) = 1 \) for all
  \( i, j \), so \( \mathrm{pred}(a_i, b_j) = \bot \) by hypothesis. This implies
  \( \filter((a_i, b_j)) = \id \) and
  \( \filter((m_1, n_2)) = \id \). By a similar argument,
  \( \filter((m_2, n_1)) = \id \). We conclude that the conditions
  for \lemref{lem:independentmerge} are satisfied.
\end{proof}

\begin{remark}
  Observe that to check the conditions for \lemref{lem:independentmerge} we used
  the equivalence proved earlier, \( \denote{\mathrm{join}} = \denote{\pairs
    \pipe \pure \filter} \), as we can more easily reason about the latter's
  semantics.
\end{remark}

\subsection{Ticks}
\label{sec:ticks}

Some streaming frameworks that operate over lattices feature support for
\textit{stratified negation} \citep{dedalus}; that is, they asynchronously
compute a monotonic function on inputs within a stratum, but support negation
across strata. An emblematic program for these frameworks is a program which
computes stratified set difference, which we model using the free product.

\newcommand{\fprod}{\star}

\begin{definition}
  \label[definition]{def:free-product}
  Let \( M \) and \( N \) be monoids. We define the free product of \( M \) and \( N \),
  written \( M \fprod N \), as formal products of the form \( m \fprod \_ \) for
  \( m \in M \) and \( \_ \fprod n \) for \( n \in N \) subject to the relations
  \begin{align*}
    (\id_M \fprod \_ ) = \id             & = (\_ \fprod \id_N)  \tag{identity} \\
    (m_1 \fprod \_ )\mul (m_2 \fprod \_) & = (m_1m_2 \fprod \_) \tag{left}     \\
    (\_ \fprod n_1) \mul (\_ \fprod n_2) & = (\_ \fprod n_1n_2) \tag{right}
  \end{align*}
\end{definition}

\begin{remark}
  When \( M \) and \( N \) are disjoint, we may simply write elements in \( M
  \fprod N \) as formal products of the form \( m_1 n_1 m_2 n_2 \ldots \) where
  \( m_i \in M \) and \( n_j \in N \).
\end{remark}

\begin{remark}
  The direct product (\cref{def:direct-product}) is the categorical product,
  while the free product (\cref{def:free-product}) is the categorical coproduct.
\end{remark}

\noindent
We now consider set difference. First, we observe that non-stratified set
difference is not expressible in our semantics. Then, we model stratified set
difference in two different ways.

\begin{proposition}
  Let \( A \) be a nonempty set. The function \( F : \Set{A} \dprod \Set{A} \to
  \Set{A} \) which computes the set difference \( F((A, B)) = A \setminus B \) is
  not a stream function.
\end{proposition}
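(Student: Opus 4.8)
The plan is to observe that the obstruction is purely about \emph{monotonicity}, and that it already defeats condition~\eqref{eqn:derivative} alone. The output monoid $\Set{A}$ is idempotent and commutative with product given by union, so for any $m, n \in \Set{A}$ we have $m \mul n = m \cup n \supseteq m$. Hence, if $F$ were a stream function with some update function $\Update F$, then condition~\eqref{eqn:derivative} would give, for all $p, a \in \Set{A} \dprod \Set{A}$,
\[
  F(pa) = F(p) \mul \Update F(p, a) = F(p) \cup \Update F(p, a) \supseteq F(p),
\]
i.e.\ extending the input can only enlarge the output. The second step is to exhibit a witness violating this. Since $A$ is nonempty, fix $x \in A$ and take $p = (\{x\}, \emptyset)$ and $a = (\emptyset, \{x\})$, so that $pa = (\{x\}, \{x\})$ by the componentwise product of \cref{def:direct-product}. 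Then $F(p) = \{x\} \setminus \emptyset = \{x\}$ while $F(pa) = \{x\} \setminus \{x\} = \emptyset$, so $F(pa) \not\supseteq F(p)$. This contradicts the displayed inclusion, so no $\Update F$ satisfying even condition~\eqref{eqn:derivative} exists, and $F$ is not a stream function.

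I would present the argument in exactly that order: first derive the monotonicity consequence of condition~(1) from idempotence of union, then the two-line counterexample, then conclude. Notably, conditions~\eqref{eqn:regular-id} and \eqref{eqn:regular-mul} play no role — the failure is already at condition~(1).

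There is essentially no hard step here; the only point worth emphasizing is \emph{why} idempotence of the codomain is doing the work, namely that it is precisely what forces condition~(1) to imply monotonicity, so the same argument would not apply to, e.g., a group-like codomain such as $\Int$-sets (which is exactly why those can model set difference). If desired, I would add a one-sentence remark to this effect after the proof.
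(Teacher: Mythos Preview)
Your proof is correct and essentially identical to the paper's: both pick the witness $p = (\{x\}, \emptyset)$, $a = (\emptyset, \{x\})$ and observe that condition~\eqref{eqn:derivative} forces $F(pa) \supseteq F(p)$ in $\Set{A}$, which the witness violates. Your version makes the monotonicity consequence of idempotence more explicit, but the argument is the same.
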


\begin{proof}
  Let \( a \in A \) and define \( p = (\{ a \}, \emptyset) \) and \( n =
  (\emptyset, \{ a \}) \). Then, \( p + n = (\{ a \}, \{ a \}) \) and for any \(
  \Update{F} \), there is no \( x \in \Set{A} \) which allows \( \Update{F} \) to
  satisfy \cref{def:stream-function}: \
  \[
    F(p + n) = \emptyset < \{ a \} + x = F(p) + \Update{F}(p, n)
  \]
\end{proof}

\begin{remark}
  In the above proof, \( p \) stands for \textit{positive} and \( n \) stands for
  \textit{negative}. For the set difference \( A \setminus B \), we refer to \( A
  \) as the positive set and \( B \) as the negative set.
\end{remark}

\newcommand{\Bool}{\mathbb{B}}
\newcommand{\tick}{\mathbin{\top}}
\newcommand{\Tick}{\mathbb{T}}

\begin{definition}
  Let \( \Bool \) be the commutative monoid \( (\{ \bot, \top \}, \vee, \bot) \)
  where \( \top \vee x = \top \) for all \( x \in \Bool \).
\end{definition}

\noindent
For any monoid \( M \), define \( \Tick[M] = M \fprod \Bool \). We interpret
the monoid \( \Tick[M] \) as \( M \) with strata. We will use \( \_ \fprod
\tick \) (pronounced ``tick'' in this context) to denote the boundary between
two strata. As mentioned above, we will abbreviate \( \_ \fprod \tick \) to
simply \( \tick \) when unambiguous.

\begin{proposition}
  Let \( A \) be a set. Define \( I = \Tick[\Set{A} \dprod \Set{A}] \) and \( O =
  \Tick[\Set{A}] \). Then the function \( F : I \to O \) given by \
  \[
    F((A_1, B_1)
    \tick (A_2, B_2) \tick (A_3, B_3) \tick \ldots) = (A_1 \setminus \emptyset)
    \tick (A_2 \setminus B_1) \tick (A_3 \setminus B_2) \tick \ldots
  \] is a stream function.
\end{proposition}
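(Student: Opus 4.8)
The plan is to exhibit an update function $\Update{F}$ witnessing that $F$ meets \cref{def:stream-function}, defining $\Update{F}(p, -)$ on a generating set of $I$ and extending it along condition~(2b), so that condition~(2) is secured once the extension is well-defined. Recall $I = (\Set{A}\dprod\Set{A})\fprod\Bool$ is generated by ``positives'' $e^{+}_{x} = ((\{x\},\emptyset)\fprod\_)$, ``negatives'' $e^{-}_{x} = ((\emptyset,\{x\})\fprod\_)$ for $x\in A$, and the tick $\tick = (\_\fprod\top)$; its relations are that positives and negatives commute and are idempotent within a block (since $\Set{A}\dprod\Set{A}$ is commutative idempotent) and that $\tick\tick = \tick$ (since $\Bool$ is idempotent). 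Because consecutive ticks thus collapse in $I$, the displayed formula only determines a well-defined function $I \to O$ if it is read with \emph{cumulative} negation, with $k$-th output stratum $A_k \setminus \bigcup_{j<k} B_j$ --- the ``previous-stratum-only'' reading would give different values to the equal elements $(A,B)\tick(A',B')$ and $(A,B)\tick\tick(A',B')$. So read, $F$ is routinely checked to respect all relations of $I$. (Note $\Img{F}\subseteq\Set{A}\fprod\Bool$ is in general neither left-cancellative nor idempotent, so \propref{prop:left-cancel-stream-function} and \propref{prop:idempontent-stream-function} do not apply.)

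To each $p$ I attach a single piece of state, $F_p \in \Set{A}$ --- the union of the negatives in all strata of $p$ already closed by a tick --- defined by $F_{\id} = \emptyset$, $F_{pe^{+}_{x}} = F_p$, $F_{pe^{-}_{x}} = F_p$, $F_{p\tick} = F_p \cup c_p$, where the auxiliary $c_p$ (negatives of $p$'s current, not-yet-closed stratum) satisfies $c_{\id} = \emptyset$, $c_{pe^{+}_{x}} = c_p$, $c_{pe^{-}_{x}} = c_p \cup \{x\}$, $c_{p\tick} = \emptyset$; both recursions are easily seen to respect the relations of $I$. I then put
\[
  \Update{F}(p, e^{+}_{x}) = \{x\}\setminus F_p, \qquad
  \Update{F}(p, e^{-}_{x}) = \id_{O}, \qquad
  \Update{F}(p, \tick) = \tick,
\]
and extend to words of generators by $\Update{F}(p, g_1\cdots g_k) = \Update{F}(p, g_1)\odot\Update{F}(pg_1, g_2)\odot\cdots\odot\Update{F}(pg_1\cdots g_{k-1}, g_k)$.

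For the verification: condition~(2a) is the empty-word case, and condition~(2b) holds by construction once $\Update{F}(p, -)$ is word-independent, which reduces to one relation of $I$ at a time --- $e^{+}_{x}e^{+}_{y} = e^{+}_{y}e^{+}_{x}$ (both sides $\{x,y\}\setminus F_p$, using $F_{pe^{+}_{x}} = F_p$), $e^{+}_{x}e^{-}_{y} = e^{-}_{y}e^{+}_{x}$ (both $\{x\}\setminus F_p$, using $F_{pe^{-}_{y}} = F_p$), the remaining block relations being immediate (e.g.\ $e^{+}_{x}e^{+}_{x} = e^{+}_{x}$ reduces to $C\cup C = C$), and $\tick\tick = \tick$ collapsing to $\tick\odot\tick = \tick$ in $O$. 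For condition~(1), the standard induction on word length (condition~(1) on generators together with condition~(2b) gives condition~(1) in general) reduces everything to checking $F(pg) = F(p)\odot\Update{F}(p, g)$ for each generator $g$: appending a negative does not change the output, since negatives are subtracted only from strictly later strata; appending a tick satisfies $F(p\tick) = F(p)\cdot\tick$ as formal products; and appending a positive $x$ adjoins $\{x\}$ to the current output stratum exactly when $x\notin F_p$.

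The real obstacle is conceptual rather than computational. One must recognise that idempotence of $\Bool$ collapses ticks in $I$, which forces $F$ to use cumulative negation to be well defined at all, and one must account for the fact that an empty output stratum collapses with its neighbours, so that the condition~(1) checks for positives and for ticks each need a small case split on whether $F(p)$ already ends in a tick. With the single state variable $F_p$ in place, every one of these checks is short; \cref{thm:completeness}, via the construction in \cref{thm:decomposition}, then yields an explicit stream processor implementing $F$ if one is desired.
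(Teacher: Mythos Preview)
Your argument is internally sound, but it proves a different proposition. The well-definedness objection rests on applying the displayed formula to the non-normal word $(A,B)\tick\tick(A',B')$; but a function on a free product is standardly specified on \emph{normal forms} (alternating words of nonidentity factors, unique for each element of $I$), and under that reading the paper's previous-stratum-only $F$ is perfectly well-defined --- one first normalises $(A,B)\tick\tick(A',B')$ to $(A,B)\tick(A',B')$ and then applies the formula. Your cumulative $F$ is a genuinely distinct function: on the normal-form input $(\emptyset,\{b\})\tick(\emptyset,\{c\})\tick(\{b\},\emptyset)$ with $b\ne c$, the paper's $F$ outputs $\tick\{b\}$ while yours outputs $\tick$. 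So your proof, though correct for the cumulative stratified difference (modulo the notational slip of writing $\odot$ for the product in $O$), does not establish the proposition as stated.

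The paper's sketch stays with previous-stratum negation and accordingly tracks less state: only the negative set $B_{k-1}$ of the single most-recently-closed stratum, not your cumulative $F_p=\bigcup_{j<k}B_j$. Its $\Update{F}(p,(A',B'))=A'\setminus B_{k-1}$ is correspondingly simpler. Structurally the two arguments are parallel --- both define $\Update{F}$ on generators, extend along condition~(2b), and must check that the extension respects the relations of $I$ (including $\tick\tick=\tick$, which holds because $\tick\cdot\tick=\tick$ in $O$ as well). You spell these checks out carefully; the paper leaves them implicit. Your remark that the condition~(1) verification requires a small case split on whether $F(p)$ already ends in a tick applies equally to the paper's version and is a genuine detail the sketch omits.
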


\begin{proofsketch}
  It is easy to check that \( \Update{F} : I \times I \to O \) defined on the
  generator \( x \in I \) \
  \[
    \Update{F}(\ldots \tick (A_1, B_1) \tick (A_2, B_2),
    x) = \begin{cases}
      \id                & \text{if \( x = \id \)}          \\
      \tick              & \text{if \( x = \tick \)}        \\
      A_2' \setminus B_1 & \text{if \( x = (A_2', B_2') \)} \\
    \end{cases}
  \]
  satisfies the conditions in \cref{def:stream-function}.
\end{proofsketch}

\noindent
Another way of modeling stratified set difference is as a stream function which
accepts a \textit{list} of set pairs and outputs a list of stratified
differences.

\begin{proposition}
  Let \( A \) be a set. Define \( I' = \List[\Set{A} \dprod \Set{A}] \) and \( O'
  = \List[\Set{A}] \). Then the function \( G : I' \to O' \) given by \
  \[
    G([(A_1, B_1), (A_2, B_2), (A_3, B_3), \ldots])
    = [(A_1 \setminus \emptyset), (A_2 \setminus B_1), (A_3 \setminus B_2), \ldots]
  \]
  is a stream function.
\end{proposition}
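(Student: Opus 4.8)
\emph{Proof proposal.}
The plan is to mirror the proof of the preceding $\Tick$-separated proposition: exhibit an explicit update function $\Update{G}$ and check the three conditions of \cref{def:stream-function}. The key observation is that the only feature of an already-processed prefix $p$ that influences how fresh input is processed is the negative set of the \emph{last} pair in $p$ (or $\emptyset$ when $p = []$). So I would first introduce a helper $\ell : I' \to \Set{A}$ defined by $\ell([]) = \emptyset$ and $\ell(q \concat [(A', B')]) = B'$, i.e.\ $\ell$ reads off the second component of the final list element.

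As noted in the discussion after \cref{def:stream-function}, if we define the update function on the generators of the domain in its second argument then condition~(2) is automatic. The generators of $I' = \List[\Set{A} \dprod \Set{A}]$ are the singleton lists, so I would set
\[
  \Update{G}(p,\, [(A', B')]) \defeq [\, A' \setminus \ell(p) \,]
\]
and extend to an arbitrary second argument using the (unique) factorization of free-monoid elements into generators — i.e.\ by \emph{forcing} \cref{eqn:regular-mul}. Because $\List[\Set{A} \dprod \Set{A}]$ is free, this extension is well defined with no coherence obligation, and both \cref{eqn:regular-id} and \cref{eqn:regular-mul} hold by construction; in particular \cref{eqn:regular-id} gives condition~(2a) immediately. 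Unwinding the extension yields the expected closed form $\Update{G}(p,\, [(A'_1, B'_1), \ldots, (A'_m, B'_m)]) = [\, A'_1 \setminus \ell(p),\; A'_2 \setminus B'_1,\; \ldots,\; A'_m \setminus B'_{m-1} \,]$.

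It remains to verify condition~(1), $G(p \concat a) = G(p) \concat \Update{G}(p, a)$. It suffices to check this when $a$ is a generator $[(A', B')]$: by inspection of the definition of $G$, appending $(A', B')$ to $p$ appends exactly $A' \setminus \ell(p)$ to $G(p)$ (the case $p = []$ handled separately, where $\ell([]) = \emptyset$ matches the leading term $A_1 \setminus \emptyset$ of $G$). Condition~(1) for general $a$ then follows by induction on the length of $a$, using \cref{eqn:regular-mul} for the step: if $G(p \concat u) = G(p) \concat \Update{G}(p, u)$ and the generator case holds for every prefix, then $G(p \concat u \concat [(A', B')]) = G(p \concat u) \concat \Update{G}(p \concat u, [(A', B')]) = G(p) \concat \Update{G}(p, u) \concat \Update{G}(p \concat u, [(A', B')]) = G(p) \concat \Update{G}(p, u \concat [(A', B')])$. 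I do not expect any genuine obstacle: this is the list-valued analogue of the proposition just proved, and every step is bookkeeping; the only point needing a little care is the empty-prefix boundary — getting $\ell([]) = \emptyset$ and the induction base right so the very first stratified difference is $A_1 \setminus \emptyset$ rather than undefined.
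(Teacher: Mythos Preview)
Your proposal is correct and shares the same core idea as the paper's sketch: define \( \Update{G} \) on singleton inputs via the ``last negative set'' of the prefix, and observe that condition~(1) is then routine. The difference is in how you discharge condition~(2). You extend \( \Update{G} \) from generators by \emph{forcing} \cref{eqn:regular-mul}, using the freeness of \( I' = \List[\Set{A}\dprod\Set{A}] \) to guarantee well-definedness, and then verify condition~(1) by induction on the length of \( a \). The paper instead appeals to the structure of the \emph{codomain}: since \( O' = \List[\Set{A}] \) is left-cancellative, \propref{prop:left-cancel-stream-function} yields condition~(2) automatically from condition~(1). The paper's route is a one-line invocation once condition~(1) is checked; your route is more self-contained (it does not need the left-cancellativity proposition) and works even when the output monoid is not left-cancellative. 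Either is fine here.
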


\begin{proofsketch}
  Define \( \Update{G} : I' \times I' \to O' \) on the generator \( [(A_2, B_2)]
  \in I' \) by \
  \[
    \Update{O}([\ldots, (A_1, B_1)], [(A_2, B_2)]) = [A_2 \setminus B_1]
  \]
  Again, it is easy to check that \( \Update{G} \) satisfies condition~(1) in
  \cref{def:stream-function}. Because \( O' \) is a left-cancellative monoid,
  condition~(2) is automatically satisifed by
  \propref{prop:left-cancel-stream-function}.
\end{proofsketch}

\noindent
The main difference between \( F \) and \( G \) is in their incremental
behavior. A stream processor that implements \( F \), upon input \( (a, \id)
\in I \), can emit \( (a, \id) \) immediately (if \( a \) is not in its current
negative set). On the other hand, a stream processor that implements \( G \)
must wait for its next input \( [(A, B)] \in I' \) to become ``sealed'' before
it can emit any output.

Both models are useful in different situations. If a stream processor should be
able to emit output asynchronously within a stratum, then one should model its
input as \( \Tick[M] \). On the other hand, if a stream processor cannot emit
any useful output until a tick arrives, it may be more convenient to model its
input as a list.

\subsection{Feedback composition}
\label{sec:loops}

Many stream programs form dataflow cycles; that is, they feed their output back
into an input. For example, to implement recursive inference, Datalog engines
send the computed set of output facts back into an input until the output
reaches a fixed point. To model these programs, we define the feedback
composition of stream functions and stream processors.

Suppose \( F : M \times U \to N \times U \) is a stream function. Fix some
input \( m \in M \), and consider the sequence of tuples recursively defined by
\
\begin{align}
  \label{eqn:fix-sequence}
  x_0 = F(\id),                                         \hspace{0.5em}
  x_1 = \Update{F}(\id, y_1),                           \hspace{0.5em}
  x_2 = \Update{F}(y_1, y_2),                           \hspace{0.5em}
  \ldots                                                \hspace{0.5em}
  x_{i} = \Update{F}(y_1 y_2 \ldots y_{i-1}, y_{i})     \hspace{0.5em}
  \tag{fix sequence} \\
  \text{where } (n_i, u_i) = x_i \text{ and } y_i = (m, u_{i-1}) \nonumber
\end{align}

\noindent
That is, we keep feeding \( F \)'s feedback output \( u_i \) back into its
input (replacing \( n_i \) with \( m \) each time; see the definition of \( y_i
\)). One would hope that this process reaches a fixed point for \( u \): i.e.
there is some index \( k \in \Nat \) such that \( u_{k+1} = u_{k} \). Then we
could simply emit \( n_{k+1} \). Unfortunately, such a fixed point does not
always exist. A simple example is the stream function \( F : \Nat \dprod \Nat
\to \Nat \dprod \Nat \) given by \( F(m, u) = (u, u + m) \), which reaches a
fixed point if and only if \( m = 0 \).

Even if a fixed point does exist for every \( m \in M \), there is an
additional complication we must account for if we want to model real stream
programs. Suppose we have a stream processor \( \processor{} \) whose semantics
is the fixed point of \( F \). What happens if additional input \( m' \)
arrives while \( \processor{} \) is in the middle of computing the fixed point
for \( m \)? Should \( \processor{} \) block until it is finished computing the
fixed point for \( m \)? Or, can it accept \( m' \) immediately and continue
its fixed point computation? Some ``fully asynchronous'' stream programs (for
example, monotonic Datalog) can accept \( m' \) immediately and will still
compute the same result. Under what conditions is it correct to accept \( m' \)
immediately? This is a nontrivial question and we leave it to future work.

Instead, we define a stratified approximation of the above fixed point. Above,
we fixed a single \( m \) and ran the fixed point computation on \( m \). In
the stratified approximation, the operator ingests \( m_1 \); then it computes
one round of the fixed point; then it ingests \( m_2 \); computes another
round, etc.

\begin{definition}
  \label[definition]{def:loop-semantics}
  Let \( F : M \dprod U \to N \dprod U \) be a stream function.
  Fix a sequence \( m_1, m_2, \ldots, m_k \) of inputs and consider the
  sequence of tuples
  \
  \begin{align*}
    x_0 = F(\id),                                         \hspace{0.5em}
    x_1 = \Update{F}(\id, y_1),                           \hspace{0.5em}
    x_2 = \Update{F}(y_1, y_2),                           \hspace{0.5em}
    \ldots                                                \hspace{0.5em}
    x_{i} = \Update{F}(y_1 y_2 \ldots y_{i-1}, y_{i})     \hspace{0.5em}
    \\
    \text{where } (n_i, u_i) = x_i \text{ and } y_i = (m_i, u_{i-1})
  \end{align*}
  We define \( \loopfn F : \List[M] \to \List[N] \) to be the stream function
  where
  \
  \[
    (\loopfn F)([m_1, m_2, \ldots, m_k]) \defeq [n_0, n_1, n_2, \ldots, n_k].
  \]
\end{definition}

\noindent
Note the only difference between the sequence in \cref{def:loop-semantics} and
\cref{eqn:fix-sequence} is in the definition of \( y_i \), where instead of
holding \( m \) constant, we give the loop a ``synchronization point'' to
ingest \( m_i \). If we set \( m_1 = m_2 = \ldots = m_k = m \) then we recover
the fixed point definition, where each arrival of \( m \) signals the loop to
execute another round in the fixed point computation.

\begin{example}
  We can model monotone Datalog using \( \loopfn \). Let \( E \) and \( I \) be
  monoids representing the extensional database and the intensional database,
  respectively. The immediate consequence operator can be modeled as a (monotone)
  stream function \( T_P : E \dprod I \to I \dprod I \), and the (recursive)
  Datalog program is simply \( \loopfn T_P : \List[E] \to \List[I] \). Note that
  \( \List[E] \) are \textit{batches} of the extensional database; depending on
  the specific implementation of \( T_P \), these batches may need to be
  monotonically increasing to ensure the output is correct.
\end{example}

\begin{definition}
  Suppose \( \processor{} : M \dprod U \leadsto N \dprod U \) is a stream
  processor with \( \processor{} = (S, f, s_\id, o_\id) \). Let \( (n_0, u_0) =
  o_\id \). Define the stream processor \( \loopfn \processor{}: \List[M]
  \leadsto \List[N] \defeq (S \times U, g, (s_\id, u_0), [n_0]) \) where \( g :
  \List[M] \to \State{S \times U}{\List[N]} \) is defined on the generators by
  \[
    g([m]) = (s, u) \mapsto \text{let } (s', (n, u')) = f(m, u)(s) \text{ in } ((s', u'), [n])
  \]
  Observe \( g \) is a homomorphism as we defined it on the generators of the
  free monoid.
\end{definition}

\noindent
The \( \loopfn \) combinator on a processor \( \processor{} : M \dprod U
\leadsto N \dprod U \) constructs a new processor that holds internal state \(
S \times U \). When it receives a batch \( m_j \), it feeds input \( m_j \) and
feedback \( u_{j-1} \) to \( \processor{} \), yielding the result \( n_i \). It
then updates its internal state to \( \processor{} \)'s new state and the new
feedback \( u_j \).

As usual, the following proposition connects \( \loopfn \)'s syntax to the
semantics.

\begin{proposition}
  Let \( \processor{} : M \dprod U \leadsto N \dprod U \) be a stream processor.
  Then
  \[
    \denote{\loopfn \processor{}} = \loopfn\, \denote{\processor{}}
  \]
\end{proposition}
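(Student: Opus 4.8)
The plan is to evaluate the semantics of $\loopfn \processor{}$ on an arbitrary input $[m_1, \ldots, m_k]$ by unfolding the homomorphism $g$, and then to check term by term that the list it produces is the fix sequence $[n_0, n_1, \ldots, n_k]$ of \cref{def:loop-semantics}. Throughout, write $F = \denote{\processor{}}$ and equip it with the update function $\Update{F}(p, a) = o_{p \leadsto a}$ exhibited in the proof of \fullref{thm:soundness}, which we take as the $\Update{F}$ appearing in \cref{def:loop-semantics}.

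First I would unfold the left-hand side. Since $\loopfn \processor{} = (S \dprod U, g, (s_\id, u_0), [n_0])$ with $(n_0, u_0) = o_\id$, and $g$ is a homomorphism out of the free monoid $\List[M]$, we have $g([m_1, \ldots, m_k]) = g([m_1]) \odot \cdots \odot g([m_k])$. Define a running sequence by $s_0 = s_\id$ and $(s_i, (n_i, u_i)) = f(m_i, u_{i-1})(s_{i-1})$ for $i \geq 1$. Expanding the definitions of $\odot$ and of $g$ on generators, a short induction on $k$ shows that running this product on $(s_\id, u_0)$ threads the state pair through the factors one at a time, giving $g([m_1, \ldots, m_k])((s_\id, u_0)) = \bigl((s_k, u_k),\, [n_1, \ldots, n_k]\bigr)$; consequently $\denote{\loopfn \processor{}}([m_1, \ldots, m_k]) = [n_0] \concat [n_1, \ldots, n_k] = [n_0, n_1, \ldots, n_k]$ (this also covers $k = 0$, where $g([\,]) = \id$).

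Next I would show that the running sequence $(n_i, u_i)$ above is exactly the fix sequence of \cref{def:loop-semantics}. This is an induction on $i$ maintaining two facts: (a) $(n_i, u_i) = x_i$, and (b) $\st\bigl(f(y_1 \cdots y_i)(s_\id)\bigr) = s_i$, where $y_i = (m_i, u_{i-1})$. The base case is $x_0 = F(\id) = o_\id = (n_0, u_0)$, using $\out(f(\id)(s_\id)) = \id$. For the step, (a) at $i-1$ makes $y_i$ agree with the fix-sequence input, whence $x_i = \Update{F}(y_1 \cdots y_{i-1}, y_i) = o_{(y_1 \cdots y_{i-1}) \leadsto y_i} = \out\bigl(f(y_i)(s_{i-1})\bigr)$ by (b) at $i-1$, and this equals $(n_i, u_i)$ by construction; threading one further factor through the $\odot$-product then gives (b) at $i$. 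The identity needed for (b) is the elementary $\st\bigl((\alpha \odot \beta)(s)\bigr) = \st\bigl(\beta(\st(\alpha(s)))\bigr)$, applied along the factorization $f(y_1 \cdots y_{i-1}) = f(y_1) \odot \cdots \odot f(y_{i-1})$.

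Combining the two parts yields $\denote{\loopfn \processor{}}([m_1, \ldots, m_k]) = (\loopfn F)([m_1, \ldots, m_k])$ for every input, hence equality of functions. The main obstacle I anticipate is pure bookkeeping: reconciling the two encodings of the feedback channel — $g$ carries the feedback value inside its state component $U$, while the fix sequence threads it externally as the second component of $y_i$ — which is what forces the simultaneous tracking of outputs, feedback values, and intermediate states in the second induction, with the homomorphism property of $f$ doing the work of identifying $\st(f(y_1 \cdots y_{i-1})(s_\id))$ with the running state $s_{i-1}$.
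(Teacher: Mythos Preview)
Your proposal is correct and follows essentially the same approach as the paper: unfold $g$ on the factorization $[m_1,\ldots,m_k]$ to produce the list of incremental outputs, then identify that sequence with the fix sequence of \cref{def:loop-semantics} using $\Update{F}(p,a) = o_{p \leadsto a}$ and the homomorphism property of $f$. Your treatment is in fact more careful than the paper's, which suppresses the two inductions and slightly overloads the $s_{K_j}$ notation; your explicit simultaneous invariant (a) $(n_i,u_i)=x_i$ and (b) $\st(f(y_1\cdots y_i)(s_\id))=s_i$ makes the state-threading argument watertight.
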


\begin{proof}
  Let \( K_j = [m_1, m_2, \ldots, m_j] \) with \( K_0 = \id \) and \( n_j \) and
  \( u_j \) be as defined in \cref{def:loop-semantics} with \( F =
  \denote{\processor{}} \) and \( \Update{F}(p, a) = f(a)(s_p) \). Define \( G_j
  = f(m_j, u_{j-1})(s_{K_{j-1}}) \). That is, \( G_j \) is a tuple containing the
  updated state and incremental output of \( \processor{} \) upon input \( m_j \)
  and feedback \( u_{j-1} \) after having ingested the prefix \( K_{j-1} \). We
  verify \
  \begin{align*}
    \denote{\loopfn \processor{}}([m_1, m_2, \ldots, m_k])
     & = [n_0] \concat [p_1(\out(G_1))] \concat [p_1(\out(G_2))] \concat \ldots \concat [p_1(\out(G_k))]                                    \\
     & = [n_0] \concat [p_1(o_{0 \leadsto K_1})] \concat [p_1(o_{K_1 \leadsto K_2})] \concat \ldots \concat [p_1(o_{K_{k-1} \leadsto K_k})] \\
     & = [n_0] \concat [n_1] \concat [n_2] \concat \ldots \concat [n_k]                                                                     \\
     & = (\loopfn \denote{\processor{}})([m_1, m_2, \ldots, m_k])
  \end{align*}
\end{proof}

\noindent
The next two propositions say that we can ``extract'' pure processors from a
loop if they don't interact with the loop's feedback channel (or, in the other
direction, we can ``pull'' them in).

\begin{proposition}[Left tightening]
  \label[proposition]{prop:left-tightening}
  Let \( \processor{} : M \dprod U \leadsto N \dprod U \) be a stream
  processor and \( f : L \to M \) be a homomorphism. Then we have
  \[
    \pure f \pipe \loopfn \processor{}
    \sim \loopfn\, (\pure\, (f \dprod \idfn) \pipe \processor{}).
  \]
\end{proposition}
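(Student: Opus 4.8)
The plan is to prove the stronger statement $\denote{\pure f \pipe \loopfn \processor{}} = \denote{\loopfn\,(\pure\,(f \dprod \idfn) \pipe \processor{})}$ and then conclude $\sim$. By \propref{prop:functor} (functoriality of $\denote{-}$), \propref{prop:pure}, and the proposition that $\denote{\loopfn \processor{}} = \loopfn\,\denote{\processor{}}$, the left side denotes $\hat f \pipe \loopfn G$ and the right side denotes $\loopfn\bigl((f\dprod\idfn)\pipe G\bigr)$, where $G = \denote{\processor{}}$ and $\hat f : \List[L] \to \List[M]$ is the list-wise lift of $f$ (i.e.\ $\hat f([\ell_1,\dots,\ell_k]) = [f(\ell_1),\dots,f(\ell_k)]$); this $\hat f$ is the homomorphism implicitly meant by ``$\pure f$'' in a context that must supply a processor $\List[L]\leadsto\List[M]$. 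So it suffices to prove, purely at the level of stream functions,
\[
  \hat f \pipe \loopfn G \;=\; \loopfn\bigl((f\dprod\idfn)\pipe G\bigr).
\]

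First I would record that $\phi \defeq f\dprod\idfn$ is a homomorphism (a product of homomorphisms), hence $H \defeq \phi \pipe G$ is again a stream function, and that $\Update{H}(p,a) = \Update{G}(\phi(p),\phi(a))$ is a valid update function for it: conditions~(1), (2a), (2b) of \cref{def:stream-function} for this $\Update{H}$ follow mechanically from the corresponding conditions for $\Update{G}$ together with $\phi(\id)=\id$ and $\phi(ab)=\phi(a)\phi(b)$. I would also note that $\phi$ acts as $f$ on the $L$/$M$ coordinate and as the identity on the feedback coordinate $U$, so $\phi((\ell,u)) = (f(\ell),u)$.

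Next I would unfold \cref{def:loop-semantics} on both sides. Writing $(n_i^G,u_i^G)=x_i^G$, $y_i^G=(m_i,u_{i-1}^G)$ for the fix-sequence of $\loopfn G$ on input $[m_1,\dots,m_k]$, and $(n_i^H,u_i^H)=x_i^H$, $y_i^H=(\ell_i,u_{i-1}^H)$ for $\loopfn H$ on input $[\ell_1,\dots,\ell_k]$, I would take $m_i \defeq f(\ell_i)$ and prove by induction on $i$ that $n_i^H=n_i^G$ and $u_i^H=u_i^G$. The base case is $x_0^H = H(\id) = G(\phi(\id)) = G(\id) = x_0^G$. For the step, the inductive hypothesis gives $u_{j-1}^H=u_{j-1}^G$ for $j\le i$, so $\phi(y_j^H)=\phi((\ell_j,u_{j-1}^H))=(f(\ell_j),u_{j-1}^G)=y_j^G$; since $\phi$ is a homomorphism this lifts to $\phi(y_1^H\cdots y_{i-1}^H)=y_1^G\cdots y_{i-1}^G$, whence
\[
  x_i^H = \Update{H}(y_1^H\cdots y_{i-1}^H,\,y_i^H) = \Update{G}(y_1^G\cdots y_{i-1}^G,\,y_i^G) = x_i^G .
\]
Therefore $(\loopfn H)([\ell_1,\dots,\ell_k]) = [n_0^G,\dots,n_k^G] = (\loopfn G)([f(\ell_1),\dots,f(\ell_k)]) = (\hat f \pipe \loopfn G)([\ell_1,\dots,\ell_k])$, the required identity.

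I expect the main obstacle to be the bookkeeping around update functions: \cref{def:loop-semantics} defines $\loopfn$ relative to a choice of $\Update{(-)}$, and the proposition $\denote{\loopfn\processor{}}=\loopfn\,\denote{\processor{}}$ uses the update induced by the processor, namely $\Update{(-)}(p,a)=o_{p\leadsto a}$. So on the right I must check that the induced update of the composite processor $\pure(f\dprod\idfn)\pipe\processor{}$ really is $\Update{G}(\phi(p),\phi(a))$ with $\Update{G}$ the induced update of $\processor{}$ --- this is a short computation of $o_{p\leadsto a}$ for the composite (most cleanly by first fusing $\pure(f\dprod\idfn)\pipe\processor{}$ via \propref{prop:fusion} into $(S,(f\dprod\idfn)\pipe g,s_{\id},o_{\id})$, where $\processor{}=(S,g,s_{\id},o_{\id})$, and reading off its state and incremental output). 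The list-wise lift hidden in ``$\pure f$'' is a second, minor point worth stating explicitly. Everything else is routine unfolding.
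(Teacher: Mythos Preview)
Your proposal is correct and follows essentially the same approach the paper indicates (it only says the proof is ``extremely similar'' to that of \propref{prop:right-tightening}): unfold \cref{def:loop-semantics} on both sides and verify by induction that the fix sequences coincide. You are, if anything, more careful than the paper about two points it glosses over---the implicit list-wise lifting of \( f \) in \( \pure f \), and pinning down the induced \( \Update{} \) of the composite processor via \propref{prop:fusion}---and both are handled correctly.
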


\begin{propositionapxrep}[Right tightening]
  \label{prop:right-tightening}
  Let \( \processor{} : M \dprod U \leadsto N \dprod U \) be a stream
  processor and \( g : N \to P \) be a homomorphism. Then we have
  \[
    \loopfn \processor{} \pipe \pure g
    \sim \loopfn\, (\processor{} \pipe \pure\, (g \dprod \idfn)).
  \]
\end{propositionapxrep}

We prove \propref{prop:right-tightening} in the
\hyperref[prf:right-tightening]{Appendix.} The proof of
\propref{prop:left-tightening} is extremely similar.

\begin{proof}
  \label{prf:right-tightening}
  Fix some input \( K = [m_1, m_2, \ldots, m_k] \) and let
  \( F = \denote{\processor{}} \) and \( \Update{F}(p, a) = f_{\processor{}}(a)(s_p) \).
  By \cref{def:loop-semantics}
  the semantics
  \( \denote{\loopfn \processor{} \pipe \pure g}(K) = (\denote{\loopfn \processor{}} \pipe g)(K) \)
  is the sequence
  \( [g(n_0), g(n_1), g(n_2), \ldots, g(n_k)] \) satisfying
  \
  \begin{align*}
    x_0 = F(\id),                                         \hspace{0.5em}
    x_1 = \Update{F}(\id, y_1),                           \hspace{0.5em}
    x_2 = \Update{F}(y_1, y_2),                           \hspace{0.5em}
    \ldots                                                \hspace{0.5em}
    x_{i} = \Update{F}(y_1 y_2 \ldots y_{i-1}, y_{i})     \hspace{0.5em}
    \\
    \text{where } (n_i, u_i) = x_i \text{ and } y_i = (m_i, u_{i-1})  \nonumber
  \end{align*}
  On the other hand, the semantics
  \( \denote{\loopfn\, (\processor{} \pipe \pure\, (g \dprod \idfn))}(K)
  = (\loopfn\, \denote{\processor{} \pipe \pure\, (g \dprod \idfn)})(K)  \)
  is the sequence \( [n_0', n_1', n_2', \ldots, n_k'] \) satisfying
  \begin{align*}
    x_0' = G(\id),                                          \hspace{0.5em}
    x_1' = \Update{G}(\id, y_1'),                           \hspace{0.5em}
    x_2' = \Update{G}(y_1', y_2'),                          \hspace{0.5em}
    \ldots                                                  \hspace{0.5em}
    x_{i}' = \Update{G}(y_1' y_2' \ldots y_{i-1}', y_{i}')  \hspace{0.5em}
    \\
    \text{where } (n_i', u_i') = x_i' \text{ and } y_i' = (m_i', u_{i-1}')  \nonumber
  \end{align*}
  where \( G = F \pipe (f \dprod \idfn) \) and
  \( \Update{G} = \Update{F} \pipe (f \dprod \idfn) \).
  It is clear that for all \( j \), we have \( n_j' = g(n_j) \).

\end{proof}

\subsection{Application: Transmission Control Protocol}
\label{sec:tcp}

Using feedback composition, we sketch a simplified model of TCP with
retransmission. We also sketch a proof that the receiver can reconstruct the
sender's stream, assuming eventual delivery. (We only model reliable delivery;
we do not model congestion control.)

\newcommand{\Pack}{\mathrm{Pack}}
\newcommand{\Ret}{\mathrm{Re}}

Let \( T \) be a set. Define the monoids \( \Pack = \Tick[\List[\Nat \times T]]
\) and \( \Ret = \Tick[\List[\Nat]] \). (Note that \( \times \) in \( \Pack \)
is the Cartesian product, not the direct product.) We interpret elements of \(
T \) as data to be sent over the network, \( \Pack \) as an ordered sequence of
data packets with sequence numbers, and \( \Ret \) as retransmission requests.
Ticks represent the passage of logical time.

Next, we define stream processors whose composition will represent a
(non-looped) TCP system: \ \begingroup\setlength{\jot}{0pt}
\begin{align*}
  \sender    : & \, \Tick[\List[T]] \dprod \Ret \leadsto \Pack \dprod \Ret                                                                                         \\
               & \, \text{stores in state all input data and their assigned sequence numbers}                                                                      \\
               & \, \text{on input \( ([m], \id) \), assigns the next sequence number \( i \) to \( m \) and emits \( ([(i, m)], \id) \) }                         \\
               & \, \text{on input \( (\id, [i]) \), retransmits packet \( i \), emitting \( ([(i, m)], \id) \) }                                                  \\
               & \, \text{on ticks \( (\tick, \id) \) or \( (\id, \tick) \), forwards the tick emitting \( (\tick, \id) \) }                                       \\[4pt]
  \receiver  : & \, \Pack \times \Ret \leadsto \List[T] \dprod \Ret                                                                                                \\
               & \, \text{stores in state all received data and the next unreceived sequence number \( n \)}                                                       \\
               & \, \text{on input \( ([(i, m_i)], \id) \), emits \( ([m_i, m_{i+1}, \ldots, m_{n' - 1}], [n']) \) if \( i = n \) and sets \( n \leftarrow n' \) } \\
               & \, \text{otherwise (including on ticks), sends retransmission request \( (\id, [n]) \) }                                                          \\[4pt]
  \network_1 : & \, \Pack \leadsto \Pack                                                                                                                           \\[4pt]
  \network_2 : & \, \Ret \leadsto \Ret
\end{align*}
\endgroup

\noindent
where \( \network_1 \) and \( \network_2 \) are processors which drop (replace
with \( \tick \)), reorder, and delay packets arbitrarily, except for each
packet \( p_i \) with sequence number \( i \) there is some \( t_i \in \Nat \)
such that if \( \network_1 \) or \( \network_2 \) processes at least \( t_i \)
packets it must immediately forward \( p_i \) upon receipt.

The composition of the above processors is the processor \( \processor{} :
\Tick[\List[T]] \dprod \Ret \leadsto \List[T] \dprod \Ret \): \
\[
  \processor{} = \sender \pipe\, (\network_1 \dprod \pure \idfn) \pipe \receiver \pipe\, (\pure \idfn \dprod \network_2)
\]
and the TCP system is given by the processor \( \loopfn \processor{} :
\Tick[\List[T]] \leadsto \List[T] \).

\begin{proposition}
  Fix some input sequence \( L = [t_1, t_2, \ldots, t_k] \). There exists some
  natural \( N \in \Nat \) such that \( \denote{\loopfn
    \processor{}}(L\underbrace{\top\top\ldots\top}_{N \text{ times}}) = L \).
\end{proposition}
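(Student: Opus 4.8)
The plan is to unfold $\loopfn$ and follow a single monotone ``progress'' quantity around the feedback loop. By the semantics of $\loopfn$ (\cref{def:loop-semantics}) together with the identity $\denote{\loopfn \processor{}} = \loopfn\, \denote{\processor{}}$ established above, the value $\denote{\loopfn \processor{}}(L\,\top^{N})$ is read off from the fix-sequence $x_0, x_1, \dots$ for $F = \denote{\processor{}}$ applied to the $k$ data batches comprising $L$ followed by $N$ tick inputs: it is the list $[n_0, n_1, \dots] \in \List[\List[T]]$ of the receiver's per-round outputs, threaded through the feedback channel $u_i \in \Ret$. I read the equation ``$\dots = L$'' via the canonical flattening homomorphism $\List[\List[T]] \to \List[T]$; that is, I will show that for $N$ large enough the concatenation $n_0 \concat n_1 \concat \cdots$ equals $L$, which is exactly the assertion that the receiver reconstructs the sender's stream.

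First I would record two structural invariants. (i) Once all of $L$ has been ingested, the sender's state holds precisely the packets with sequence numbers $1, \dots, k$ carrying the data of $L$, with maximum assigned sequence number $k$: each datum reaches the sender exactly once and in order, and retransmission requests arriving on the feedback channel only re-emit already-stored packets, never altering the numbering. (ii) Let $n^{(i)}$ be the receiver's next-unreceived sequence number after round $i$. Then $n^{(i)}$ is non-decreasing in $i$, it never exceeds $k+1$ (no packet with a larger sequence number is ever produced, by (i)), and the concatenation of the receiver's outputs through round $i$ equals the first $n^{(i)}-1$ items of $L$, since the receiver flushes each position exactly once and in order --- only when its leading gap is filled --- and emits nothing on ticks or on duplicate or out-of-order packets. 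In particular, once $n^{(i)} = k+1$ the accumulated output is already $L$, and it stays $L$ in every later round (a tick round then only provokes a retransmission request for the nonexistent packet $k+1$, which the sender, and hence the receiver, answer with nothing).

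The heart of the argument is a progress lemma: there is a finite bound $B$ such that whenever $i$ is past the ingestion of $L$ and $n^{(i)} = v \le k$, we have $n^{(i+B)} > v$ --- the receiver cannot stay stuck at $v$ for more than $B$ further rounds. I would prove this by tracing one retransmission cycle. While $n = v$, each tick round pushes a tick (or, if $\network_1$ drops a packet, a tick produced in its stead) through to the receiver, which responds with the retransmission request for $v$ on the feedback channel into $\network_2$; since a fresh copy of this request is offered to $\network_2$ on every such round, $\network_2$'s processed-packet count diverges, so by its eventual-delivery property $\network_2$ must, after boundedly many rounds, forward such a request, which becomes the feedback for the next round. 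In that round the sender retransmits packet $v$ (available by (i)); this packet enters $\network_1$, whose processed-packet count also diverges (the sender emits at least a tick into it every round), so by $\network_1$'s eventual-delivery property $\network_1$ must, after boundedly many further rounds, deliver packet $v$ to the receiver in that very round --- whereupon $i = v = n$ and $n$ advances past $v$. Since only sequence numbers $1, \dots, k$ are relevant, taking the maximum of the (finitely many) per-$v$ waiting bounds gives a uniform $B$.

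Finally I would assemble the pieces. Starting just after $L$ is ingested, invariant (ii) and the progress lemma show that $n$ strictly increases at least once every $B$ rounds as long as $n \le k$; since $n$ climbs from at least the first assigned sequence number up to $k+1$ --- at most $k$ increments --- it reaches $k+1$ within $kB$ further rounds. Taking $N := kB$ (any larger $N$ also works), invariant (ii) yields that the flattened output of $\denote{\loopfn \processor{}}(L\,\top^{N})$ is exactly $L$. I expect the main obstacle to be making the ``eventual delivery'' hypothesis precise enough to chain it around the feedback loop: one must argue both that the networks' processed-packet counters genuinely diverge (this rests on the fact that at worst a replacement tick is pushed into each network every round, so there is no real circularity, but it needs care) and that a packet being a repeated retransmission does not let a warmed-up network evade its obligation to forward it. The remaining work --- the sender and receiver invariants of step two, and the arithmetic of summing waiting rounds --- is routine bookkeeping.
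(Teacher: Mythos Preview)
Your proposal is correct and follows essentially the same approach as the paper's proof sketch: trace a retransmission cycle for the next-missing packet and invoke the eventual-delivery hypotheses on $\network_1$ and $\network_2$ to force progress, then sum the per-packet waiting times to bound $N$. Your version is considerably more careful than the paper's sketch---you make the sender/receiver state invariants and the progress lemma explicit, and you flag (and resolve via flattening) the type mismatch between $\List[\List[T]]$ and $\List[T]$ that the paper glosses over---but the underlying argument is the same.
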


\begin{proofsketch}
  By construction, each round of \( \loopfn \processor{} \) sends at least one
  message from the sender to the receiver and vice versa. Thus, on each round,
  the receiver requests retransmission of some packet \( p_i \) with sequence
  number \( i \). By assumption, both \( \network_1 \) and \( \network_2 \) must
  deliver \( p_i \) after \( t_i \) rounds; hence the sender will receive the
  request for retransmission by round \( t_i \), retransmit \( p_i \) by round \(
  t_i+1 \), and the receiver must receive \( p_i \). So a (loose) upper bound for
  \( N \) is \( N \leq \sum_{i = 1}^{k} t_i + 1 \).
\end{proofsketch}

\section{Efficient representations for State[S, N]}
\label{sec:partial-eval}

In \cref{def:stream-processor}, the homomorphism \( f : M \to \State{S}{N} \)
takes an input \( M \) to a monoid of functions \( \State{S}{N} \). But
generally, functions are harder to work with than data values---functions
generally cannot be serialized to be sent over e.g. a network, and there is
usually higher runtime cost to produce and reduce function abstractions. For
example, a typical implementation of the monoidal product \( \odot \) of \(
\State{S}{N} \) would create nested abstractions, evaluation of which require a
fully sequential reduction. Using partial evaluation, we can simplify
underneath the abstraction, but running a partial evaluator at runtime is also
expensive. Instead, we can find more efficient representations for \(
\State{S}{N} \) ahead of time by applying traditional compiler techniques.

\newcommand{\inj}{\phi}
\newcommand{\prj}{\psi}

Consider any homomorphism \( f : M \to \State{S}{N} \). We need only represent
the subset of functions \( \Img{f} \subseteq \State{S}{N} \). Then our goal is
to find a monoid \( P \) such that
\begin{enumerate}
  \item The elements of \( P \) are easily serializable
  \item The monoidal product on \( P \) is ``efficient''\footnote{ We deliberately
          leave this vague. Informally, an ``efficient'' monoidal product does meaningful
          work (unlike function composition in most programming languages, which simply
          delays doing the work of either function). }
  \item There exist homomorphisms \( \inj : \Img{f} \to P \) and \( \prj : P \to
        \Img{f} \) such that \( \inj \pipe \prj = \idfn_{\Img{f}} \).
        \begin{figure}[ht]
          \centering
          \begin{tikzcd}
            M \arrow[r, "f"]
            & \Img{f} \arrow[r, "\inj", bend left]
            & P \arrow[l, "\prj", bend left]
            & \hspace{-2em} \text{where } \inj \pipe \prj = \idfn_{\Img{f}}
          \end{tikzcd}
          \caption{Condition (3) for \( P \), a concrete representation of \( \Img{f} \)}
        \end{figure}
\end{enumerate}

\noindent
We say then that \( \Img{f} \) \textit{embeds} into \( P \). Then, concrete
implementations can manipulate and send elements of \( P \) over the network.
In most cases, given access to the source code of \( f \), we can construct \(
P \) using traditional compiler techniques. We demonstrate this by example.

\newcommand{\inv}[1]{\overline{#1}}
\newcommand{\bit}{B}

\begin{example}[Full adder]
  Let \( \bit = \{0, 1\} \) be bits. Let
  \( \processor{} : \List[\bit \times \bit] \leadsto \List[\bit] = (\bit, f, 0, \id) \)
  be the stream processor
  where \( f : \List[\bit \times \bit] \to \State{\bit}{\List[\bit]} \) is given by (on
  the generators):
  \
  \begin{align}
    \label[equation]{eq:adder}
    f([(a, b)]) =
    \begin{cases}
      \; c \mapsto (0, [c])       & \text{if \( (a, b) \) is \( (0, 0) \) }                \\
      \; c \mapsto (c, [\inv{c}]) & \text{if \( (a, b) \) is \( (0, 1) \) or \( (1, 0) \)} \\
      \; c \mapsto (1, [c])       & \text{if \( (a, b) \) is \( (1, 1) \) }
    \end{cases}
  \end{align}
  where \( \inv{0} = 1 \) and \( \inv{1} = 0 \). The application \( f([(a, b)])(c) \)
  models a full adder, where \( a \) and \( b \) are the input bits and \( c \) is
  the carry bit. Then \( \processor{} \) is a stream processor that computes
  the addition of two (arbitrary length) binary integers, whose digits
  are passed as pairs.
\end{example}

\subsection{Defunctionalization}

We can apply \textit{defunctionalization} \citep{reynolds72} to represent
functions as data. Let \( P_1 = \langle x, y, z \rangle \); i.e. \( P_1 \) is
the free monoid on three generators. Observe that \( \Img{f} \) is generated by
the functions in \cref{eq:adder}; hence we can define \( \inj_1 : \Img{f} \to P
\) by \
\begin{align*}
  \inj_1(\alpha) =
  \begin{cases}
    \; x & \text{if \( \alpha = c \mapsto (0, [c]) \)}       \\
    \; y & \text{if \( \alpha = c \mapsto (c, [\inv{c}]) \)} \\
    \; z & \text{if \( \alpha = c \mapsto (1, [c]) \)}
  \end{cases}
\end{align*}

\noindent
Define \( \prj_1 \) on the generators \( x, y, z \) as \( \inj_1^{-1} \). It is
clear that \( \inj_1 \pipe \prj_1 = \idfn_{\Img{f}} \). In a concrete
implementation, the product \( x \mul y \) of \( P_1 \) can be represented
efficiently, e.g., by constructing a new binary tree node with children \( x \)
and \( y \). Observe that elements of \( P_1 \) are simply data and can be sent
over the network; the receiving processor can apply \( \inj_1^{-1} \) to
recover elements of \( \State{\bit}{\List[\bit]} \).

Given source code for \( f \), defunctionalization allows us to construct \( P
\) in general.\footnote{ Note that for arbitrary homomorphisms \( M \to
  \State{S}{N} \), the generators of \( P \) may need to be indexed by data
  captured by the closure. In \cref{eq:adder}, we created abstractions within
  each branch to avoid capturing \( (a, b) \). } However, for \( \Img{f} \), this
encoding may not be the most efficient encoding possible: for products \( xyz
\in P \), the inverse \( \prj(xyz) \) is a product \( \alpha \odot \beta \odot
\gamma \in \State{\bit}{\List[\bit]} \), whose evaluation is a sequential
reduction (assuming we cannot reduce under abstractions).%

\subsection{Bounded static variation (The Trick)}
\label{sec:thetrick}

By applying a technique called bounded static variation, or the ``The
Trick''~\cite{jones1993partial} from partial evaluation, we can construct a
monoid \( P_2 \) whose product (unlike \( \odot \) or usual function
composition) allows both sides to potentially ``start working'' at the same
time.

The key observation is similar to that of speculative execution: we can start
computing a function's results before receiving its argument if we are willing
to guess the argument. Bounded static variation is useful when a function's
domain is small: then one can ``guess'' all possible arguments, precompute the
respective results, and then choose between them.
In our adder example, the state \( c \) (the carry bit) has only two possible
values. So, another representation of \( \alpha \in \Img{f} \) is an arity-two
tuple whose \( 0 \)-value and \( 1 \)-value hold the results \( \alpha(0) \)
and \( \alpha(1) \), respectively. Thus, let \( P_2 \) be the monoid whose
elements are tuples of the form \
\[
  \left( 0 \mapsto \bit \times \List[\bit], 1 \mapsto \bit \times \List[\bit] \right)
\]

\noindent
We need to also modify \( \odot \) for this new representation. For \( x, y \in
P_2 \), define \( x \odot y \) by \
\[
  \left(
  \hspace{0.3em}
  \begin{aligned}[c]
    0 \mapsto \; & \text{let } c = 0 \text{ in}         \\
                 & \text{let } (c, a) = x(c) \text{ in} \\
                 & \text{let } (c, b) = y(c) \text{ in} \\
                 & (c, a \doubleplus b)                 \\
  \end{aligned}
  \begin{aligned}[c]\; \\ \; \\ \; \\ \;\; , \;\;\; \\ \end{aligned}
  \begin{aligned}[c]
    1 \mapsto \; & \text{let } c = 1 \text{ in}         \\
                 & \text{let } (c, a) = x(c) \text{ in} \\
                 & \text{let } (c, b) = y(c) \text{ in} \\
                 & (c, a \doubleplus b)                 \\
  \end{aligned} \\
  \hspace{0.3em}
  \right)
\]
where the ``call'' syntax \( x(c) \) denotes the selection of the \( 0 \)-value
or the \( 1 \)-value of the tuple \( x \) if \( c = 0 \) or \( c = 1 \)
respectively (and same for \( y(c) \)).

Observe that \( \odot \) in \( P_2 \) is \( \odot \) in \(
\State{\bit}{\List[\bit]} \) specialized to concrete inputs for \( \bit \) (\(
c = 0 \), \( c = 1 \)). Finally, we define \( \inj_2 : \Img{f} \to P_2 \) and
\( \prj_2 : P_2 \to \Img{f} \) as: \
\begin{align*}
  \inj_2(\alpha)  = ( 0 \mapsto \alpha(0), 1 \mapsto \alpha(1) )
   &  & \prj_2(x) = c \mapsto x(c)
\end{align*}

\noindent
Then, for example, we have
\[
  (f \pipe \phi_2)([(0, 1), (0, 0), (1, 1)]) = (0 \mapsto (1, [1, 0, 0]), 1 \mapsto (1, [0, 1, 0]))
\]

\noindent
Beware the left pair \( (0, 1) \) in the list are the least significant bits.
The above equation states that the addition of the binary numbers \( 100 \) and
\( 101 \) is \( 001 \) with carry \( 1 \) if the initial carry was \( 0 \), and
\( 010 \) with carry \( 1 \) if the initial carry was \( 1 \).

We finally note that products in \( P_2 \) are about twice as expensive as
products in the original monoid \( \Img{f} \) followed by evaluation, as in \(
P_2 \), we always compute the output for both possible starting states. But \(
P_2 \) has an advantage in parallel systems since multiple processors can
independently compute products in \( P_2 \) and a merging processor can combine
those outputs quickly.

\begin{remark}
  Witness that if we apply the above construction to elements of \( \State{\{
    \unit \}}{N} \), then we obtain a monoid \( (( \unit \mapsto \unit \times N ),
  \odot, ( \unit \mapsto (\unit, \id) )) \) which is isomorphic to \( N \).
  Hence, for stateless processors, one possible representation is exactly the
  output monoid.
\end{remark}

\section{Related work}

\subsection{Stream transducers and incremental computation}

\defcitealias{mamouras}{Mamouras' \emph{Semantic foundations for deterministic dataflow and stream processing} \citeyearpar{mamouras}}
\defcitealias{mario-thesis}{Alvarez-Picallo's \emph{Change actions: from incremental computation to discrete derivatives} \citeyearpar{mario-thesis}}

Our definitions of {stream function}~(\ref{def:stream-function}) and {stream
    processor}~(\ref{def:stream-processor}) combine two crucial pieces of prior
work: \citetalias{mamouras} and \citetalias{mario-thesis}. We follow Mamouras
closely in using monoids to model streaming data and in defining two related
notions of stream transformers: one declarative and functional---Mamouras'
{stream transduction}~[p.~10, defn.~8], our {stream function}---the other a
state machine which can implement the first---Mamouras' {stream
transducer}~[p.~12, defn.~14], our {stream processor}.

However, our definitions are stricter than Mamouras', because we impose
Alvarez-Picallo's regularity conditions~[p.~27, defn.~3.1.3; our
\crefrange{eqn:regular-id}{eqn:regular-mul}]. This lets us replace Mamouras'
complex coherence conditions~[p.~14, defn.~20] with the well-known monoid
homomorphism laws. It also connects stream processing to the rich theory of
incremental computation explored in Alvarez-Picallo's thesis and related
work~\citep{DBLP:phd/dnb/Giarrusso20,DBLP:conf/esop/Alvarez-Picallo19,DBLP:conf/pepm/Liu24},
which can be exploited for efficiency optimizations. Finally, it reuses
functional constructions (the State/Writer monads), making it more suitable for
embedding into functional languages.

We must also note that our stream functions are an instance of
Alvarez-Picallo's differential maps~\citep[p.~31, defn.~3.1.5]{mario-thesis}.
Alvarez-Picallo's definition is more general because he considers sets equipped
with monoid actions; like Mamouras, we consider only monoids acting on
themselves. However, Alvarez-Picallo deals only with the declarative/functional
side and has no equivalent of stream transducers/processors.
We summarize these relations in this \( 2 \times 2 \) table:

\begin{center}
  \begin{tabular}{rll}
     & \scshape lax (mamouras)
     & \scshape regular (us, alvarez-picallo)
    \\
    \scshape function
     & stream transduction
     & stream function
    / differential map
    \\
    \scshape state machine
     & stream transducer
     & \strong{stream processor}
  \end{tabular}
\end{center}

\noindent
We set stream processor in \strong{bold} to indicate it is, to the best of our
knowledge, a novel definition. However, it strongly resembles
\emph{cache-transfer
  style}~\citep{DBLP:conf/esop/GiarrussoRS19,DBLP:conf/pepm/LiuT95}, an
implementation technique for incremental computation which associates each
incrementalized operation with a \emph{cache} maintained as the program's input
changes, analogous to a stream processor's state.
These connections hint at a potential unification of stream processing and
incremental computation; we leave this to future work.

We finally note that we define feedback composition differently from Mamouras.
Mamouras defines feedback on transducers \( \textrm{STrans}(A \dprod B, B) \).
On the other hand, our loop operator takes in a stream processor \(
\processor{} : M \dprod U \leadsto N \dprod U \). Our definition differs for
two reasons: first, our definition allows the right tightening rule
(\propref{prop:right-tightening}). Second extra feedback parameter \( U \) is
because (as future work) we would like to find conditions on \( U \) (but not
\( N \)!) to characterize when fully asynchronous fixed point exists.

\subsection{Flo}

\Citet{DBLP:journals/pacmpl/LaddadCHM25} identify two crucial properties of streaming systems---\emph{eager execution}~\citetext{p.~9, defn.~3.1}%
\ and \emph{streaming progress}~\citetext{p.~10, defn.~3.3}%
---and present a language, Flo, for composing programs out of dataflow operators with these properties.
Unlike the related work previously considered, Flo's definitions are formulated in terms of small-step operational semantics, which complicates direct comparison with our formalism.

\emph{Eager execution} requires a stream program's cumulative output depends only on its cumulative input, not on how that input is ``batched'' or in what order the program is scheduled.
Because we do not give a small-step semantics, we cannot be as flexible in expressing schedulings as Flo.
Determinism across different batchings, however, is implied by our \cref{eqn:derivative} and by the fact that any stream processor implements a (deterministic) stream function.

\emph{Streaming progress} means that a stream program is non-blocking: explicitly terminating an input stream must not affect the output stream other than to terminate it.
For instance, waiting to receive an entire list, then yielding its sum does not
satisfy streaming progress; but prefix sum, which on every new input produces
an updated cumulative sum, does.\footnote{Some operations deliberately block;
  for instance, windowed operators wait for a window to fill. For this reason Flo
  distinguishes a class of \emph{bounded} streams, expected to terminate in
  bounded time, on which blocking is allowed.} Explicit termination is naturally
modeled by monoids with a left zero \( 0 \) such that \( 0 \cdot x = 0 \) (this
is the stream ``terminator''). Streaming progress is then preservation of left
zeroes, \( f(x \cdot 0) = f(x) \cdot 0 \), or if \( f \) is a homomorphism,
simply \( f(0) = 0 \). Generalizing to ticked streams, we could say that a
stream processor \( \sigma : \Tick[M] \leadsto \Tick[N] = (S, f, s_\id, o_\id)
\), satisfies streaming progress if it preserves ticks, meaning \( \out
(f(\top)(s)) = \top \). Unlike Flo we do not require all stream types to
support explicit termination (have a left zero), nor do we give a type system
to guarantee streaming progress; but it seems likely that Flo's
bounded/unbounded stream types could be easily adapted to our setting.%

\subsection{DBSP}

DBSP~\citep{DBLP:journals/vldb/BudiuRZPSKGBCMT25,DBLP:journals/pvldb/BudiuCMRT23}
is a general approach to incremental view maintenance by means of stream
processing. It presents a language of operators on streams over commutative
groups, and an automatic incrementalization strategy for programs in this
language. For instance, we may take a database query \( Q \), lift it into a
DBSP program \( {\uparrow}Q \) over streams of database \emph{snapshots,} then
incrementalize this into a program \( ({\uparrow} Q)^\Delta \) that takes a
stream of database \emph{changes} and produces a stream of query result
changes.

Our formalism always considers streams of changes rather than streams of
snapshots, and so our stream processors correspond roughly to incrementalized
DBSP programs. If we take a stream function \( F \), then lift and
incrementalize it, the result \( ({\uparrow} F)^\Delta \) is effectively a
stream processor \( \sigma \) with \( \denote{\sigma} = F \). However, the
formalisms differ in several ways.
Most significantly, we allow arbitrary monoids where DBSP requires commutative
groups. Group inverses permit a very slick definition of
\emph{incrementalization} in terms of differentiation and integration of
streams, and as the DBSP paper shows, commutative groups are highly expressive.
Nonetheless there are many streaming programs whose ``natural'' input/output
types are not groups. For instance, lists under append, perhaps the most basic
notion of ``stream,'' are a non-commutative monoid.

Smaller differences include that DBSP does not make the state space of a
program explicit; rather, all state must be captured through use of explicit
delay and feedback operators. DBSP also has an explicit notion of time; it
defines streams \( \mathcal{S}_A \) as countable sequences, \( \mathbb{N} \to A
\). By default, there is no notion of time in our formalism; one must
explicitly model time using ticks (\cref{sec:ticks}).

\subsection{Stream Types \& Delta}

\Citet{cutlerstreamtypes} define \emph{Stream Types}, a type system for
\emph{Delta}, a prototype programming language where well-typed programs
implement determinstic stream programs that transform ordered sequences (i.e.
lists) into other ordered sequences. Their type system ensures (1) that
programs emit the same output even when inputs arrive nondeterministically
(e.g. when consuming two parallel input streams) and (2) that programs
statically declare how much intermediate state they need.

Their key technical contribution is their \emph{homomorphism theorem}, which
implies that programs written in Delta (a) output the same result regardless of
the factorization of the input, and (b) produce outputs which can be
appropriately updated when new inputs arrive. Assuming that Delta programs can
be modeled by mathematical functions \( D : \List[T] \to \List[T'] \), we
interpret (a) as saying that \( D \) is a well-defined function, and (b) as
saying that \( D \) satisifes condition~(1) in \cref{def:stream-function}.
Under that framing, well-typed programs in Delta are stream functions over
lists.

\subsection{Rewrite-based optimizations for Paxos}

\Citet{chuoptimizing} demonstrate that domain-specific rewrites can optimize a
naïve implementation of Paxos to compete with hand-written, state-of-the-art
implementations. Their rewrite rules target Datalog with stratified negation
(Dedalus, \cite{dedalus}). Some of their optimizations resemble our
optimizations (specifically, decoupling (\cref{cor:decoupling}) and
partitioning (\cref{lem:independentmerge})). However, in general their
optimizations and proofs of correctness are more complicated than ours because
they also want to ensure their optimizations are correct in real-world
conditions (which may have unreliable networks or machine failures).

\subsection{Loop parallelization}

In the compilers community, much work has been done on parallelizing
sequential, imperative loops. One recent example is
\citet{rompf2017functional}, which approaches the problem of user-defined
aggregation kernels. They construct a language in which the all computations
can be decomposed into a map and reduce phase using bounded static variation
powered by staged computation. Our work focuses on the setting of stream
processors, but both use the idea of the State monoid to represent stateful
computations (though they do not call it as such). Their approach inspired our
use of bounded static variation in \cref{sec:partial-eval} to construct
efficient representations of stream processors.

\subsection{Arrows}

Many of our definitions for composing stream functions and stream processors
are inspired by the \textit{Arrow} typeclass in Haskell, first proposed by
\citet{hughesarrows}. Indeed, Hughes himself noticed that a certain class of
stream programs (that is, possibly-stateful programs that transform an ordered
sequence of input messages into an ordered sequence of output messages) could
be modeled using the Arrow typeclass. We generalize his construction from
ordered sequences to monoids.

\section{Conclusion}

We have demonstrated a general framework for describing, implementing, and
reasoning about stream programs by treating them as homomorphisms into a state
monoid.
We suspect that our approach is suitable for mechanisation in a theorem prover;
in future we would like to pursue this as a first step towards a verified
optimizing compiler for stream programs. We would also like to extend our
approach to other semantic constructs, such as \emph{asynchronous} feedback
(feedback \emph{without} delay, subject to conditions ensuring a fixed point
exists, \`a la Datalog). Most speculatively, we believe it may be possible to
use the compactness theorem from first-order logic to show that any function
satisfying \cref{eqn:derivative} can be extended to one also satisfying the
(more technical) conditions \crefrange{eqn:regular-id}{eqn:regular-mul}, which
would fully generalize our definition of stream function.

\bibliography{monoids}

\end{document}